\documentclass[aps,twocolumn,pra,superscriptaddress,nofootinbib]{revtex4-2}

\usepackage{mathtools}
\usepackage{amssymb}
\usepackage{amsthm}
\usepackage{mathrsfs}
\usepackage{bm,dsfont}
\usepackage{enumitem}
\usepackage{physics}
\usepackage{graphicx}
\usepackage{xcolor}

\usepackage[expansion=false]{microtype}
\usepackage{natbib}
\usepackage{tikz}
\usepackage[normalem]{ulem}
\usepackage{booktabs}
\usetikzlibrary{arrows.meta}

\usetikzlibrary{arrows.meta,graphs,graphs.standard,positioning}

\definecolor{paperBlue}{HTML}{33658A}
\definecolor{paperTeal}{HTML}{2A9D8F}
\definecolor{paperOrange}{HTML}{E07A5F}
\definecolor{paperInk}{HTML}{273746}
\definecolor{paperGray}{HTML}{6C757D}

\usepackage[
colorlinks=true,
linkcolor=paperBlue,
citecolor=paperBlue,
urlcolor=paperTeal,
pdfauthor={Riccardo Castellano, Pavel Sekatski, and Nicolas Brunner},
pdftitle={Exact classical simulation of finite-dimensional prepare-and-measure experiments with finite expected message length}
]{hyperref}
\usepackage{cleveref}
\usepackage{hypcap}

\newtheorem{theorem}{Theorem}
\newtheorem{lemma}[theorem]{Lemma}
\newtheorem{proposition}[theorem]{Proposition}

\newtheorem{definition}{Definition}

\newcommand{\ii}{\mathrm{i}}

\newcommand{\Pure}{\mathbb{P}(\mathbb{C}^{d})}
\newcommand{\Prob}{{\rm Pr}}
\newcommand{\Exp}{\mathbb{E}}

\newcommand{\dmu}{\,\mathrm{d}}

\newcommand{\one}{\mathds{1}}
\DeclareMathOperator{\diam}{diam}

\newcommand{\cC}{\mathcal{C}}

\newcommand{\cP}{\mathcal{P}}

\newcommand{\cF}{\mathcal{F}}

\newcommand{\AlpS}{l}
\newcommand{\Alp}[1]{[\AlpS_{#1}]}

\begin{document}

\title{The cost of classically simulating quantum communication is as high as it could be}

\author{Riccardo Castellano}
\affiliation{Department of Applied Physics, University of Geneva, 1211 Geneva, Switzerland}
\author{Pavel Sekatski}
\affiliation{Department of Applied Physics, University of Geneva, 1211 Geneva, Switzerland}
\author{Nicolas Brunner}
\affiliation{Department of Applied Physics, University of Geneva, 1211 Geneva, Switzerland}

\begin{abstract}
A long-standing open question is to characterize the classical communication cost of simulating the correlations arising from measurements on finite-dimensional quantum systems. We consider this question in the prepare-and-measure scenario. First we construct a classical simulation model involving a finite average number of bits of one-way communication that simulates exactly the correlations arising from arbitrary $d$-dimensional quantum states and arbitrary measurements. Second, we show that no classical simulation protocol involving a finite average message dimension can exist for $d\geq 4$, hence strengthening a recent result of de Gois, Santos and Vieira. Our first result relies on a more general simulation theorem which can be applied to certain continuous variable (i.e. infinite-dimensional) quantum systems, and even beyond quantum theory, e.g. to generalized probabilistic theories. In particular, we show that multimode Gaussian states of bounded average energy and general Gaussian measurements admit an exact classical simulation with a finite expected number of bits. Overall, our results show that the classical simulation cost of quantum communication is essentially as high as it could be.

\end{abstract}

\maketitle

\section{Introduction}
Quantum physics features a number of phenomena that evade an explanation within classical physics. Key examples include the violation of Bell inequalities, the possibility to guarantee secrecy of communication based on physical principles, and speed-up in computing. 

To get a deeper understanding of these effects, it is interesting to quantify the advantage offered by quantum resources over classical ones. While being a central question in the fields of quantum computing and simulation, this question has also been considered in the case of Bell nonlocality, see e.g. \cite{Buhrman2010,Brunner2014}. Specifically, one considers classical models based on shared randomness (local hidden variables) augmented by some extra classical communication, and then asks how much classical communication is required for reproducing exactly some given quantum Bell nonlocal correlations \cite{Maudlin1992,Brassard1999,Steiner2000,Cerf2000,Pironio2003,Montina2016}. 

This question has been initially investigated for the simplest case of a maximally entangled two-qubit (Bell) state, considering arbitrary local measurements performed by two distant parties Alice and Bob. 
%Both worst-case and average communication have been considered. 
In 2003, Toner and Bacon \cite{TonerBacon2003} strikingly demonstrated that a single bit of classical communication 
%(in the worst case) 
is enough to simulate exactly the correlations of a Bell state under arbitrary projective measurements. This result was recently extended to include arbitrary local measurements (i.e. POVMs), in which case two bits of communication are enough \cite{RennerTavakoliQuintino2023}. 

This naturally raises the question of determining the communication cost of other entangled states \cite{TonerBacon2003,Renner2023}, in particular moving to the case of higher dimensions, i.e. qudit entanglement \cite{Degorre2007,Regev2010,Vertesi2009}. Establishing connections with the field of communication complexity \cite{Buhrman2010}, it was shown that the simulation of correlations arising from local measurements on entangled states of local dimension $d$ requires classical communication with alphabet-size of at least $2^{\Omega(d)}$  \cite{BrassardCleveTapp1999}, see also \cite{gavinsky2008,Perry2015,Montanaro2019,Havlicek2020}. 

Equivalently, these questions can be translated to a prepare-and-measure scenario, where Alice prepares an arbitrary qudit state and sends it to Bob who performs a measurement \cite{Cerf2000,brask2026}. The question is then how much one-way classical communication is required to exactly simulate quantum correlations arising from qudits \cite{Gallego2010,schlosser2026}. Similarly to the Bell scenario, results from communication complexity imply an exponential separation between classical and quantum resources, even when allowing for unlimited shared randomness. Specifically, in order to simulate the correlations obtained via a quantum message of dimension $d$ (i.e. a qudit), one requires a classical message of alphabet-size of at least $2^{\Omega(d)}$, see e.g. \cite{Buhrman2003,brask2026}.

While these results provide relevant lower bounds, they naturally lead to the question of what is the actual classical communication cost of qudit correlations, and even more interestingly, whether this simulation is even possible with finite classical communication?

Recently, de Gois, Santos and Vieira \cite{deGoisSantosVieira2026} reported significant progress in this direction. They proved that no classical communication model involving finite (but fixed) alphabet-size can exactly simulate quantum correlations for $d\geq 4$, meaning that qudit correlations cannot be simulated with finite classical communication in the worst-case scenario. This leaves open the possibility of a classical simulation using a finite \emph{expected} amount of communication.

In this work, we address this question. First, we show that qudit correlations cannot be simulated by any classical communication model with finite expected alphabet-size, hence strengthening the result of \cite{deGoisSantosVieira2026}. Second, we demonstrate that all qudit correlations, for arbitrary states and measurements, can be classically simulated with a protocol involving a finite expected number of bits $O(d \log(d))$ of one-way classical communication. Importantly, as the protocol involves only one-way communication (from preparation to measurement), it respects the causal structure of the scenario; in contrast, previous works have presented simulation models relying on two-way communication \cite{Massar2001,Brassard2018,Montanaro2019}, which are arguably more costly and less natural in this context.

Overall, our results identify (up to the exact value of an exponent) the communication complexity of simulating the correlations of any finite-dimensional quantum systems in the prepare-and-measure scenario. From our model for exact simulation, we also obtain a classical model for approximate simulation of qudit correlations with finite number of bits, with exponentially decaying error. Moreover, these results also have implications on the communication cost of simulating qudit correlations in the bipartite Bell scenario. In particular, from our second result it follows that the correlations of every entangled state (of dimension $d \times d$) and arbitrary local measurements can be simulated with one-way classical communication of $O(d \log(d))$ bits on average. We also present a model

Our second result is based on a general simulation theorem which applies to broad classes of theories, formalized as probability-valued maps defined on general state spaces. Notably, this theorem can be used to construct classical communication models with finite expected number of bits simulating infinite-dimensional quantum systems, or more general theories, e.g. generalized probabilistic theories \cite{Barrett2007,Janotta2014}. 
In particular, we present such a classical simulation model for a class of continuous variable (CV) systems, involving infinite-dimensional quantum states and measurements with continuous outcomes. Specifically, we show that multimode Gaussian states (pure or mixed) with bounded energy and arbitrary general-dyne measurements \cite{Weedbrook2012}, can be classically simulated using an expected number of bits that grows logarithmically with the energy.

\section{Scenario and main results}\label{sec:general-framework}

We will present our results focusing on the prepare-and-measure scenario, and discuss in turn how they apply to the bipartite Bell scenario. A prepare-and-measure scenario features two distant parties. Alice holds the preparation device, which will produce a physical system, then transmitted to Bob, who holds the measurement device, which will perform a measurement on the system and return a measurement outcome. 

In a quantum prepare-and-measure experiment, Alice will be asked to prepare a certain quantum state $\psi$ chosen among the continuous set of all $d$-dimensional pure states, i.e. 
\begin{equation}
    \Pure:=\{\rho_\psi=\ketbra{\psi}:\psi\in\mathbb C^d,
    \ \braket{\psi}=1\}.
\end{equation}
Bob will be asked to perform a certain quantum measurement, represented by a POVM $\mathsf E:=\{E_{b}\}_{i=b,\dots s}$, whose elements (effects) satisfy 
\begin{equation}
    \mathsf E=\{E_b\}_{b=1}^s,
    \qquad E_b\geq0,
    \qquad \sum_{b=1}^s E_b=\one,
\end{equation}
and $b\in \{0,..,s\}$ denotes the measurement outcome. The observed input/output statistics of this experiment are given by the Born rule:  
\begin{equation} \label{eq:Born-rule}
    p(b \mid\mathsf E,\psi):=
    \operatorname{tr}[E_{b} \,\rho_\psi].
\end{equation}
Note that here we can restrict our attention to extremal POVMs\footnote{Indeed any POVM acting on $d$-dimensional quantum states can be decomposed into POVMs with at most $d^2$ outcomes followed by classical post-processing.} which have output cardinality $s\leq d^{2}$ \cite{DArianoLoPrestiPerinotti2005,ChiribellaDArianoSchlingemann2007}.

Our main question of interest is now to simulate the quantum statistics, via a purely classical model, see Fig. 1. As above, Alice and Bob receive a classical description of a quantum state and of a quantum measurement, respectively, and their goal is to devise a purely classical strategy for sampling exactly from the quantum distribution \eqref{eq:Born-rule}. The simulation model is based on two resources. First, Alice and Bob have access to a pre-shared random classical variable $\lambda$. Second, Alice sends a classical message to Bob, which depends on the input state $\psi$ and $\lambda$. The goal is to understand how much classical communication is required for simulating quantum correlations arising from qudits.

In order to state our main results, let us first define more formally the elements of a classical simulation model.

%{\color{red} \it [it's a bit annoying that $m$ denotes the message sent, and $m_\lambda$ denotes the alphabet size. Maybe we can use the standard (e.g. Clover\&Thomas) $\mathcal{X}_\lambda=\{1,\dots,|\mathcal{X}_\lambda|\}$  and $|\mathcal{X}_\lambda|$ for the alphabet size?]}

\begin{figure}[t!]
    \centering
    \resizebox{\linewidth}{!}{%
    \begin{tikzpicture}[
        box/.style={
            draw, rounded corners=2pt,
            minimum width=1.05cm,
            minimum height=.7cm
        },
        arrow/.style={-{Latex[length=2mm]}, thick},
        shared/.style={arrow, dashed}
    ]

    % Quantum communication
    \node[font=\bfseries] at (1.35,1.65) {(a) Quantum};

    \node[box] (AQ) at (0,0) {Alice};
    \node[box] (BQ) at (2.7,0) {Bob};

    \node (psiQ) at (0,1) {$\psi$};
    \node (EQ) at (2.7,1) {$\mathsf E$};

    \draw[arrow, double, double distance=1pt] (psiQ) -- (AQ);
    \draw[arrow, double, double distance=1pt] (EQ) -- (BQ);
    \draw[arrow] (AQ) --
        node[below] {$\lvert\psi\rangle$} (BQ);
    \draw[arrow, double, double distance=1pt] (BQ.south) -- ++(0,-.55)
        node[below] {$b$};

    % Classical simulation
    \node[font=\bfseries] at (6.95,1.65) {(b) Classical};

    \node[box] (AC) at (5.6,0) {Alice};
    \node[box] (BC) at (8.3,0) {Bob};

    \node (psiC) at (5.6,1) {$\psi$};
    \node (EC) at (8.3,1) {$\mathsf E$};
    \node[draw, circle] (L) at (6.95,1.05) {$\lambda$};

    \draw[arrow, double, double distance=1pt] (psiC) -- (AC);
    \draw[arrow, double, double distance=1pt] (EC) -- (BC);
    \draw[arrow, double, double distance=1pt] (L.south west) -- (AC.north east);
    \draw[arrow, double, double distance=1pt] (L.south east) -- (BC.north west);

    \draw[arrow, double, double distance=1pt] (AC) --
        node[below] {$m\in \Alp{\lambda}$} (BC);
    \draw[arrow, double, double distance=1pt] (BC.south) -- ++(0,-.55)
        node[below] {$b$};

    \end{tikzpicture}%
    }
    \caption{Quantum communication {\bf(a)} and classical simulation
    {\bf (b)}. Double arrows denote classical systems while single arrows denote quantum ones. In the quantum case, Alice prepares and sends a quantum system in a state $\ket{\psi} \in \mathbb{C}^d$, on which Bob performs a measurement (represented by a POVM $\mathsf E$) to produce the output $b$.
    In the classical simulation, Alice encodes $\psi$ in a classical message $m=M_\lambda(\psi)\in \Alp{\lambda§}=\{1,\dots,\AlpS_\lambda \}$, and
    Bob samples $b$ from the decoder
    $q(\cdot\mid\mathsf E,\lambda,m)$. Encoding and decoding also use the shared random variable $\lambda$, which in our model determines the alphabet size $\AlpS_\lambda$ of the classical message. Our main result is to construct a classical simulation with finite expected number of communicated bits $\Exp[\log \AlpS_\lambda]=O(d \log(d))$.}
    \label{fig:quantum-classical-simulation}
\end{figure}

\begin{definition}[Classical one-way simulation]\label{def:classicalone-waySimul}
The parties share a random variable $\lambda$ on a probability
space $(\Omega,\mu)$, independent of their inputs. An encoder is a map
\begin{equation}
    M_\lambda:\Pure\longrightarrow \Alp{\lambda}:=\{1,\dots, \AlpS_\lambda\},
\end{equation}
where $\AlpS_\lambda\in\mathbb N$ is positive. A decoder assigns
a probability distribution $q(\cdot\mid\mathsf E,\lambda,m)$ on the
measurement outcomes. For discrete outcomes, this means
\begin{equation}
    q(b\mid\mathsf E,\lambda,m)\geq0,
    \qquad\sum_{b=1}^s q(b\mid\mathsf E,\lambda,m)=1.
    \label{eq:decoder-simplex}
\end{equation}
The encoder is independent of $\mathsf E$. The simulation is exact when
\begin{equation}
   \int_\Omega
    q\bigl(b\mid\mathsf E,\lambda,M_\lambda(\psi)\bigr)
    \,\mathrm d\mu(\lambda) =  p(b\mid\mathsf E,\psi)
    \label{eq:exact-simulation-general}
\end{equation}
for every preparation, measurement, and outcome $b$.
\end{definition}

Note that the encoder can in general be considered deterministic. This is because any randomness can be absorbed in $\lambda$.

The crucial resource we aim to quantify in these models is the amount of classical communication that is required. Three different quantities %, defined in Table~\ref{tab: tab1} 
can be considered:
\begin{itemize} \item the worst-case alphabet size 
\begin{equation}\label{eq:worstcase-cost}
\max[ \AlpS_\lambda] = \max_{\lambda \in \Omega} \AlpS_\lambda,
\end{equation}
\item the expected alphabet size  
 \begin{equation}\label{eq:expected-alphabet-size}
   \Exp[\AlpS_\lambda]=\int_\Omega \AlpS_\lambda
     \,\mathrm d\mu(\lambda),
 \end{equation}
\item the expected number of bits of communication \begin{equation}\label{eq:expected-bit-cost}
     \Exp[\log \AlpS_\lambda] = \int_\Omega \log \AlpS_\lambda
     \,\mathrm d\mu(\lambda),
 \end{equation}
 with $\log$ in base 2.
\end{itemize}

For the worst-case scenario, it was shown that $\max [\AlpS_\lambda]=2$ for qubits ($d=2$) \cite{Renner2023}, and more recently that $\max [\AlpS_\lambda] \leq 2^{357}$ for qutrits \cite{deGoisSantosVieira2026}. Astonishingly, for $d \geq 4$, it turns out that no classical simulation model using finite worst-case message dimension exists, i.e. $\max [\AlpS_\lambda]= \infty$ \cite{deGoisSantosVieira2026}. Here we start by deriving a strengthening of this result, proving a no-go result for the simulation with finite expected alphabet size.

\begin{theorem}[No finite average message dimension]
\label{thm:no-finite-average-classical-dimension}
For Hilbert space dimension $d\geq4$, every classical simulation protocol involving one-way communication reproducing exactly qudit correlations, as in definition \ref{def:classicalone-waySimul},
requires infinite expected alphabet size, i.e.
\begin{equation}
\Exp[\AlpS_\lambda]=\infty.
\label{eq:infinite-average-classical-dimension}
\end{equation}
Here we we require the encoder
to be jointly measurable in $(\lambda,\psi)$ and the decoder to be
jointly measurable in $(\lambda,\phi)$ for each message label \footnote{We use
the given $\sigma$-algebra on the shared-randomness space, the
Borel $\sigma$-algebra on $\Pure$, equipped with Fubini--Study measure,
and the discrete $\sigma$-algebra on the message labels. However the measurability assumption can likely be removed.}. 
\end{theorem}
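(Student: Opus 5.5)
The plan is to turn the zero-probability constraints of exact simulation into a geometric restriction on the cells $C_{\lambda,m}:=M_\lambda^{-1}(m)\subset\Pure$, and then show that typical cells must be arbitrarily small, which forces the expected number of cells to diverge. First I will restrict to rank-one projective measurements $\mathsf E=\{\ketbra{\phi_b}\}_{b=1}^d$ given by orthonormal bases. If $\braket{\phi_b}{\psi}=0$, exactness \eqref{eq:exact-simulation-general} gives $\int_\Omega q(b\mid\mathsf E,\lambda,M_\lambda(\psi))\dmu\mu(\lambda)=0$. By joint measurability and Fubini over $(\lambda,\psi,\phi)$ with Fubini--Study measure, for $\mu$-a.e. $\lambda$ I obtain the following for a.e. basis and a.e. state. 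The decoder for message $m$ must put zero weight on every $\phi_b$ that is orthogonal to some point of $C_{\lambda,m}$, up to null sets; this requires a density-point version of the argument. Since $\sum_b q(b\mid\cdot)=1$, every cell then has an \emph{admissibility} property: in almost every orthonormal basis, at least one vector $\phi_b$ satisfies $\phi_b^\perp\cap C_{\lambda,m}=\emptyset$ essentially.

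Second, I will prove a quantitative version of the de~Gois--Santos--Vieira obstruction for $d\ge4$. The aim is that admissibility, together with reproducing the Born probabilities $|\braket{\phi}{\psi}|^2$, forces cells to be small near generic points. The key quantity is the decoder weight $\bar q_{\lambda,m}(\phi)$ assigned to $\phi$. It vanishes on $\bigcup_{\psi\in C}\psi^\perp$, and averaging over $\lambda$ must reproduce $|\braket{\phi}{\psi}|^2$, which is strictly positive but small for $\phi$ almost orthogonal to $\psi$.

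Fix a small $\delta>0$ and a state $\psi$, and take $\phi$ at Fubini--Study distance about $\delta$ from $\psi^\perp$, so that $p(\phi\mid\psi)\sim\delta^2>0$. Exactness then requires that, with $\mu$-probability bounded below by a constant $c>0$ independent of $\delta$, the cell containing $\psi$ does not meet $\phi^\perp$. That cell must therefore lie within distance $O(\delta)$ of $\psi$ in some direction. To make this rigorous I would use orthogonality configurations in $d\ge4$ (Kochen--Specker/Peres-type frames, rescaled by unitaries to scale $\delta$). These are configurations in which no admissible set of diameter $\gg\delta$ can avoid all the relevant hyperplanes $\phi^\perp$ while still supporting the required positive weights. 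This step is the main obstacle: I would first try to adapt the finite-alphabet argument of \cite{deGoisSantosVieira2026} cell by cell, tracking the scale. The goal is a lemma stating that $\Pr_\lambda[\diam C_{\lambda,M_\lambda(\psi)}\le K\delta]\ge c$ for Haar-average $\psi$.

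Third, I will conclude by counting. Integrate the lemma over Haar-random $\psi$ and apply Fubini. With probability at least $c/2$ over $\lambda$, a fraction at least $c/2$ of $\Pure$ (in measure) is covered by cells of diameter $\le K\delta$. Each such cell has measure $O((K\delta)^{2(d-1)})$, so $\AlpS_\lambda\gtrsim (c/2)\,(K\delta)^{-2(d-1)}$ on that event. Hence
\begin{equation}
\Exp[\AlpS_\lambda]\ \ge\ \tfrac{c^2}{4}\,\Omega\!\bigl(\delta^{-2(d-1)}\bigr)\xrightarrow[\delta\to0]{}\infty,
\end{equation}
which yields \eqref{eq:infinite-average-classical-dimension}. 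The argument only controls $\Exp[\AlpS_\lambda]$ and not $\Exp[\log\AlpS_\lambda]$, because the bound holds with constant probability while the count grows polynomially in $1/\delta$. This is consistent with the finite-bit simulation result announced in the paper.
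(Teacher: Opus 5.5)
Your second step is where the argument breaks, and the problem is more than a missing technicality. The lemma you aim for, $\Pr_\lambda[\diam C_{\lambda,M_\lambda(\psi)}\le K\delta]\ge c$ with $c$ independent of $\delta$, is false for exact simulations.

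\begin{itemize}
\item Your heuristic does not give a constant. Admissibility together with $q\le 1$ only gives $\Pr_\lambda[C_{\lambda,M_\lambda(\psi)}\cap\phi^\perp=\emptyset]\ge|\braket{\phi}{\psi}|^2\sim\delta^2$, and this vanishes with $\delta$.
\item You derive the lemma from exactness alone, so it would have to hold for the protocol of Theorem~\ref{thm:main-simulation}. There, level $\lambda=k$ has probability $2^{-k}$, and for typical $\psi$ its cells have trace-distance diameter of order $2^{-k}$. Cells of diameter at most $K\delta$ therefore occur with probability only $O(\delta)$.
\item Your closing consistency remark is backwards. Suppose that for every $\delta$ an event of probability at least $c/2$ forced $\AlpS_\lambda\gtrsim\delta^{-2(d-1)}$. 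Then $\Exp[\log\AlpS_\lambda]\ge\frac c2\bigl(2(d-1)\log\frac1{K\delta}-O(1)\bigr)\to\infty$, which contradicts the finite-bit simulation. Your Step 3 would prove too much.
\item Avoiding one complex hyperplane $\phi^\perp$, which has real codimension two, imposes no smallness on a cell. All confinement would have to come from the Kochen--Specker-type configurations. But Kochen--Specker sets already exist for $d=3$, where exact simulation with a bounded alphabet is possible. Such configurations therefore cannot by themselves force small cells, and a genuinely $d\ge4$ ingredient is missing.
\end{itemize}

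What you need is to let the hypothesis $\Exp[\AlpS_\lambda]<\infty$ itself generate the contradiction, rather than a geometric property of typical cells. The paper does this with no cell geometry at all.
\begin{itemize}
\item It reduces to $d=4$ and to binary measurements $\{\ketbra{y},\one-\ketbra{y}\}$. It groups the seeds by alphabet size $D_k$ with weights $w_k$. Averaging the exactness identity over the fixed-fidelity measure $\eta_t$ gives $t=\sum_k w_k\sum_{m=1}^{D_k}f_{k,m}(t)$.
\item Setting $t=0$ gives $a_{A,B}(0)=0$ for every message and almost every seed, where $A$ is the cell indicator and $B$ the decoder weight. This is the measure-theoretically clean version of your admissibility step.
\item The zero-slope Lemma~\ref{lem:fixed-fidelity-zero-slope} on $\mathbb{CP}^3$ is the actual $d\ge4$ input. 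It gives $a_{A,B}(t)/t\to0$ with the uniform bound $8$. Hence $f_k(t)/t\to0$ and $f_k(t)/t\le 8D_k$.
\item The summable majorant $8w_kD_k$ is exactly where finite mean dimension is used. Dominated convergence then yields $1=\lim_{t\to0^+}t/t=0$.
\end{itemize}
This is also why the argument rules out finite $\Exp[\AlpS_\lambda]$ but says nothing about $\Exp[\log\AlpS_\lambda]$.
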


The proof of this result is presented in section \ref{sec:finite-average-dimension-impossible} of the Appendix. The proof is essentially based on the original proof of theorem B.1 of \cite{deGoisSantosVieira2026}. 

Given the above result, the next natural question is whether there could exist a classical simulation model with a finite number of bits of communication. Note that this is in principle possible, since there exist models where the quantity $\Exp[\log \AlpS_\lambda]$ remains finite although $\Exp[\AlpS_\lambda]$ is infinite. Our second result is to show that such models exist.

\begin{theorem}[Exact simulation in every finite dimension]
\label{thm:main-simulation}
Every finite-dimensional quantum system admits a one-way
exact simulation for all states and POVMs with finite expected
communication. In particular, for $d\geq2$, one can achieve
\begin{equation}
    \Exp[\log \AlpS_\lambda]\leq
    \log d+n_d\left(\left\lceil\log(4d\sqrt{n_d})\right\rceil+2\right),
\end{equation}
where $d$ is the Hilbert-space dimension and $n_d:=2d-2$ is the
real dimension of the pure-state space. Furthermore, the probability of using more than $k$ bits of communication vanishes exponentially in $k$:
% \begin{equation} \label{eq:quantum-bit-tailth}
% \Prob\!\left(\ell(\lambda)>k\right)\leq C_{d}\,2^{-k/n_d}.
% \end{equation}
\begin{equation} \label{eq:quantum-bit-tailth}
\Prob\!\left(\log \AlpS_\lambda \geq k\right)\leq C_{d}\,2^{-k/n_d}.
\end{equation}

For some $C_{d}>0$.
\end{theorem}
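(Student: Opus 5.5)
The plan is to prove a general simulation lemma and then specialize it to $\Pure$. The lemma is: if the state space is covered by finitely many charts, each identified with a cube in $\mathbb R^{n}$, and the maps $\psi\mapsto p(b\mid\mathsf E,\psi)$ are uniformly Lipschitz in the total-variation sense, then there is an exact one-way simulation in which Alice sends a chart label and a dyadic cube index at a random level $K$. For qudits I take $n=n_d=2d-2$ real coordinates and $d$ affine charts $\{\psi_j\neq0\}$ suitably shrunk to cubes, which gives the $\log d$ term. I use that
\begin{equation*}
\sum_b\big|p(b\mid\mathsf E,\psi)-p(b\mid\mathsf E,\phi)\big|\leq\|\rho_\psi-\rho_\phi\|_1 ,
\end{equation*}
together with a bound of order $d$ on the Lipschitz constant of $\psi\mapsto\rho_\psi$ in chart coordinates. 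A dyadic cube of side $2^{-k}$ then has output oscillation at most $\epsilon_k:=2d\sqrt{n_d}\,2^{-k}$, summed over outcomes and independently of $\mathsf E$. This explains the $\log(4d\sqrt{n_d})$ starting level $k_0$ in the bound.

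The core construction is a telescoping decomposition along nested partitions. Let $C_k(\psi)$ be the level-$k$ dyadic cube containing $\psi$, with $C_{k_0}$ the coarsest level used. I want functions $f_k(b,C)\geq0$, depending only on the cell, $\mathsf E$ and $b$, with three properties:
\begin{enumerate}[label=(\roman*)]
\item monotonicity on nested cells, $f_k(b,C_k(\psi))\geq f_{k-1}(b,C_{k-1}(\psi))$;
\item convergence, $f_k(b,C_k(\psi))\to p(b\mid\mathsf E,\psi)$;
\item deterministic total mass, $\sum_b f_k(b,C)=1-\delta_k$ with $\delta_k$ independent of $C$, $\mathsf E$ and $\psi$.
\end{enumerate}
The natural candidate is $f_k(b,C)=\inf_{\phi\in C}p(b\mid\mathsf E,\phi)$, which gives (i) and (ii) automatically. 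The defect $1-\sum_b f_k$ is at most the oscillation over the cell, but it depends on the cell. To make it exactly $\delta_k$, with $\delta_k$ a fixed sequence of order $\epsilon_k$ (e.g. $\delta_k=2^{-(k-k_0)}$), I would lower these infima: for instance by subtracting a cell-dependent nonnegative correction, spread proportionally over outcomes, so that the total deficit equals $\delta_k$. The shared randomness then selects $K=k$ with probability $\pi_k=\delta_{k-1}-\delta_k$, with level $k_0$ getting probability $1-\delta_{k_0}$. Bob outputs $b$ with probability $\big(f_k(b,C_k)-f_{k-1}(b,C_{k-1})\big)/\pi_k$. The level-$k$ cell also determines the level-$(k-1)$ cell, so Alice only sends the chart and the cell at level $K$, about $\log d+n_dK$ bits. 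Exactness then follows by summing the telescoping series, since $\sum_k\pi_k\cdot(\text{increment}/\pi_k)=\lim_k f_k=p$.

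The main obstacle is reconciling (i) with (iii): the corrections must keep the increments nonnegative on every nested pair of cells. This requires each increment's total mass to be exactly $\pi_k$ while every per-outcome increment is $\geq0$. I would handle this with a safety margin. The oscillation bound gives $1-\sum_b\inf_C p\leq\epsilon_k$. If the fixed deficit is chosen as $\delta_k=2\epsilon_k$, there is room to define $f_k$ by induction on $k$. Set $f_k=f_{k-1}+g_k$, where $g_k(b,C)\geq0$ is any cell-measurable choice with $g_k\leq\inf_C p-f_{k-1}$ and total mass $\pi_k$. Such a choice exists because the available room is at least $\delta_{k-1}-\epsilon_k\geq\pi_k$ once $\delta_k$ halves at each level; this is where the "$+2$" per coordinate comes from. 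Measurability in $(\lambda,\psi)$ is automatic because the cells are Borel. Finally, the tail bound follows from $\Prob(K\geq k)=\delta_{k-1}\lesssim2^{-k}$. The message length is $\log d+n_dK$ bits, so $\Prob(\log\AlpS_\lambda\geq k)\leq C_d2^{-k/n_d}$. Summing $\sum_k\pi_k(\log d+n_dk)$ with $\pi_k\propto2^{-(k-k_0)}$ gives $\log d+n_d(k_0+2)$ with $k_0=\lceil\log(4d\sqrt{n_d})\rceil$, the stated bound.
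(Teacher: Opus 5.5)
Your scheme is the paper's. It uses a geometric shared level, nested dyadic partitions of affine charts of $\Pure$, cumulative sub-distributions kept below the lower envelope $\inf_{\phi\in C}p(b\mid\mathsf E,\phi)$ with deterministic mass, and Bob outputting the normalized increment. The paper's recursion $S_{f,k}(R)=S_{f,k-1}(P)+w_k q_{f,k}(R)$, with $q_{f,k}(R)\propto\ell^f(R)-S_{f,k-1}(P)$, is the proportional choice of your $g_k$, and your room count is its invariant. (It only needs $\delta_k\geq\epsilon_k$, not a factor-2 margin; the ``$+2$'' comes from $\Exp[\lambda]=2$.)

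The gap is in the one estimate everything rests on: $1-\sum_b\inf_{\phi\in C}p(b\mid\mathsf E,\phi)\leq\epsilon_k$ uniformly in $\mathsf E$. This does not follow from $\sum_b|p(b\mid\mathsf E,\psi)-p(b\mid\mathsf E,\phi)|\leq\|\rho_\psi-\rho_\phi\|_1$. The deficit equals $\sum_b\sup_{\phi\in C}\bigl(p(b\mid\mathsf E,\psi_0)-p(b\mid\mathsf E,\phi)\bigr)$, with a different worst-case $\phi$ for each outcome, whereas the trace-norm inequality controls the sum over outcomes only for one pair at a time. The two can differ by a factor as large as the number of outcomes. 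For example, take the uniform distribution on $s$ outcomes and, for each $i=1,\dots,s$, move mass $\epsilon$ from outcome $i$ to outcome $i+1$ (cyclically). These $s$ distributions are pairwise within $\ell_1$-distance $4\epsilon$, yet their lower envelope has mass $1-s\epsilon$. So the per-pair inequality alone cannot give a bound uniform over POVMs. Summing single-outcome bounds $|p(b\mid\mathsf E,\psi)-p(b\mid\mathsf E,\phi)|\leq D(\psi,\phi)$ over $d^2$ extremal outcomes would restore finiteness, but with a factor $d^2$, not the stated constant.

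What is missing is an outcome-wise estimate whose constants sum, which is what the paper proves. For each outcome, $|p(b\mid\mathsf E,\psi)-p(b\mid\mathsf E,\phi)|\leq\|E_b\|_\infty D(\psi,\phi)$ with $D=\frac12\|\rho_\psi-\rho_\phi\|_1$. Moreover $\sum_b\|E_b\|_\infty\leq\sum_b\operatorname{tr}E_b=d$. Hence every cell has deficit at most $d\,\diam_D(C)$ for every POVM, independently of the number of outcomes.

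This is where the factor $d$ in $4d\sqrt{n_d}$ comes from; it is not a Lipschitz constant of the charts. For the charts indexed by the first largest-modulus coordinate, the real coordinates lie in $[-1,1]^{n_d}$, and normalization gives $D(\varphi_j(z),\varphi_j(z'))\leq 2\|z-z'\|_2$. So the $2^{n_d r}$ level-$r$ cubes per chart have $\diam_D\leq 4\sqrt{n_d}\,2^{-r}$. Multiplying by $d$ gives the deficit $4d\sqrt{n_d}\,2^{-r}$, which is what produces $k_0=\lceil\log(4d\sqrt{n_d})\rceil$.

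With this lemma in place of your oscillation step, the rest of your argument yields the stated mean and tail. You should also add the one-line extension to mixed states: sample a pure-state decomposition and absorb that randomness into $\lambda$, since your charts only cover $\Pure$.
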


The proof of this result follows from a more general simulation theorem that we present and prove in the next section. In a nutshell, this theorem provides the construction of a general simulation model using a finite average number of bits of classical communication, which can be applied to a broad class of physical theories, and not specific to quantum theory in any sense. Essentially, the theorem applies to any theory where the state space resembles a finite-dimensional metric space, where the metric is derived from the function assigning measurement outcome probabilities. A notable example is finite-dimensional generalized probabilistic theories \cite{Barrett2007,Janotta2014}. Moreover, the theorem can also be applied in certain scenarios involving infinite-dimensional quantum systems, where measurements have continuous outcomes. As an illustrative example, in Sec.~\ref{sec:gaussian} we present classical simulation models for a class of continuous variable quantum systems, namely multimode Gaussian states with bounded energy
subjected to arbitrary Gaussian (general-dyne) measurements. In this case, the model uses an average number of bits of communication that grows logarithmically with the energy.

The results from the two Theorems above also have implications on the classical communication cost of simulating qudit entanglement in the bipartite Bell scenario. In particular, from Theorem 3 it follows that the correlations resulting from arbitrary local POVMs on every entangled state (of dimension $d \times d$) can be simulated with one-way classical communication of $O(d)$ bits on average. The protocol works as follows. First, Alice locally samples her outcome from the quantum marginal distribution. Then she simulates the conditional quantum state for Bob via the protocol of Theorem 3; see \cite{Buhrman2010,brask2026} for more details. Additionally, from Theorem 2 it follows that the correlations of the maximally entangled state ($d \times d$, with $d \geq 4$) cannot be simulated with one-way communication of finite average message dimension; see e.g. Appendix C2 of \cite{deGoisSantosVieira2026} for more details.

Finally, coming back to the simulation of qudit correlations, we see that Theorem 2 can be viewed as a very specific application of the general simulation theorem. Combined with the result of Theorem 1, one can then argue that qudit correlations are essentially ``as hard to simulate as they can be''. In the sense that simulability with finite average message dimension is not possible, while simulability with finite average number of bits appears to be generic for a broad class of physical theories.

\section{General simulation theorem}
\label{sec:abstract-simulation}

This section is devoted to our general simulation theorem. Before stating the result and the proof, we provide a high-level (and hopefully intuitive) overview. We formulate the problem abstractly as follows. 

Alice is given an input belonging to the input set $X$, a set without any particular structure. Similarly, Bob's measurement is described 
by a function $f\in\cF$ that maps every state $x \in X$ into a probability distribution $f(x)$
over finitely many outcomes. In the simulation protocol, Alice gets as input state $x \in X$ while Bob receives an input measurement $f \in \cF$, and the two can use arbitrary classical shared randomness. In order to obtain a simulation model with finite expected classical communication, we need to impose restrictions on the pair $X$ and $\cF$, e.g. to exclude models with infinite classical capacity. 

We find that the key condition
for enabling exact simulation with finite expected communication is that $X$ can be covered by finitely many arbitrary ``small'' subsets.   In the absence of additional structure on $X$ (such as a metric or a topology), to make sense of ``smallness'' of a subset $R\subset X$ we define the notion of the mass
deficit, which is directly induced by the functions $\cF$. This definition is based on the following idea: a subset $R\subset X$ is ``small'' if, regardless of which function $f\in\cF$ Bob is asked to simulate, the distributions $\{f(x):x\in R\}$ have a large amount of probability mass in common, i.e, $f(x)\sim f(y)$ for all $x,y\in R$. Suppose that $X$ can be
partitioned into finitely many  small subsets, and that Alice and Bob agree in advance on a partition of $X$ and label its  subsets.
Alice can then send Bob the label of the subset containing her input.
Bob thereby obtains a good approximation of the probability distribution
he is supposed to output. Finer partitions allow Bob to obtain a better
approximation of $f(x)$, but at the expense of longer messages. The
shared random variable $\lambda$ determines how fine the partition used in each round is, and thus how many bits Alice sends to Bob.

 Consequently, a single quantitative requirement emerges: we require the
number of  subsets of size  (mass deficit) $\epsilon$ needed to cover $X$ to grow at most
polynomially in $\epsilon^{-1}$ as $\epsilon\to0$, ensuring that the
protocol uses a finite expected number of bits. This condition closely
resembles the notion of Minkowski dimension in metric spaces, the
difference being that our notion of the  size of a subset is induced by $\cF$ and not
necessarily derived from a metric.  In consequence, the condition enabling classical simulation with finite expected communication is a property of the pair $X,\cF$, and not of the  input set alone. 

\subsection{Inputs, target maps, and exact simulation}

Let $X$ be an arbitrary nonempty input set, and let
$s\in\mathbb N\cup\{\infty\}$. Define
\begin{equation}
    \Delta_s:=\left\{u\in L^1([s]):u_b\geq0,\quad
    \sum_{b=1}^s u_b=1\right\},
\end{equation}

% \begin{equation}
%     \Delta_s:=\left\{u: [s]\to [0,1], \, b\mapsto u_b:
%     \sum_{b=1}^s u_b=1\right\},
% \end{equation}

the set of probability distributions on the finite or countable
outcome set $[s]$, where $[\infty]:=\mathbb N$.
Alice receives an input $x\in X$, and Bob receives a target map
$f$ from a prescribed nonempty family of functions
\begin{equation}
    \cF\subseteq\{f:X\to\Delta_s\}.
    \label{eq:target-map-family}
\end{equation}
The desired output distribution is $f(x)$. Alice knows $x$ but not
$f$, while Bob knows $f$ but not $x$.

In the quantum setting, $X=\Pure$ and $s=d^2$. Each $s$-outcome POVM
$\mathsf E=\{E_b\}_{b=1}^s$ defines a target map through Born's rule:
\begin{equation}
    f_{\mathsf E,b}(\psi):=\operatorname{tr}(E_b\rho_\psi).
\end{equation}
The corresponding family is
$\cF_{d,s}:=\{f_{\mathsf E}:\mathsf E\in\mathfrak P_{d,s}\}$, where $\mathfrak P_{d,s}$ denotes the set of $d$-dimensional and $s$-outcome POVMs.

Following the quantum setting, we define simulation protocols
for an arbitrary target family $\cF$ as follows.
\begin{definition}[Exact simulation of a family of maps]
A strategy consists of shared randomness $\lambda$ on a probability
space $(\Omega,\mu)$, independent of the inputs, an encoder
$M_\lambda:X\to\Alp{\lambda}$ with a finite alphabet size $\AlpS_\lambda$, and a decoder
$q(\cdot\mid f,\lambda,m)\in\Delta_s$. The strategy exactly simulates $\cF$ if
\begin{equation}
    f_b(x)=\int_\Omega
    q\bigl(b\mid f,\lambda,M_\lambda(x)\bigr)
    \,\mathrm d\mu(\lambda)
    \label{eq:general-map-exactness}
\end{equation}
for every $x\in X$, $f\in\cF$, and $b\in[s]$. Communication costs
are defined as in 
Eqs.~\eqref{eq:worstcase-cost}, {\eqref{eq:expected-alphabet-size}} and \eqref{eq:expected-bit-cost}.
\end{definition}

\subsection{Subsets, envelopes, and mass deficits}

For a nonempty subset $R\subseteq X$ and $f\in\cF$, the lower
envelope of $R$ is
\begin{equation}
    \ell_b^f(R):=\inf_{x\in R}f_b(x).
    \label{eq:abstract-lower-envelope}
\end{equation}
Since $0\leq\ell_b^f(R)\leq f_b(x)$ for every $x\in R$, it is a
subnormalized distribution:
\begin{equation}
    \ell^f(R)\in\Delta_s^\circ
    :=\{u\in L^1([s]):u_b\geq0,\ |u|\leq1\},
    \label{eq:subnormalized-simplex}
\end{equation}
where $|u|:=\sum_bu_b$. 
A natural candidate for Bob's decoder on a cell $R$ is
$q_b=\ell_b^f(R)/|\ell^f(R)|$, whenever $|\ell^f(R)|>0$.
This choice never outputs an outcome whose target probability
vanishes at any input in $R$. The lower envelope therefore
identifies probability mass that can safely be assigned to every
input in the cell.

Define the mass deficit of $R$ with
respect to the target family by
\begin{equation}
    \kappa_{\cF}(R):=
    \sup_{f\in\cF}\bigl(1-|\ell^f(R)|\bigr)
    =1-\inf_{f\in\cF}\sum_{b=1}^s\inf_{x\in R}f_b(x).
    \label{eq:family-mass-deficit}
\end{equation}
This measures the largest deficit in the probability mass common
to all inputs in $R$. It satisfies $0\leq\kappa_{\cF}(R)\leq1$,
also when $s=\infty$. For every $f\in\cF$,
\begin{equation}
    |\ell^f(R)|\geq1-\kappa_{\cF}(R).
    \label{eq:general-overlap-bound}
\end{equation}
We write $u\preceq v$ for coordinatewise inequality. In particular,
$0\preceq\ell^f(R)\preceq f(x)$ for every $x\in R$.
If $R\subseteq P$, then $\ell^f(P)\preceq\ell^f(R)$ and
$\kappa_{\cF}(R)\leq\kappa_{\cF}(P)$.

\begin{definition}[Mass-deficit Minkowski dimension]
\label{def:mass-deficit-minkowski}
For $\epsilon>0$, let $N_{\cF}(\epsilon)$ be the smallest number of subsets of $X$, each of mass deficit at most $\epsilon$ with
respect to $\cF$, whose union is $X$. Set
$N_{\cF}(\epsilon)=\infty$ if no finite cover exists. The
\emph{mass-deficit Minkowski dimension} of $(X,\cF)$ is
\begin{equation}
    d_{\mathrm{mdM}}(X,\cF):=
    \limsup_{\epsilon\downarrow0}
    \frac{\log N_{\cF}(\epsilon)}{\log(1/\epsilon)}.
    \label{eq:mass-deficit-minkowski-dimension}
\end{equation}
\end{definition}
The definition follows the covering formulation of Minkowski,
or box-counting, dimension \cite[Sec.~3.1]{Falconer2003}
\cite[Ch.~5]{Mattila1995}, with the covering scale determined
by the mass deficit in \cref{eq:family-mass-deficit}.

Somewhat surprisingly, we prove that if $(X,\cF)$ has finite mass-deficit Minkowski dimension then exact simulation with finite expected (one way) communication cost is possible. More precisely, our general simulation theorem reads:

\begin{theorem}[Exact simulation with finite expected communication]
\label{thm:general-simulation}
If $(X,\cF)$ has finite mass-deficit Minkowski dimension, then $\cF$
admits a one-way exact simulation with finite expected communication. Furthermore, there exist constants $C_{\mathrm{tail}},\beta>0$ such that 
%$\Prob(\ell(\lambda)\geq k)\leq C_{\mathrm{tail}}2^{-\beta k}$ 
 $\Prob(\log \AlpS_\lambda \geq k)\leq C_{\mathrm{tail}}\, 2^{-\beta k}$
for every integer $k\geq0$.
\end{theorem}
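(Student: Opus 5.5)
We build the protocol as a telescoping decomposition over a sequence of successively finer covers. Finite mass-deficit Minkowski dimension means there are $D<\infty$ and $\epsilon_0>0$ with $N_{\cF}(\epsilon)\leq \epsilon^{-D}$ for all $\epsilon\leq\epsilon_0$. We fix scales $\epsilon_k:=\epsilon_0 2^{-k}$, $k\geq 0$. For each $k$ we take a cover of $X$ by at most $\epsilon_k^{-D}$ sets of mass deficit at most $\epsilon_k$, and turn it into a partition $\cP_k$ by assigning each $x$ to the first covering set containing it. Subsets have no larger deficit, so every cell of $\cP_k$ still has $\kappa_{\cF}\leq\epsilon_k$. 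Next we pass to common refinements $\cQ_k:=\cP_0\wedge\cdots\wedge\cP_k$, so that the partitions are nested. This costs at most $\sum_{j\le k} D\log(1/\epsilon_j)=O(k^2)$ bits. If we want linear cost in $k$, we instead send only the cell of $\cP_k$ together with the indices of $\cP_0,\dots,\cP_{k-1}$ already known. The simplest robust choice is to let Alice send all labels $(P_0(x),\dots,P_k(x))$, which costs $O(k^2)$ bits. This still gives a finite expectation and an exponential tail in $\sqrt{k}$. To obtain the stated $2^{-\beta k}$ tail we will instead avoid nesting altogether, as follows.

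The key construction is a non-nested telescoping using only the lower envelopes at a single level. Write $R_k(x)$ for the cell of $\cP_k$ containing $x$, and set $g^{(k)}:=\ell^f(R_k(x))$. These envelopes are not monotone in $k$, so we instead use pairs of levels. Define $h^{(k)}:=\ell^f(R_k(x)\cup R_{k+1}(x))\preceq \min(g^{(k)},g^{(k+1)})$. Its deficit is bounded by that of a union of two small cells. However, this union need not be small, so we fall back on nesting. The plan is therefore to use nested partitions but to choose them economically. We build $\cQ_{k+1}$ by splitting each cell of $\cQ_k$ into its intersections with the cells of $\cP_{k+1}$. Bob only needs the new index within the cell, and it suffices to send the level-$(k+1)$ label, because $(\cQ_k\text{-label},\cP_{k+1}\text{-label})$ determines the $\cQ_{k+1}$ cell, and the $\cQ_k$ label is reconstructed recursively. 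The catch is that this recursion requires all previous labels. We accept this: we sample $K$ with $\Prob(K=k)=p_k:=(1-2^{-\gamma})2^{-\gamma k}$ and send all labels up to level $K$, costing $L_K=\sum_{j\le K}(D(j+\log(1/\epsilon_0))+1)=O(K^2)$ bits. Then $\Prob(L_K\geq k)\leq C2^{-\gamma'\sqrt{k}}$, and $\Exp[\log\AlpS_\lambda]<\infty$ holds for any $\gamma>0$. To get the stated exponential tail we replace the geometric scales by $\epsilon_k=\epsilon_0 2^{-2^k}$, i.e.\ doubly exponential scales. Then level $k$ costs $O(D2^k)$ bits, the cumulative cost is $O(D2^{K})$, and we choose $p_k\propto 2^{-\beta' 2^k}$. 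Both the finite expectation and the tail $\Prob(\log\AlpS_\lambda\ge k)\le C2^{-\beta k}$ then follow.

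Exactness works as follows. With nested cells $Q_0(x)\supseteq Q_1(x)\supseteq\cdots$, the envelopes $g^{(k)}:=\ell^f(Q_k(x))$ are coordinatewise nondecreasing in $k$ and bounded by $f(x)$. Their masses satisfy $|g^{(k)}|\geq 1-\epsilon_k\to1$, so $g^{(k)}\to f(x)$ in $L^1$. We also set $g^{(-1)}:=0$. Then
\[
f_b(x)=\sum_{k\geq0}\bigl(g^{(k)}_b-g^{(k-1)}_b\bigr),
\]
and each increment $\delta^{(k)}:=g^{(k)}-g^{(k-1)}$ is nonnegative and depends only on $f$ and $Q_k(x)$, which Bob knows after receiving the labels up to level $k$. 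For $k\geq1$ we have $|\delta^{(k)}|\leq 1-|g^{(k-1)}|\leq\epsilon_{k-1}$. Bob, given $K=k$, outputs according to $\delta^{(k)}/p_k$, completed to a distribution by a fixed filler that is cancelled as below. For this to be a valid distribution we need $|\delta^{(k)}|\leq p_k$. We therefore choose $p_k\geq\epsilon_{k-1}$ up to normalization, for instance $p_k=c\,\epsilon_{k-1}$ with the freedom to rescale $\epsilon_0$ and to put the leftover probability $1-\sum_k c\,\epsilon_{k-1}$ on level $0$, where $|\delta^{(0)}|\le1$.

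The remaining slack $p_k-|\delta^{(k)}|$ must not bias the output. The standard fix is a rejection/restart step. When the leftover mass is selected, Bob ignores the round and reruns the whole protocol on fresh shared randomness. Equivalently, $\lambda$ contains an i.i.d.\ sequence of rounds, and both parties use the first accepted one. Alice's encoding must then cover every round up to acceptance, and acceptance depends on $f$, which Alice does not know. I expect this to be the main obstacle. I would resolve it by having the filler be handled locally by Bob instead: sum exactness only requires $\sum_k p_k\,q^{(k)}=f(x)$. Bob can output $\delta^{(k)}/p_k$ plus slack $(p_k-|\delta^{(k)}|)/p_k$ assigned to a distribution that depends on $x$ only through already-sent labels. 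Because $\sum_k(p_k-|\delta^{(k)}|)=0$ exactly when $\sum p_k=1=\sum|\delta^{(k)}|$, the slack does not vanish termwise. So one must instead choose $p_k$ depending on $f$, which is forbidden. Fallback: make $\lambda$ a sequence of independent levels, and let Alice send a number of levels determined by $\lambda$ alone, with Bob performing the rejection sampling within the sent data. Making this consistent with the cost bound is the step I would work out carefully.
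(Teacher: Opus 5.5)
There is a genuine gap, and it is in the decoder. Your cell construction and cost accounting are fine. With the doubly exponential scales $\epsilon_k=\epsilon_0 2^{-2^k}$, the common refinement $\cQ_k$ costs $O(D2^k)$ bits, which is linear in $\log(1/\epsilon_k)$. This is the same exponentially-spaced-scale trick the paper uses in Lemma~\ref{lem:dimension-partitionability} (indices $m_k=m2^k$). With $p_k\propto 2^{-\beta'2^k}$, the finite mean and the exponential tail then follow.

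The decoder, however, cannot be patched by rejection sampling. You let Bob emit the envelope increment $\delta^{(k)}=g^{(k)}-g^{(k-1)}$ rescaled by $1/p_k$, which requires $|\delta^{(k)}|\leq p_k$ for every $k$. But $\sum_k|\delta^{(k)}|=|f(x)|=1=\sum_k p_k$. So the slacks $p_k-|\delta^{(k)}|$ are nonnegative and sum to zero, hence they all vanish. Your requirement therefore forces $|\delta^{(k)}|=p_k$ for every $k$, $x$ and $f$, which cannot hold when $p_k$ is independent of the inputs. For example, $|\delta^{(0)}|=1>p_0$ whenever $f$ is constant on $Q_0(x)$. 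There is no slack to fill, only a mismatch. The restart idea also stays open, as you note yourself, because acceptance would depend on $f$, which Alice does not know.

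The missing idea is to spend exactly $p_k$ of mass at level $k$, drawn from whatever the current envelope still has available, instead of spending the increments. Because the cells are nested, Bob knows $Q_0(x)\supseteq\dots\supseteq Q_k(x)$ from the received label. He sets $S_{-1}:=0$, $v_k:=\ell^f(Q_k(x))-S_{k-1}$, $q_k:=v_k/|v_k|$ and $S_k:=S_{k-1}+p_kq_k$, and outputs according to $q_k$ when $K=k$.

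Inductively $S_{k-1}\preceq\ell^f(Q_{k-1}(x))\preceq\ell^f(Q_k(x))$ and $|S_{k-1}|=\sum_{j<k}p_j$, so $v_k\succeq 0$. Moreover $|v_k|\geq p_k$ provided the \emph{cumulative} condition $\kappa_{\cF}(Q_k(x))\leq\sum_{j>k}p_j$ holds. That is, the tail of the level distribution must dominate the deficit, which is the opposite direction to the inequality you imposed.

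Under this condition, $S_k\preceq\ell^f(Q_k(x))\preceq f(x)$ with $|S_k|=\sum_{j\leq k}p_j\to 1$. So $f(x)-S_k$ is nonnegative with vanishing mass, and $f(x)=\sum_k p_kq_k$ exactly. No filler or rejection is needed, and $q_k$ depends only on $f$, $k$ and the label.

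Your own parameters already satisfy the condition: $\sum_{j>k}p_j\geq p_{k+1}=c\,\epsilon_k\geq\epsilon_k$ for $c\geq 1$, with $\epsilon_0$ small enough that $p_0\geq 0$. With this decoder the rest of your argument goes through. It is exactly the paper's recursion, where $w_k=2^{-k}$ and the level-$k$ deficit is at most $2^{-k}=\sum_{j>k}w_j$.
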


Before presenting the proof we devote a subsection to explaining how the main ingredients of the proof naturally arise. Once these preliminary considerations are settled, the proof becomes elegant and short.

\subsection{Partitions and nested sequence of partitions}

A finite partition $\cP=\{R_m\}_{m=1}^N$ of $X$ consists of
nonempty, pairwise disjoint subsets whose union is $X$; in formulas,
\begin{equation}
    X=\bigsqcup_{m=1}^N R_m.
\end{equation}
For a fixed value of the shared random variable $\lambda$, the encoder can equivalently be viewed as a labelled partition of $X$. Indeed, one can set
\begin{equation}
    \cC_\lambda:=\{M_\lambda^{-1}(m):m\in\Alp{\lambda}\}
    \setminus\{\emptyset\}.
    \label{def:Cell}
\end{equation}
Bob receives message $m$, which he can equivalently interpret as the information $x\in R_m$, where $R_m:=M_\lambda^{-1}(m)$. To measure how fine a partition is, we use the mass deficits of its cells: for a finite partition, set
\begin{equation}
    \kappa_{\cF}(\cP):=\max_{R\in\cP}\kappa_{\cF}(R).
    \label{eq:partition-mass-deficit}
\end{equation}

Recall that in our setting the shared random variable also determines the amount of information that Alice can send to Bob. It is natural to consider encodings where each additional bit transmitted by Alice refines the information already conveyed. In partition language, this corresponds to requiring the cells of a finer partition (used when $\AlpS_\lambda$ is larger) to be contained in cells of a coarser partition. Thus we introduce the notion of a nested sequence of partitions.

\begin{definition}[Nested partitions]
\label{def:nested-partitions}
A sequence $\{\cC_r\}_{r\geq0}$ of finite partitions of $X$ is
\emph{nested} if $\cC_0=\{X\}$ and every cell of $\cC_{r+1}$
is contained in a cell of $\cC_r$.
\end{definition}
The unique cell $P\in\cC_r$ containing $R\in\cC_{r+1}$ is its
\emph{parent}, and $R$ is a \emph{child} of $P$.

As an example of a simple encoding strategy that naturally leads to a nested sequence of partitions, consider the following: let $X=[0,1)$ be the input space, and let the shared random variable $\lambda$ take values in the positive integers and set $\AlpS_\lambda:=2^\lambda$. In other words, the role of $\lambda$ is just to tell both parties how many bits Alice will send to Bob.
The simplest encoding that Alice can use to send Bob good approximations of the input is to send Bob the first $\lambda$ bits of the binary expansion of the input $x$. This corresponds to the nested sequence of partitions
\begin{equation*}
    \cC_r:=\left\{
        \left[\frac{m-1}{2^r},\frac{m}{2^r}\right):
        m\in[2^r]
    \right\},\qquad r\geq0.
\end{equation*}

In the quantum case, suitable nested partitions can have increasingly
small cells in trace distance, while retaining a cardinality polynomial in the inverse trace distance size. So Alice can specify the cell containing the inputs to Bob using \lq\lq a few" bits of communication, while Bob gets to know which probability distribution he is supposed to reproduce with \lq\lq good" approximation. For general $X$ and $\cF$, such
partitions with controlled cardinality need not exist, thus we introduce the following definition to rigorously generalize this property of the quantum setting.  
\begin{definition}[Well-partitionability]
\label{def:well-partitionable}
A set $X$ is \emph{well-partitionable with respect to $\cF$} if
there are nested partitions $\{\cC_r\}_{r\geq0}$ and constants
$a,\nu,C>0$ such that, for every integer $r\geq0$,
\begin{equation}
    |\cC_r|\leq2^{a+\nu r},
    \qquad\kappa_{\cF}(\cC_r)\leq C2^{-r}.
    \label{eq:partitionability-bounds}
\end{equation}
\end{definition}
The careful reader will have noticed that this property seems linked to finite mass-deficit Minkowski dimension. Indeed, lemma \ref{lem:dimension-partitionability}, proved in
Appendix~\ref{app:dimension-partitionability}, shows that whenever the mass-deficit Minkowski dimension of a pair $(X,\cF)$ is finite, $X$ is well-partitionable with respect to $\cF$, and vice versa. In particular, the construction gives nested partitions with any
$\nu>4d_{\mathrm{mdM}}(X,\cF)$.

For the quantum family $\cF_{d,d^2}$, the partitions in
Appendix~\ref{app:quantum-partitions} satisfy\footnote{The relation between the mass-deficit of a partition used here and the trace-distance diameter used in the appendix is quantified in lemma \ref{lem:Born-stability} }
\begin{equation}
    \begin{aligned}
        |\cC_r|&\leq d\,2^{(2d-2)r},\\
        \kappa_{\cF_{d,d^2}}(\cC_r)
        &\leq4d\sqrt{2d-2}\,2^{-r}.
    \end{aligned}
    \label{eq:quantum-partition-example}
\end{equation}
For families with finitely many outcomes, coordinate refinements
give similar bounds on compact smooth manifolds when the target
maps have uniformly bounded coordinate derivatives.

\subsection{Proof of the general simulation theorem}
\label{sec:multiscale}

\begin{proof}
By lemma \ref{lem:dimension-partitionability}, there are nested partitions
$\{\cC_r\}_{r\geq0}$ satisfying
\cref{eq:partitionability-bounds} for some $a,\nu,C>0$. Set
\begin{equation}
    \begin{aligned}
        N_0&:=\left\lceil\log\max\{1,C\}\right\rceil,\\
        \cP_k&:=\cC_{N_0+k}\quad(k\geq1),
    \end{aligned}
    \label{eq:general-initial-resolution}
\end{equation}
and take $\cP_0=\{X\}$ as the root. For $R\in\cP_k$,
\begin{equation}
    \kappa_{\cF}(R)\leq C2^{-N_0-k}\leq2^{-k}.
    \label{eq:general-mass-deficit-tail-bound}
\end{equation}
The protocol has the following three ingredients.

\emph{Shared randomness.}
Alice and Bob share a positive integer $\lambda$, independent of their
inputs, with
\begin{equation}
    \Prob(\lambda=k)=w_k:=2^{-k},\qquad k\geq1.
    \label{eq:general-resolution-protocol}
\end{equation}
Thus both parties know that they are using partition $\cP_\lambda$.
Write
\begin{equation}
    \begin{aligned}
        A_k&:=\sum_{j=1}^k w_j=1-2^{-k},
        \end{aligned}
    \label{eq:geometric-tail}
\end{equation}
with $A_0=0$. Before the protocol starts they have fixed a labelling of every partition $\cP_k$.

\emph{Alice's encoder.}
 Let $R_k(x)$ be
its cell containing $x$. When $\lambda=k$, Alice sends the label
$M_k(x)$ of $R_k(x)$. The alphabet has
$ \AlpS_k:=|\cP_k|$ symbols, so its label requires
\begin{equation}
   % \ell(k):=\log  |\cX_k|
   \log \AlpS_k
    \leq a+\nu(N_0+k)
    \label{eq:general-message-length}
\end{equation}
bits. Since $\lambda$ is shared, she need not transmit it.

\emph{Bob's decoder.}
Knowing $f$, $\lambda=k$, and the received label, Bob knows the cell
$R\in\cP_k$ and its chain of ancestors. We construct a probability
vector $q_{f,k}(R)$ from this information.
For each fixed $f$, the construction proceeds by induction on the
level $k$, together with cumulative sub-distributions $S_{f,k}(R)$.
The induction maintains
$0\preceq S_{f,k}(R)\preceq\ell^f(R)$ and
$|S_{f,k}(R)|=A_k$ for every cell $R\in\cP_k$.
The base case is the root at level zero. At each subsequent level,
we use the sub-distributions already assigned to all parent cells
to define the decoder and cumulative sub-distribution for every
child, preserving these two conditions.

By \cref{eq:general-overlap-bound,eq:general-mass-deficit-tail-bound},
\begin{equation}
    |\ell^f(R)|\geq1-2^{-k}=A_k.
    \label{eq:abstract-overlap-condition}
\end{equation}
Start with the cumulative sub-distribution $S_{f,0}(X)=0$.
Suppose that the parent $P\in\cP_{k-1}$ has already been assigned
$S_{f,k-1}(P)\preceq\ell^f(P)$ of mass $A_{k-1}$. Since
$\ell^f(P)\preceq\ell^f(R)$, the residual vector
\begin{equation}
    v_{f,k}(R):=\ell^f(R)-S_{f,k-1}(P)
\end{equation}
is nonnegative, and
\begin{equation}
    |v_{f,k}(R)|
    =|\ell^f(R)|-A_{k-1}\geq A_k-A_{k-1}=w_k>0.
    \label{eq:available-residual-mass}
\end{equation}
Define
\begin{align}
    q_{f,k}(R)&:=\frac{v_{f,k}(R)}{|v_{f,k}(R)|},
    \label{eq:recursive-decoder}\\
    S_{f,k}(R)&:=S_{f,k-1}(P)+w_kq_{f,k}(R).
    \label{eq:S-recursion}
\end{align}
Then $q_{f,k}(R)\in\Delta_s$, and
$w_k/|v_{f,k}(R)|\leq1$ ensures
\begin{equation}
    S_{f,k}(R)\preceq\ell^f(R),
    \qquad |S_{f,k}(R)|=A_k.
    \label{eq:cumulative-invariants}
\end{equation}
This completes the recursion. Bob outputs $b$ with probability
$q_{f,k,b}(R)$. The decoder depends only on his target map, the
shared level, and the received label.

For a fixed input $x$, the recursion telescopes along its ancestors:
\begin{equation}
    \sum_{j=1}^k w_jq_{f,j}(R_j(x))=S_{f,k}(R_k(x)).
    \label{eq:cumulative-decoder}
\end{equation}
By \cref{eq:cumulative-invariants}, the difference between $f(x)$
and this partial sum is nonnegative and has total mass $1-A_{k}$.
Consequently,
\begin{equation}
    \left\|f(x)-\sum_{j=1}^k w_jq_{f,j}(R_j(x))\right\|_1
    =2^{-k}\longrightarrow0.
    \label{eq:S-telescoping}
\end{equation}
Taking the limit gives
\begin{equation}
    f(x)=\sum_{k=1}^\infty2^{-k}q_{f,k}(R_k(x)),
    \label{eq:abstract-simulation-identity}
\end{equation}
which is exactly \cref{eq:general-map-exactness}. The argument is
valid for $f$ taking values in  $L^1([s])$, including the case of $s=\infty$, i.e,  countably infinite outcomes.

Finally, $\Exp[\lambda]=2$ and \cref{eq:general-message-length} give
\begin{equation}
    \begin{aligned}
        \Exp[\log \AlpS_\lambda]&=\sum_{k=1}^\infty2^{-k}\log  \AlpS_k \\
        &\leq a+\nu(N_0+2)<\infty.
    \end{aligned}
    \label{eq:general-average-bits}
\end{equation}
Since $\Prob(\lambda>t)\leq2^{1-t}$ for every real $t$,
\cref{eq:general-message-length} gives, for every integer $k\geq0$,
\begin{equation}
    \begin{aligned}
    \Prob\!\left(
    %\ell(\lambda)
     \log \AlpS_\lambda
    \geq k\right)
    &\leq\Prob\!\left(\lambda>\frac{k-1-a}{\nu}-N_0\right)\\
    &\leq2^{1+N_0+(a+1)/\nu}\,2^{-k/\nu}.
    \end{aligned}
    \label{eq:general-bit-tail}
\end{equation}
Thus the thesis holds with
$C_{\mathrm{tail}}=2^{1+N_0+(a+1)/\nu}$ and $\beta=1/\nu$.
\end{proof}

The overlap condition explains why coarse partitions are compatible
with exactness. At level $k$, the cumulative decoder assigns mass
$A_k$ valid for every input in the cell; the remaining mass $1-A_k$
is reserved for finer levels. For this fixed nested encoder and
level distribution, \cref{eq:abstract-overlap-condition} is also
necessary: any exact decoder has a cumulative subdistribution of
mass $A_k$ below $f(x)$ for every $x\in R$, and hence below
$\ell^f(R)$.

\subsection{Quantum simulation as a corollary}
\label{sec:nested-colorings}

For the case of simulating qudit correlations, it is easy to construct \lq\lq good" nested partitions with respect to trace distance; one explicit example is given in Appendix \ref{app:quantum-partitions}. We show how the trace distances relates to the deficit mass and thus that the nested partitions aforementioned can be used for the exact simulation protocol.  As explained in section \ref{sec:general-framework}, simulating POVMs with $d^{2}$ outcomes is sufficient for the simulation of arbitrary POVM.
To bound the mass deficit, use the pure-state trace distance
\begin{equation}
    D(\psi,\phi):=\frac12\|\rho_\psi-\rho_\phi\|_1
    =\sqrt{1-|\braket{\psi}{\phi}|^2}.
    \label{eq:pure-trace-distance}
\end{equation}
Write $\diam_D(R):=\sup_{\psi,\phi\in R}D(\psi,\phi)$ and
$\diam_D(\cP):=\max_{R\in\cP}\diam_D(R)$.

\begin{lemma}[Mass deficit of Born probabilities]
\label{lem:Born-stability}
For every nonempty $R\subseteq\Pure$,
\begin{equation}
    \kappa_{\cF_{d,s}}(R)\leq d\,\diam_D(R).
    \label{eq:Born-mass-deficit}
\end{equation}
Moreover, every POVM $\mathsf E\in\mathfrak P_{d,s}$ satisfies
\begin{equation}
    \sum_{b=1}^s\inf_{\psi\in R}p_b(\mathsf E,\psi)
    \geq1-d\diam_D(R).
    \label{eq:multiway-overlap}
\end{equation}
\end{lemma}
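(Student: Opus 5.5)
Fix a nonempty $R\subseteq\Pure$ and a POVM $\mathsf E\in\mathfrak P_{d,s}$, and write $\delta:=\diam_D(R)$. The goal is \eqref{eq:multiway-overlap}; taking the infimum over $\mathsf E$ in the definition \eqref{eq:family-mass-deficit} of $\kappa_{\cF_{d,s}}$ then gives \eqref{eq:Born-mass-deficit}. If $d\delta\geq1$ there is nothing to prove, since the left-hand side of \eqref{eq:multiway-overlap} is nonnegative. The plan is to compare every $\psi\in R$ with a single reference state $\phi\in R$ and to control $\sum_b\inf_{\psi\in R}p_b(\mathsf E,\psi)$ one outcome at a time.

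The naive route does not work. For fixed $\phi\in R$, each $\inf_{\psi\in R}\operatorname{tr}(E_b\rho_\psi)$ is at least $\operatorname{tr}(E_b\rho_\phi)-\sup_{\psi\in R}|\operatorname{tr}(E_b(\rho_\psi-\rho_\phi))|$. Summing the bound $|\operatorname{tr}(E_b\Delta)|\leq\|E_b\|_\infty\|\Delta\|_1$ over $b$ gives a loss of order $s\,\delta$, not $d\,\delta$. The extra factor comes from allowing a different worst-case $\psi$ for each outcome. The dependence on $s$ must therefore be removed using $\sum_bE_b=\one$.

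The key step is a per-outcome bound that is linear in $E_b$ and whose coefficient is independent of $\psi$. For pure states, $\Delta:=\rho_\psi-\rho_\phi$ is traceless and has rank at most $2$, with eigenvalues $\pm D(\psi,\phi)$. So $\Delta\preceq D(\psi,\phi)\,\Pi$, where $\Pi$ is a rank-one projector. This alone is not uniform in $\psi$, so I would instead use the crude operator bound $-\Delta\preceq\|\Delta\|_\infty\one=D(\psi,\phi)\one\preceq\delta\one$. Then
\begin{equation}
\operatorname{tr}(E_b\rho_\psi)\geq\operatorname{tr}(E_b\rho_\phi)-\delta\operatorname{tr}(E_b)
\end{equation}
for every $\psi\in R$. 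Summing over $b$ and using $\sum_b\operatorname{tr}(E_b)=\operatorname{tr}\one=d$ gives
\begin{equation}
\sum_b\inf_{\psi\in R}p_b(\mathsf E,\psi)\geq1-d\,\delta,
\end{equation}
which is exactly \eqref{eq:multiway-overlap}.

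The only delicate point is the operator inequality $\rho_\phi-\rho_\psi\preceq D(\psi,\phi)\one$. It follows from the spectrum of the difference of two rank-one projectors: the eigenvalues are $\pm\sqrt{1-|\braket{\psi}{\phi}|^2}$, which I would verify by computing the trace and the determinant on $\operatorname{span}\{\psi,\phi\}$, together with zeros elsewhere. The bound holds uniformly over all outcomes. This is why the final constant is $d=\operatorname{tr}\one$ rather than $s$, and it explains the factor $d$ in the lemma.
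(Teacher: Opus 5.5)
Your proof is correct and is essentially the paper's argument. Both fix a reference state, bound each outcome's loss by a $\psi$-independent quantity linear in $E_b$, and sum using $\sum_b\operatorname{tr}E_b=d$; the only difference is the pairing. You pair $\operatorname{tr}E_b$ with $\|\rho_\phi-\rho_\psi\|_\infty=D(\psi,\phi)$, whereas the paper pairs $\|E_b\|_\infty$ with $\tfrac12\|\rho_\psi-\rho_\phi\|_1=D(\psi,\phi)$ and then uses $\|E_b\|_\infty\le\operatorname{tr}E_b$.

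Your dismissal of the ``naive route'' is mistaken, though. Letting the worst-case $\psi$ depend on $b$ costs nothing. The bound $|\operatorname{tr}(E_b(\rho_\psi-\rho_\phi))|\le\|E_b\|_\infty D(\psi,\phi)$ (valid because $E_b\ge0$ and the difference is traceless) sums to at most $\sum_b\|E_b\|_\infty\,\delta\le d\,\delta$, not $s\,\delta$. This is exactly how the paper proceeds.
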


\begin{proof}
The variational characterization of trace distance gives
\begin{equation}
    |p_b(\mathsf E,\psi)-p_b(\mathsf E,\phi)|
    \leq\|E_b\|_\infty D(\psi,\phi).
    \label{eq:effect-variation}
\end{equation}
Positivity and POVM normalization
give
\begin{equation}
    \sum_{b=1}^s\|E_b\|_\infty
    \leq\sum_{b=1}^s\operatorname{tr}E_b=d.
\end{equation}
Fix $\psi_0\in R$. By \cref{eq:effect-variation},
\begin{equation}
    \begin{aligned}
    &1-\sum_{b=1}^s\inf_{\phi\in R}p_b(\mathsf E,\phi)\\
    &\quad=\sum_{b=1}^s\sup_{\phi\in R}
      \bigl(p_b(\mathsf E,\psi_0)-p_b(\mathsf E,\phi)\bigr)\\
    &\quad\leq\sum_{b=1}^s\|E_b\|_\infty\diam_D(R)
    \leq d\,\diam_D(R).
    \end{aligned}
\end{equation}
Taking the supremum over $\mathsf E$ proves
\cref{eq:Born-mass-deficit}, and the same estimate gives
\cref{eq:multiway-overlap}.
\end{proof}

Appendix~\ref{app:quantum-partitions} supplies nested partitions with
$|\cC_r|\leq d2^{n_{d}r}$ and
$\diam_D(\cC_r)\leq4\sqrt{n_d}\,2^{-r}$, where $n_d=2d-2$.
By \cref{eq:Born-mass-deficit}, they satisfy
\cref{eq:quantum-partition-example}. Thus $\Pure$ is
well-partitionable with respect to $\cF_{d,d^2}$, and
\cref{lem:dimension-partitionability} gives
\begin{equation}
    d_{\mathrm{mdM}}(\Pure,\cF_{d,d^2})\leq n_d.
    \label{eq:quantum-mass-deficit-minkowski-bound}
\end{equation}

\begin{proof}[Proof of \cref{thm:main-simulation}]
For $d\geq2$, the quantum family satisfies the hypothesis of
\cref{thm:general-simulation}. To obtain the explicit communication
bound, use the partitions above, with
\begin{equation}
    a=b_d:=\log d,\qquad
    \nu=n_d,\qquad C=dC_d,
\end{equation}
where $C_d=4\sqrt{n_d}$. The initial resolution is therefore
\begin{equation}
    N_0:=\lceil\log(4d\sqrt{n_d})\rceil.
    \label{eq:explicit-N0}
\end{equation}
Use the same shared level $\lambda$ with $\Prob(\lambda=j)=2^{-j}$ as in the
general theorem. At $\lambda=j$, Alice sends the label of her cell in
$\cC_{N_0+j}$, and Bob uses the decoder
$q_{f_{\mathsf E},j}$ from \cref{eq:recursive-decoder}. Thus
\begin{equation}
    p_b(\mathsf E,\psi)
    =\sum_{j=1}^\infty2^{-j}
    q_{f_{\mathsf E},j,b}(R_j(\psi)).
    \label{eq:quantum-exact-expansion}
\end{equation}
Alice can specify her cell by a chart label and $N_0+j$ binary
subdivision choices for each of its $n_d$ real coordinates, giving
\begin{equation}
    \ell(j)=\log|\cC_{N_0+j}|
    \leq b_d+n_d(N_0+j).
    \label{eq:quantum-message-length}
\end{equation}
Since $\Exp[\lambda]=2$,
\begin{equation}
    \Exp[\log \AlpS_\lambda]\leq b_d+n_d(N_0+2),
    \label{eq:quantum-average-bits}
\end{equation}
which is the stated bound. For every integer $k\geq0$,
\begin{equation}
    \begin{aligned}
    \Prob\!\left(
    %\ell(\lambda)
    \log \AlpS_\lambda>k\right)
    &\leq\Prob\!\left(\lambda>\frac{k-b_d}{n_d}-N_0\right)\\
    &\leq2^{1+N_0+b_d/n_d}\,2^{-k/n_d}.
    \end{aligned}
    \label{eq:quantum-bit-tail}
\end{equation}
Thus one may take $C_{d}=2^{1+N_0+b_d/n_d}$ and
$\beta=1/n_d=1/(2d-2)$ for this tail bound.
Mixed preparations are simulated by sampling a pure-state
decomposition and applying the same encoder; linearity preserves
the Born probabilities and the communication bound. Bob's local
POVM decomposition and relabelling then extend the protocol to
arbitrary outcome spaces.
\end{proof}

The average number of symbols, interpreted as the average classical
dimension, diverges. For $d\geq4$, this is required by
\cref{thm:no-finite-average-classical-dimension}; for this
construction it holds for every $d\geq2$.
Indeed, the image of each chart in
Appendix~\ref{app:quantum-partitions} contains the coordinate cube
$[-1/2,1/2]^{n_d}$. Hence, for $r\geq1$, at least
$2^{n_d(r-1)}$ cells per chart are nonempty, and
\begin{equation}
    \AlpS_j:=|\cC_{N_0+j}|
    \geq d\,2^{n_d(N_0+j-1)}.
    \label{eq:level-classical-dimension}
\end{equation}
It follows that
\begin{equation}
    \Exp[\AlpS_\lambda]\geq
    d\,2^{n_dN_0-1}
    \sum_{j=0}^{\infty}2^{(n_d-1)j}=\infty.
    \label{eq:protocol-infinite-mean-alphabet}
\end{equation}
The series diverges because $n_d\geq2$.

\subsection{Approximate simulation with bounded alphabet size}

Characterizing the cost of exactly simulating finite-dimensional quantum communication is interesting from a fundamental perspective. From a more practical point of view, it is more relevant to consider the task of approximate
simulation. Indeed, for any finite number of rounds, sufficiently close
probability distributions cannot be distinguished reliably. Here we show that our simulation protocol for exact simulation, requiring finite average number of bits, can be adapted to a protocol for approximate simulation with finite alphabet size (in the worst-case).  

We start by defining the notion of approximate simulation.

\begin{definition}[$\epsilon$-accurate classical one-way simulation]
\label{def:Approximateclassicalone-waySimul}
The parties share a random variable $\lambda$ on a probability
space $(\Omega,\mu)$, independent of their inputs. An encoder is a map
\begin{equation}
    M_\lambda:\Pure\longrightarrow \Alp{\lambda},
\end{equation}
where $\AlpS_{\lambda}\in\mathbb N$ is positive. A decoder assigns a probability
distribution $q(\cdot\mid\mathsf E,\lambda,m)$ on the measurement
outcomes, as in definition \ref{def:classicalone-waySimul}.
Let
\begin{equation}
    [q(\mathsf E,\psi)]_b:=
    \int_\Omega
    q\bigl(b\mid\mathsf E,\lambda,M_\lambda(\psi)\bigr)
    \,\mathrm d\mu(\lambda)
\end{equation}
be the average probability distribution that Bob outputs.
The simulation is $\epsilon$-accurate if, for every input state
$\psi$ and measurement $\mathsf E$,
\begin{equation}
    D\!\left(
    q(\mathsf E,\psi)
    \,\middle\|\,
    p(\mathsf E,\psi)
    \right)\leq\epsilon.
    \label{eq:approx-simulation-general}
\end{equation}
\end{definition}

Here $D(\cdot\|\cdot)$ denotes the Kullback--Leibler divergence
\cite{KullbackLeibler1951}, measured in bits:
\begin{equation}
    D(q\|p):=\sum_b q_b\log\frac{q_b}{p_b}.
\end{equation}
Terms with $q_b=0$ contribute zero, while $q_b>0=p_b$ makes the
divergence infinite. Other distinguishability measures could be
adopted. In contrast with previous works \cite{Montina2011Approximate,Montina2013Exponential,ZartabEtAl2026} we use relative entropy because of its operational
characterization through the Chernoff--Stein lemma: it determines
the optimal error exponent in asymmetric hypothesis testing between
independent samples from the two distributions
\cite[Sec.~11.8]{CoverThomas2006}. Notice that closeness in relative entropy is a stronger condition than, e.g., total variation, as the latter can be arbitrarily small despite a diverging relative entropy. 

Consider the following protocol, which requires no shared randomness.
Take a sufficiently fine level $r$ of the quantum nested partitions
constructed in Appendix~\ref{app:quantum-partitions}, and write
\begin{equation}
    \cC_r=\{R_1,\ldots,R_N\},
    \qquad N=|\cC_r|.
\end{equation}
Alice sends the label
\begin{equation}
    M(\psi):=m
    \qquad\text{if }\psi\in R_m.
\end{equation}
Receiving $m$ therefore tells Bob that the input state belongs to
$R_m$.

Bob uses the normalized lower-envelope decoder. Writing
\begin{equation}
    \ell_b(\mathsf E,R):=
    \inf_{\phi\in R}p_b(\mathsf E,\phi),
    \qquad
    \abs{\ell(\mathsf E,R)}:=\sum_b\ell_b(\mathsf E,R),
\end{equation}
he outputs $b$ with probability
\begin{equation}
    q(b\mid\mathsf E,m):=
    \frac{\ell_b(\mathsf E,R_m)}
         {\abs{\ell(\mathsf E,R_m)}}.
    \label{eq:approx-lower-envelope-decoder}
\end{equation}
This is the same decoder used at the first level of the exact
simulation protocol. The communication cost is at most $\log N$
bits.

\begin{proposition}[$\epsilon$-accurate simulation with finite alphabet-size]
\label{prop:approx-lower-envelope} For sufficiently large $r$, the above protocol using $\cC_{r}$ partitions for the encoding (see Appendix \ref{app:quantum-partitions}), and the normalized lower envelope for the decoding, uses an alphabet-size $N=\abs{\cC_{r}}$ and achieves $\epsilon-$accurate simulation, for 
\begin{equation}
    \epsilon \geq \frac{K_d}{N^{\alpha_d}},
    \qquad
    \alpha_d=\frac1{d-1},
    \label{eq:approx-relative-entropy-bound}
\end{equation}
 where $K_d$ depends only on $d$.
\end{proposition}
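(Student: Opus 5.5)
The plan is to bound the relative entropy pointwise and show it is \emph{quadratic} in the cell size. A linear bound would only give the exponent $1/n_d=1/(2d-2)$, which is too weak.

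Fix $\psi$ and $\mathsf E$, and let $R=R_m$ be the cell of $\cC_r$ containing $\psi$. Write $p_b:=p_b(\mathsf E,\psi)$, $\ell_b:=\ell_b(\mathsf E,R)$ and $L:=\abs{\ell(\mathsf E,R)}$. Since there is no shared randomness, the averaged output is simply $q=\ell/L$. Because $\ell_b\le p_b$, any outcome with $p_b=0$ has $q_b=0$, so the divergence is finite. Expanding,
\begin{equation*}
D(q\|p)=\log\frac1L+\sum_b q_b\log\frac{\ell_b}{p_b}.
\end{equation*}
Working in nats (which only changes a constant), I apply $\ln x\le x-1$ to both terms: $\ln(1/L)\le(1-L)/L$ and $\sum_b q_b\ln(\ell_b/p_b)\le L^{-1}\sum_b(\ell_b^2/p_b-\ell_b)$. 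Adding them gives the chi-square-type bound
\begin{equation*}
D(q\|p)\le\frac{1}{L\ln 2}\sum_{b:\,p_b>0}\frac{(p_b-\ell_b)^2}{p_b}.
\end{equation*}
By \cref{lem:Born-stability}, $L\ge 1-d\,\diam_D(R)\ge 1/2$ once $r$ is large enough. So the prefactor is harmless, and what remains is to bound the sum by $O(\diam_D(R)^2)$.

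This is the key step. A naive bound $|p_b-\ell_b|\le\|E_b\|_\infty\diam_D(R)$ fails, because $p_b$ in the denominator can be tiny. Instead I work with amplitudes. For $\phi\in R$, choose the global phase of $\phi$ so that $\|\psi-\phi\|_2=\min_\theta\|\psi-e^{\ii\theta}\phi\|_2\le\sqrt2\,D(\psi,\phi)$. Since $\sqrt{p_b(\mathsf E,\cdot)}=\|E_b^{1/2}\,\cdot\,\|_2$, the reverse triangle inequality gives
\begin{equation*}
\bigl|\sqrt{p_b(\mathsf E,\psi)}-\sqrt{p_b(\mathsf E,\phi)}\bigr|\le\|E_b\|_\infty^{1/2}\sqrt2\,D(\psi,\phi).
\end{equation*}
Taking the infimum over $\phi$ and writing $\eta:=\sqrt2\,\diam_D(R)$, we get $\sqrt{p_b}-\sqrt{\ell_b}\le\|E_b\|_\infty^{1/2}\eta$. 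Then
\begin{equation*}
p_b-\ell_b=(\sqrt{p_b}-\sqrt{\ell_b})(\sqrt{p_b}+\sqrt{\ell_b})\le 2\sqrt{p_b}\,\|E_b\|_\infty^{1/2}\eta,
\end{equation*}
so $(p_b-\ell_b)^2/p_b\le4\|E_b\|_\infty\eta^2$. Summing over $b$ and using $\sum_b\|E_b\|_\infty\le d$ from the proof of \cref{lem:Born-stability}, the sum is at most $8d\,\diam_D(R)^2$. Hence $D(q\|p)\le c\,d\,\diam_D(\cC_r)^2$ uniformly in $\psi$ and $\mathsf E$, for an absolute constant $c$.

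Finally I convert this to a bound in terms of $N$. The partitions of Appendix~\ref{app:quantum-partitions} satisfy $\diam_D(\cC_r)\le4\sqrt{n_d}\,2^{-r}$, and by the lower bound \eqref{eq:level-classical-dimension} also $N\le d\,2^{n_d r}$, so $2^{-r}\le(d/N)^{1/n_d}$. Therefore
\begin{equation*}
D(q\|p)\le 16c\,d\,n_d\,(d/N)^{2/n_d}=K_d\,N^{-1/(d-1)},
\end{equation*}
since $2/n_d=1/(d-1)$. The phrase ``sufficiently large $r$'' is needed only to ensure $L\ge1/2$.

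The main obstacle is the amplitude (square-root Lipschitz) estimate. It is what turns the linear trace-distance control into quadratic control that is stable under division by small $p_b$, and it is the step that delivers the exponent $1/(d-1)$ rather than $1/(2d-2)$.
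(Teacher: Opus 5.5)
Your proof is correct and follows essentially the same route as the paper's: the same amplitude (square-root Lipschitz) estimate giving $\sum_{b:\,p_b>0}(p_b-\ell_b)^2/p_b\le 8d\,\diam_D(R)^2$, a chi-square-type bound on $D(q\|p)$ from $\ln x\le x-1$, and the conversion $2^{-r}\le (d/N)^{1/n_d}$. You obtain the prefactor $1/L$ directly rather than passing through $\chi^2(q\|p)$ as the paper does, which only changes the constant. One small slip: the inequality $N\le d\,2^{n_d r}$ you use is the cardinality bound of \cref{lem:nested-partitions}, not the lower bound \eqref{eq:level-classical-dimension}; the direction you actually use is the right one.
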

\begin{proof}
    The proof follows after straightforward calculations based on the mass-deficit size of the cells, see Appendix \ref{app:ApproxSimulation} for details. 
\end{proof}

This result shows that, for every fixed $d\geq4$,
$\epsilon$-accurate simulation of a qudit is possible using an
alphabet of size of order $(K_d/\epsilon)^{d-1}$, corresponding to
$(d-1)\log(K_d/\epsilon)+O_d(1)$ bits of communication.
Remarkably, the protocol requires no shared randomness.

A natural candidate for improving its accuracy is to exploit shared
randomness by Haar-randomly rotating the chosen partition.
More precisely, let the shared randomness consist of a
Haar-random unitary $U\in U(d)$, and let Alice's encoding be
\begin{equation}
    M_U(\psi):=M(U\psi),
\end{equation}
where $M(\phi)$ labels the cell containing $\phi$.
Since Bob knows $U$, receiving the label $m$ tells him that
$\psi\in U^{-1}R_m$, and he applies the same normalized
lower-envelope decoder to this rotated cell.
Intuitively, averaging over orientations removes the dependence
on the absolute orientation of the partition and may allow
deviations of the decoded probabilities from the Born probabilities
to cancel. By convexity of relative entropy, this randomization
cannot increase the worst-case error, so
Eq.~\eqref{eq:approx-relative-entropy-bound} remains valid.
We leave as an open question whether these cancellations improve the asymptotic dependence on $N$, beyond a possible improvement of the constant.

\section{Simulation of Gaussian CV quantum systems}
\label{sec:gaussian}

Finally, to illustrate the versatility of the general simulation theorem we consider the simulation of a continuous variable (CV) quantum system. For simplicity we present the single-mode case. The generalization to $n$ modes is discussed at the end of this section and proven in Appendix~\ref{app:multimode-gaussian}.

\subsection{Single-mode Gaussian systems}

In the context of CV quantum systems, pure single-mode Gaussian states are displaced squeezed vacua
\begin{equation}
    \ket{\alpha,\xi}:=e^{\alpha a^\dagger-\alpha^*a}\,
    e^{\frac12(\xi^*a^2-\xi a^{\dagger2})}\ket0,
    \label{eq:gaussian-pure-states}
\end{equation}
with displacement and squeezing  parameters $\alpha,\xi \in\mathbb C$. The pure single-mode Gaussian measurements are the so-called general-dyne measurements~\cite{Serafini2017,Weedbrook2012}, continuous-outcome POVMs
\begin{equation}
    E_{\beta|\zeta}:=\frac1\pi\ketbra{\beta,\zeta},\qquad \beta\in\mathbb C,
    \label{eq:gaussian-generaldyne}
\end{equation}
specified by the measurement-squeezing parameter $\zeta$. In particular, the POVM $\{E_{\beta|\zeta}\}$ is a heterodyne measurement for $\zeta=0$; while in the limit $\zeta=re^{2\ii\varphi}$ with $r\to\infty$, its measured-quadrature marginal converges to that of a homodyne measurement of the optical quadrature $\hat q _\varphi=\tfrac12(e^{\ii\varphi}a^\dagger+e^{-\ii\varphi}a)$.
The conjugate outcome is discarded, since its variance diverges in this limit.

By definition the Gaussian states have Gaussian Wigner functions~\cite{Weedbrook2012,AdessoRagyLee2014}, i.e.\ densities of normal distributions. They are therefore uniquely specified by the mean $\bm\mu = \binom{\langle{\hat x} \rangle}{\langle {\hat p}\rangle}$ and the (positive-semidefinite) covariance matrix $\Sigma$ of the complementary optical quadratures $\hat x=\tfrac12(a+a^\dagger)$ and $\hat p=\tfrac\ii2(a^\dagger-a)$,
see Appendix~\ref{app:multimode-gaussian} for more details. In particular, for the states $\ketbra{\alpha,\xi}$ in \cref{eq:gaussian-pure-states} one finds
\begin{align}
\bm\mu&=(\mathrm{Re}\,\alpha, \ \mathrm{Im}\,\alpha)^T \quad \text{and}\nonumber\\
    \Sigma&=\frac14\left(\begin{smallmatrix}
\cosh 2|\xi|-\frac{\sinh 2|\xi|}{|\xi|}\,\mathrm{Re}\,\xi & -\frac{\sinh 2|\xi|}{|\xi|}\,\mathrm{Im}\,\xi\\
-\frac{\sinh 2|\xi|}{|\xi|}\,\mathrm{Im}\,\xi & \cosh 2|\xi|+\frac{\sinh 2|\xi|}{|\xi|}\,\mathrm{Re}\,\xi
\end{smallmatrix}\right){\color{red}.}
    \label{eq:gaussian-moments}
\end{align}
Similarly, a general-dyne measurement in \cref{eq:gaussian-generaldyne} has outcome $\bm y=(\mathrm{Re}\,\beta,\mathrm{Im}\,\beta)^T$ and is specified by the covariance matrix $Z$ of its seed state $\ket{0,\zeta}$. Its output density on a Gaussian state $\rho\simeq(\bm\mu,\Sigma)$ is also Gaussian
\begin{align}
    f_{Z}(\bm y\mid\bm\mu,\Sigma)
    %&=\tr\!\big(\rho\,E_{\beta|\xi_M}\big) \nonumber\\
    &=\frac{e^{-\frac12(\bm y-\bm\mu)^T(\Sigma+Z)^{-1}(\bm y-\bm\mu)}}
    {2\pi\sqrt{\det(\Sigma+Z)}}.
    \label{eq: f dens}
\end{align}

We now consider a quantum communication scenario where Alice sends any Gaussian state to Bob, on which he performs any general-dyne measurement. Without further restriction the described communication channel has infinite classical capacity, and allows one to deterministically transmit any number of bits. Thus, to make the question interesting, we restrict the single-mode Gaussian states $\rho$ sent by Alice to have a bounded average energy  $\tr(\rho\,a^\dagger a)=\|\bm\mu\|^2+\tr\Sigma-\tfrac12\leq E$, for some $E>0$. Concretely, we consider the input set 
\begin{align}
    X_E:=\Big\{\rho\simeq(\bm\mu,\Sigma):\ &\|\bm\mu\|^2+\tr\Sigma-\tfrac12\leq E,\nonumber\\
    &\quad\quad \Sigma+\tfrac14\left(\begin{smallmatrix}0&\ii\\-\ii&0\end{smallmatrix}\right)\geq0\Big\},
    \label{eq:gaussian-input-set}
\end{align}
where the second condition, saturated by pure states\footnote{By saturation we mean that one eigenvalue of $\Sigma+\tfrac14\left(\begin{smallmatrix}0&\ii\\-\ii&0\end{smallmatrix}\right)$ is equal to zero.}, represents the uncertainty relations for complementary quadratures and characterizes the covariance matrices of quantum states. Notice that $X_E$ contains both pure and mixed Gaussian states.

The task of simulating this scenario with classical communication  presents two major differences with the simulation of qudits discussed in the last section. First, the CV states sent by Alice are not restricted to any finite-dimensional subspace. Second, general-dyne measurements are continuous-valued, and this continuity cannot be attributed to any classical mixedness. Despite these complications, Gaussian states with bounded average energy can also be simulated with a finite expected number of classical bits. Let us now sketch the main differences with the qudit simulation.

As the outcome $\bm y$ is continuous valued, the decoder must now simulate the continuous density \cref{eq: f dens}. For a subset $R\subset X_E$, we extend the mass deficit to continuous-output measurements via
\begin{align}
    \kappa_{\cF}(R)&:=
    1-\inf_{Z}\int_{\mathbb R^{2}}\mathrm d^{2}\bm y \inf_{(\bm\mu,\Sigma)\in R} f_{Z}(\bm y\mid\bm\mu,\Sigma),
    \label{eq:gaussian-family-mass-deficit}
\end{align}
which is well defined for the output densities in Eq.~\eqref{eq: f dens}.
Using the energy constraint, it is not difficult to see that covariance matrices in $X_E$ satisfy
\begin{align}
    &\gamma_E\one_2\leq\Sigma\leq\Gamma_E\one_2,\quad \text{with}
    \label{eq:gaussian-spectral-bounds}\\
    &\gamma_E:=\tfrac14\big(\sqrt{1+E}-\sqrt E\big)^2,\nonumber \quad
    \Gamma_E:=\tfrac14\big(\sqrt{1+E}+\sqrt E\big)^2,
\end{align}
see Proposition~\ref{prop:mm-SE-bounds}. 
As the covariance matrix is real symmetric $\Sigma =\Sigma^T$, these bounds (more precisely, the element-wise bounds they imply, see Proposition~\ref{prop:mm-SE-bounds}), together with $\|\bm\mu\|^2\leq E$, confine $X_E$ to an explicit box in the coordinates \begin{equation}
(\mu_1,\mu_2,\Sigma_{11},\Sigma_{12},\Sigma_{22})\in\mathbb R^5.
\end{equation}
In Appendix~\ref{app:multimode-gaussian} we show that the mass deficit of a cell of this box under the output densities in \cref{eq: f dens} is linear in its side length, uniformly in $Z$, and hence for all general-dyne measurements, including the homodyne limits. The energy bound $\Sigma+Z\geq \Sigma \geq \gamma_E \one_2$ is also crucial here, as it guarantees that the output densities  are not infinitely narrow.

To apply the general simulation theorem for continuous outcomes, \cref{thm:continuous-simulation}, it therefore suffices to slice the box into $2^{5r}$ cubes at nesting level $r$. This nested partition implies a simulation strategy with
\begin{equation}
    \Exp[\log \AlpS_\lambda]\leq5\left(\left\lceil\log\max\{1,K_{1,E}\}\right\rceil+2\right) = 10\log E+O(1)
    \label{eq:gaussian-average-bits-1mode}
\end{equation}
expected communicated bits, where $K_{1,E}$ is given by \cref{eq:gaussian-mass-deficit-constant} below. Indeed, the partitions have at most $2^{5r}$ cells and mass deficit at most $K_{1,E}2^{-r}$.

\subsection{Multimode generalization}

The same construction applies to $n$ modes. In this case the relevant states $X_E$ are all $n$-mode Gaussian states with mean photon number at most $E$, described by a $2n$ real displacement vector $\bm \mu$ and a $2n \times 2n$ real symmetric covariance matrix $\Sigma$.  Bob's measurements are all $n$-mode general-dyne measurements and $\cF$ denotes the family of their output densities. The parameter box now has dimension $D_n:=2n^2+3n$. Appendix~\ref{app:multimode-gaussian} shows that the set $X_E$ is well partitioned with respect to $\cF$ with $d_{\mathrm{mdM}}(X_E,\cF)\leq2n^2+3n$, and establishes the following result

\begin{theorem}
\label{prop:mm-simulation} The scenario with $n$-mode Gaussian states with bounded average photon number $E$ and general-dyne measurements admits a  
classical simulation with finite expected communication 
\begin{equation}
    \Exp[\log \AlpS_\lambda]\leq D_n\left(\left\lceil\log\max\{1,K_{n,E}\}\right\rceil+2\right),
    \label{eq:gaussian-average-bits}
\end{equation}
with
\begin{align}
    K_{n,E}:=&\left(\frac{\sqrt2\,n^{3/2}+4n^2}{\gamma_E}+\frac{4n}{\sqrt{\gamma_E}}+\frac14\right)\nonumber\\
    &\times\max\Big\{2\sqrt E,\sqrt{E(1+E)}\Big\},
    \label{eq:gaussian-mass-deficit-constant}
\end{align}
and an exponential
tail for the message length.
\end{theorem}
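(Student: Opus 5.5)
We would apply the continuous-outcome version of the general simulation theorem (\cref{thm:continuous-simulation}) with the mass deficit of \cref{eq:gaussian-family-mass-deficit}. The recursive lower-envelope decoder of \cref{sec:multiscale} goes through verbatim with sums replaced by integrals over $\bm y\in\mathbb R^{2n}$; the residual densities stay nonnegative and integrable. So the task reduces to exhibiting nested partitions of $X_E$ with $|\cC_r|\leq2^{D_nr}$ and $\kappa_{\cF}(\cC_r)\leq K_{n,E}2^{-r}$. With $a=0$, $\nu=D_n$, $C=K_{n,E}$ and $\Exp[\lambda]=2$, the bound \cref{eq:general-average-bits} then gives \cref{eq:gaussian-average-bits} exactly, and \cref{eq:general-bit-tail} gives the exponential tail.

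\textbf{Step 1: confine $X_E$ to a box.} Parametrize $X_E$ by $(\bm\mu,\Sigma)$, with $2n$ entries of $\bm\mu$ and $n(2n+1)$ independent entries of $\Sigma$, so there are $D_n=2n^2+3n$ coordinates in total. The energy bound $\|\bm\mu\|^2+\tr\Sigma-\tfrac n2\leq E$ together with the uncertainty relation gives the spectral bounds $\gamma_E\one\leq\Sigma\leq\Gamma_E\one$ (Proposition~\ref{prop:mm-SE-bounds}), the mode-wise analogue of \cref{eq:gaussian-spectral-bounds}. These imply:
\begin{itemize}[leftmargin=*]
\item $|\mu_i|\leq\sqrt E$;
\item $|\Sigma_{ij}|\leq(\Gamma_E-\gamma_E)/2=\tfrac12\sqrt{E(1+E)}$ off the diagonal;
\item diagonal entries lie in $[\gamma_E,\Gamma_E]$, an interval of length $\sqrt{E(1+E)}$.
\end{itemize}
So $X_E$ sits in a box whose side lengths are at most $L_E:=\max\{2\sqrt E,\sqrt{E(1+E)}\}$. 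Dyadic bisection of every side gives nested partitions into $2^{D_nr}$ sub-boxes. We keep only cells that meet $X_E$ and intersect them with $X_E$.

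\textbf{Step 2: mass deficit of a small cell.} Fix $Z$ and a cell $R$ of side at most $h=L_E2^{-r}$. Pick a reference point $(\bm\mu_0,\Sigma_0)\in R$. Then
\[
1-\int\inf_{R}f_Z\leq\sup_{(\bm\mu,\Sigma)\in R}\int\bigl(f_Z(\cdot\mid\bm\mu_0,\Sigma_0)-f_Z(\cdot\mid\bm\mu,\Sigma)\bigr)_+,
\]
up to the usual envelope argument. The right-hand side is bounded by total variation, which we control along the straight path between the two points by integrating the $L^1$ norm of the parameter derivative of the Gaussian density $\mathcal N(\bm\mu,\Sigma+Z)$.

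\begin{itemize}[leftmargin=*]
\item \emph{Mean direction.} The $L^1$ norm of $\partial_{\mu_i}$ is at most $\|(\Sigma+Z)^{-1/2}\|\leq\gamma_E^{-1/2}$, because $\Sigma+Z\geq\Sigma\geq\gamma_E\one$ for every $Z\geq0$. Summing over the $2n$ coordinates, with Cauchy–Schwarz constants, produces the $4n/\sqrt{\gamma_E}$ term.
\item \emph{Covariance direction.} The $L^1$ norm of $\partial_{\Sigma_{ij}}$ is bounded by $\|(\Sigma+Z)^{-1}\|\cdot O(\sqrt n)\leq O(\sqrt n)/\gamma_E$, using the Gaussian score identity and $\Exp|\bm y^TA\bm y-\tr A|\leq\sqrt{2}\|A\|_2$. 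Summing over roughly $n^2$ entries produces the $(\sqrt2n^{3/2}+4n^2)/\gamma_E$ term.
\end{itemize}
The additive $\tfrac14$ absorbs slack, and multiplying by $h$ gives $\kappa_{\cF}(R)\leq K_{n,E}2^{-r}$.

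The key point is that every estimate depends on $Z$ only through $(\Sigma+Z)^{-1}\leq\gamma_E^{-1}$. This makes the bound uniform over all general-dyne measurements, including squeezed ones approaching homodyne, where $Z$ has a diverging eigenvalue. That case is harmless here, since large $Z$ only broadens the densities.

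\textbf{Main obstacle.} The hard step is Step 2: a uniform-in-$Z$ $L^1$ Lipschitz bound, with explicit constants, for the covariance derivatives of a multivariate Gaussian. A secondary subtlety is that intersected cells may be nonconvex, so the straight path between two points of $R$ can leave $X_E$. We would handle this by noting that the derivative bounds hold on the whole box region where $\Sigma\geq\gamma_E\one$ stays valid, and by controlling the segment through the convexity of $\{\Sigma\geq\gamma_E\one\}$. If needed, we would replace $\Sigma$ by $\Sigma+Z$ restricted to the convex set $\{\Sigma:\Sigma\geq\gamma_E\one\}$ intersected with the box.
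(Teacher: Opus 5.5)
Your reduction to \cref{thm:continuous-simulation} and the box of Step 1 match the paper, but Step 2 has a genuine gap. For fixed $Z$, write $f_0:=f_Z(\cdot\mid\bm\mu_0,\Sigma_0)$ and $f_\theta:=f_Z(\cdot\mid\bm\mu,\Sigma)$ for $\theta=(\bm\mu,\Sigma)\in R$. Then, exactly,
$1-\int\inf_{\theta\in R}f_\theta=\int\sup_{\theta\in R}(f_0-f_\theta)\,\mathrm d^{2n}\bm y$.
An integral of a supremum dominates the supremum of the integrals, so this quantity is \emph{at least} $\sup_{\theta\in R}\int(f_0-f_\theta)_+$. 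Your displayed inequality therefore runs the wrong way. Bounding $L^1$ norms of parameter derivatives gives only pairwise total-variation closeness, and that does not bound the mass deficit, which is a multiway overlap.

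This is not a technicality. Take a uniform density on the unit torus with a square hole of side $w$, whose centre moves with speed $L/w$ in a two-dimensional parameter. The family is $O(L)$-Lipschitz in total variation. Yet over any parameter ball of radius $h$ the holes sweep an area of order $(Lh/w)^2$, so the deficit is of order one when $w\ll Lh$. With a single parameter, Fubini applied to $\sup_x(f_0-f_x)\leq\int|\partial_t f_t|\,\mathrm dt$ would rescue your argument, but here $D_n\geq5$. Your ``secondary subtlety'' is moot: $X_E$ is convex (a convex energy constraint intersected with a linear matrix inequality), so every cell is convex.

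What is missing is a bound on $f_0-f_\theta$ that holds pointwise in $\bm y$ simultaneously for all $\theta\in R$, with a majorant whose integral does not depend on $Z$. The paper gets this from \cref{lem:mm-density-lower-bound}:
$\mathrm N(\bm y;\bm\mu,\Sigma_Z)\geq\mathrm N(\bm y;\bm\mu_0,\Sigma_{0,Z})\bigl[1-\frac{\|\Delta\|_1}{2\gamma}-\frac{\|\Delta\|}{2\gamma(1-\epsilon)}|\bm w|^2-\frac{|\bm\delta|}{\sqrt\gamma(1-\epsilon)}|\bm w|-\frac{|\bm\delta|^2}{2\gamma}\bigr]$.
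Here the whitened variable $\bm w=\Sigma_{0,Z}^{-1/2}(\bm y-\bm\mu_0)$ depends only on the reference point and $Z$. The cell enters only through the scalars $\|\Delta\|_1,\|\Delta\|,|\bm\delta|$, so the infimum over $R$ can be taken inside the integral. Integrating against the reference Gaussian then uses only $\Exp|\bm w|^2=2n$ and $\Exp|\bm w|\leq\sqrt{2n}$, which hold for every $Z$, including homodyne limits.

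This is also where $K_{n,E}$ comes from. Use $\delta_R\leq\sqrt{2n}\,h_r$, $\Delta_R\leq2n\,h_r$, $\Delta_R^{(1)}\leq(2n)^{3/2}h_r$, and $h_r\leq\gamma_E/(4n)$, so that $\epsilon_R\leq\frac12$. The four terms then produce $\sqrt2\,n^{3/2}/\gamma_E$, $4n^2/\gamma_E$, $4n/\sqrt{\gamma_E}$, and, via $\delta_R^2/(2\gamma_E)\leq h_r/4$, the $\frac14$. Coarser levels are covered by the trivial bound $\kappa_{\cF}\leq1\leq n\leq C_{n,E}h_r$. Your attributions (``Cauchy--Schwarz constants'', ``the $\frac14$ absorbs slack'') do not derive the stated constant, and $L^1$ derivative bounds could not produce it even in principle.
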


Since $K_{n,E}=O(n^2E^2)$, this gives $\Exp[\log \AlpS_\lambda]\leq D_n\big(2\log E+O(\log n)\big)$ in the limit of large energy. That is, the expected number of bits used by the classical simulation strategy grows logarithmically with the energy $E$ and quadratically with the mode number $n$, up to a logarithmic factor ($n^2\log n$).

We emphasize again that neither the dimension of the Hilbert space nor the cardinality of measurement outputs are bounded for CV systems. The present construction therefore cannot use the trace-distance shortcut we employed for qudits. We instead constructed a nested partition of the {\it parameter space} $\{(\bm\mu,\Sigma)\}$ with cells of increasingly small mass deficit with respect to all general-dyne measurements. The possibility to obtain such a construction with only a few extra steps nicely illustrates the advantages of the general simulation theorem.

\section{Discussion}

We have shown that the classical simulation of quantum correlations arising from qudits in prepare-and-measure scenarios requires a classical communication cost that is essentially as high as it could be. Specifically, we first showed that no simulation model involving classical messages of finite expected alphabet size can exactly reproduce qudit correlations for $d\geq4$. Second, we proved a general simulation theorem stating that the correlations of any physical theory (under some minimal requirements) can be classically simulated via a model using a finite expected number of classical bits. The case of finite-dimensional quantum systems is a special case of this theorem, requiring $O(d \log(d))$ bits on average.

Overall, these results identify the communication complexity of simulating the correlations from finite-dimensional quantum systems, both in the prepare-and-measure scenario and in bipartite Bell tests. On the more technical side, it would still be interesting to identify precisely the optimal scaling for the expected communication. Our protocol uses more than $k$-bits of communication with a probability  exponentially small in $k$, for a $d$-dimensional system we have $\Prob\!\left(\log \AlpS_\lambda >k\right)\leq C_{d}\,2^{-\beta_{d}k} $ where $\beta_{d}=(2d-2)^{-1}$. One interesting direction is to find the largest exponent $\beta^{*}_{d}$ for which the exact simulation is possible. Here we could show that $\beta_{d}\leq \beta_{d}^{*}\leq 1$ (from Theorem \ref{thm:no-finite-average-classical-dimension}) for $d\geq 4$. 

%it holds $\beta_{d}\leq \beta_{d}^{*}\leq 1$, where the bound is supplied by Theorem \ref{thm:no-finite-average-classical-dimension}.

An interesting question for future research is to see if our results can be effectively applied to the multipartite case (i.e. three parties or more), where the classical simulation costs of entanglement have also been investigated \cite{Buhrman2003,Branciard2011,Branciard2012,Brassard2016}. In particular, Brassard, Devroye and Gravel \cite{Brassard2018} provided a universal model requiring two-way communication between the parties, with $O(m^2)$ bits of communication on average, where $m$ denotes the number of parties. It would be interesting to see if our simulation model can provide more economical models, for example involving only one-way communication between parties.

Another direction is to apply our general simulation theorem to scenarios involving infinite-dimensional quantum systems, as we illustrated via bounded-energy Gaussian states subject to general Gaussian measurements. A first question is whether the latter scenario could in fact be simulated more effectively, using e.g. a model with finite expected alphabet size. Finding other classes of CV systems where the simulation applies would be interesting, as well as studying the implications of these results for tasks in quantum information processing.

Finally, our general simulation theorem goes well beyond quantum physics, and applies to a broad class of theories, including generalized probabilistic theories. It would be interesting to see whether there exist theories where correlations can be simulated more efficiently than in quantum theory, for example involving only a finite expected (or even worst-case) alphabet size.

%%%%%%%%%%%%%%%%%%%%%%%%%%%%%%
\section{Acknowledgements}
%%%%%%%%%%%%%%%%%%%%%%%%%%%%%%

The authors acknowledge financial support from the Swiss National Science Foundation (project 219366). This work was also funded with a grant from the Foundation for the University of Geneva (Funige).

%%%%%%%%%%%%%%%%%%%%%%%%%%%%%%
\section{AI statement}
%%%%%%%%%%%%%%%%%%%%%%%%%%%%%%
 ChatGPT Sol 5.6 and ChatGPT Astra 6.0 have contributed via extensive dialogues with the authors. This helped prove the statements, as well as identifying and correcting errors. All proofs, claims, and conclusions have been reviewed and verified by the authors who take full responsibility for the content.

\bibliographystyle{apsrev4-2}
\bibliography{biblio}

@article{Perry2015,
   title={Communication Tasks with Infinite Quantum-Classical Separation},
   volume={115},
   ISSN={1079-7114},
   url={http://dx.doi.org/10.1103/PhysRevLett.115.030504},
   DOI={10.1103/physrevlett.115.030504},
   number={3},
   journal={Physical Review Letters},
   publisher={American Physical Society (APS)},
   author={Perry, Christopher and Jain, Rahul and Oppenheim, Jonathan},
   year={2015},
   month=July }

@book{HornJohnson2012,
  author    = {Horn, Roger A. and Johnson, Charles R.},
  title     = {Matrix Analysis},
  edition   = {2},
  publisher = {Cambridge University Press},
  address   = {Cambridge},
  year      = {2012},
  isbn      = {978-0-521-83940-2}
}

@article{SimonMukundaDutta1994,
  author  = {Simon, R. and Mukunda, N. and Dutta, Biswadeb},
  title   = {Quantum-noise matrix for multimode systems: {U(n)} invariance, squeezing, and normal forms},
  journal = {Physical Review A},
  volume  = {49},
  pages   = {1567--1583},
  year    = {1994},
  doi     = {10.1103/PhysRevA.49.1567}
}

@misc{gavinsky2008,
      title={Exponential separations for one-way quantum communication complexity, with applications to cryptography}, 
      author={Dmytro Gavinsky and Julia Kempe and Iordanis Kerenidis and Ran Raz and Ronald de Wolf},
      year={2008},
      eprint={quant-ph/0611209},
      archivePrefix={arXiv},
      primaryClass={quant-ph},
      url={https://arxiv.org/abs/quant-ph/0611209}, 
}

@book{VogelWelsch2006,
  author    = {Vogel, Werner and Welsch, Dirk-Gunnar},
  title     = {Quantum Optics},
  edition   = {3},
  publisher = {Wiley-VCH},
  address   = {Weinheim},
  year      = {2006},
  isbn      = {978-3-527-40507-7}
}

@book{Serafini2017,
  author = {Serafini, Alessio},
  title = {Quantum Continuous Variables: A Primer of Theoretical Methods},
  publisher = {CRC Press},
  address = {Boca Raton},
  year = {2017},
  isbn = {9781482246346}
}

@article{AdessoRagyLee2014,
  author = {Adesso, Gerardo and Ragy, Sammy and Lee, Antony R.},
  title = {Continuous Variable Quantum Information: {G}aussian States and Beyond},
  journal = {Open Systems \& Information Dynamics},
  volume = {21},
  pages = {1440001},
  year = {2014},
  doi = {10.1142/S1230161214400010},
  eprint = {1401.4679},
  archivePrefix = {arXiv}
}

@article{ValcarceEtAl2022,
  author = {Valcarce, Xavier and Sekatski, Pavel and Gouzien, Elie and Melnikov, Alexey and Sangouard, Nicolas},
  title = {Automated design of quantum optical experiments for device-independent quantum key distribution},
  journal = {arXiv:2209.06468},
  year = {2022},
  eprint = {2209.06468},
  archivePrefix = {arXiv}
}

@article{Havlicek2020,
  title = {Simple communication complexity separation from quantum state antidistinguishability},
  author = {Havl{\'{\i}}{\ifmmode \check{c}\else \v{c}\fi{}}ek, Vojt{\ifmmode \check{e}\else \v{e}\fi{}}ch and Barrett, Jonathan},
  journal = {Phys. Rev. Res.},
  volume = {2},
  issue = {1},
  pages = {013326},
  numpages = {8},
  year = {2020},
  month = {Mar},
  publisher = {American Physical Society},
  doi = {10.1103/PhysRevResearch.2.013326},
  url = {https://link.aps.org/doi/10.1103/PhysRevResearch.2.013326}
}

@misc{brask2026,
      title={Quantum correlations in prepare-and-measure scenarios and their semi-device-independent applications}, 
      author={Jonatan Bohr Brask and Nicolas Brunner and Jef Pauwels and Davide Rusca and Armin Tavakoli},
      year={2026},
      eprint={2603.23604},
      archivePrefix={arXiv},
      primaryClass={quant-ph},
      url={https://arxiv.org/abs/2603.23604}, 
}

@article{Gallego2010,
  title = {Device-Independent Tests of Classical and Quantum Dimensions},
  author = {Gallego, Rodrigo and Brunner, Nicolas and Hadley, Christopher and Ac{\'{\i}}n, Antonio},
  journal = {Phys. Rev. Lett.},
  volume = {105},
  issue = {23},
  pages = {230501},
  numpages = {4},
  year = {2010},
  month = {Nov},
  publisher = {American Physical Society},
  doi = {10.1103/PhysRevLett.105.230501},
  url = {https://link.aps.org/doi/10.1103/PhysRevLett.105.230501}
}

@misc{schlosser2026,
      title={Bounding the classical cost of simulating quantum behaviors in the prepare-and-measure scenario}, 
      author={Sebastian Schlösser and Matthias Kleinmann},
      year={2026},
      eprint={2603.01255},
      archivePrefix={arXiv},
      primaryClass={quant-ph},
      url={https://arxiv.org/abs/2603.01255}, 
}

@article{Montanaro2019,
  doi = {10.22331/q-2019-06-28-154},
  url = {https://doi.org/10.22331/q-2019-06-28-154},
  title = {Quantum states cannot be transmitted efficiently classically},
  author = {Montanaro, Ashley},
  journal = {{Quantum}},
  issn = {2521-327X},
  publisher = {{Verein zur F{\"{o}}rderung des Open Access Publizierens in den Quantenwissenschaften}},
  volume = {3},
  pages = {154},
  month = jun,
  year = {2019}
}

@article{Montina2016,
doi = {10.1088/1367-2630/18/1/013035},
url = {https://doi.org/10.1088/1367-2630/18/1/013035},
year = {2016},
month = {jan},
publisher = {IOP Publishing},
volume = {18},
number = {1},
pages = {013035},
author = {Montina, Alberto and Wolf, Stefan},
title = {Information-based measure of nonlocality},
journal = {New Journal of Physics}
}

@article{Renner2023,
  doi = {10.22331/q-2023-10-24-1149},
  url = {https://doi.org/10.22331/q-2023-10-24-1149},
  title = {The minimal communication cost for simulating entangled qubits},
  author = {Renner, Martin J. and Quintino, Marco T{\'{u}}lio},
  journal = {{Quantum}},
  issn = {2521-327X},
  publisher = {{Verein zur F{\"{o}}rderung des Open Access Publizierens in den Quantenwissenschaften}},
  volume = {7},
  pages = {1149},
  month = oct,
  year = {2023}
}

@article{DArianoLoPrestiPerinotti2005,
  author = {D'Ariano, Giacomo Mauro and Lo Presti, Paoloplacido and Perinotti, Paolo},
  title = {Classical randomness in quantum measurements},
  journal = {Journal of Physics A: Mathematical and General},
  volume = {38},
  number = {26},
  pages = {5979--5991},
  year = {2005},
  doi = {10.1088/0305-4470/38/26/010},
  eprint = {quant-ph/0408115},
  archivePrefix = {arXiv}
}

@article{ChiribellaDArianoSchlingemann2007,
  author = {Chiribella, Giulio and D'Ariano, Giacomo Mauro and Schlingemann, Dirk-Michael},
  title = {How Continuous Quantum Measurements in Finite Dimension Are Actually Discrete},
  journal = {Physical Review Letters},
  volume = {98},
  pages = {190403},
  year = {2007},
  doi = {10.1103/PhysRevLett.98.190403},
  eprint = {quant-ph/0702068},
  archivePrefix = {arXiv}
}

@book{Falconer2003,
  author = {Falconer, Kenneth},
  title = {Fractal Geometry: Mathematical Foundations and Applications},
  edition = {2},
  publisher = {John Wiley \& Sons},
  year = {2003},
  doi = {10.1002/0470013850}
}

@book{Mattila1995,
  author = {Mattila, Pertti},
  title = {Geometry of Sets and Measures in Euclidean Spaces: Fractals and Rectifiability},
  series = {Cambridge Studies in Advanced Mathematics},
  volume = {44},
  publisher = {Cambridge University Press},
  year = {1995},
  doi = {10.1017/CBO9780511623813}
}

@article{Weedbrook2012,
  author = {Weedbrook, Christian and Pirandola, Stefano and Garc{\'i}a-Patr{\'o}n, Ra{\'u}l and Cerf, Nicolas J. and Ralph, Timothy C. and Shapiro, Jeffrey H. and Lloyd, Seth},
  title = {Gaussian quantum information},
  journal = {Reviews of Modern Physics},
  volume = {84},
  pages = {621--669},
  year = {2012},
  doi = {10.1103/RevModPhys.84.621},
  eprint = {1110.3234},
  archivePrefix = {arXiv}
}

@article{Branciard2012,
  title = {Classical Simulation of Entanglement Swapping with Bounded Communication},
  author = {Branciard, Cyril and Brunner, Nicolas and Buhrman, Harry and Cleve, Richard and Gisin, Nicolas and Portmann, Samuel and Rosset, Denis and Szegedy, Mario},
  journal = {Phys. Rev. Lett.},
  volume = {109},
  issue = {10},
  pages = {100401},
  numpages = {5},
  year = {2012},
  month = {Sep},
  publisher = {American Physical Society},
  doi = {10.1103/PhysRevLett.109.100401},
  url = {https://link.aps.org/doi/10.1103/PhysRevLett.109.100401}
}

@article{KullbackLeibler1951,
  author  = {Kullback, Solomon and Leibler, Richard A.},
  title   = {On Information and Sufficiency},
  journal = {The Annals of Mathematical Statistics},
  volume  = {22},
  number  = {1},
  pages   = {79--86},
  year    = {1951},
  doi     = {10.1214/aoms/1177729694}
}

@book{CoverThomas2006,
  author    = {Cover, Thomas M. and Thomas, Joy A.},
  title     = {Elements of Information Theory},
  edition   = {2},
  publisher = {John Wiley \& Sons},
  year      = {2006},
  isbn      = {978-0-471-24195-9},
  doi       = {10.1002/047174882X}
}

@article{Montina2011Approximate,
  author        = {Montina, Alberto},
  title         = {Approximate simulation of entanglement with a linear cost of communication},
  journal       = {Physical Review A},
  volume        = {84},
  number        = {4},
  pages         = {042307},
  year          = {2011},
  doi           = {10.1103/PhysRevA.84.042307},
  eprint        = {1107.4647},
  archivePrefix = {arXiv},
  primaryClass  = {quant-ph}
}

@article{Montina2013Exponential,
  author        = {Montina, Alberto},
  title         = {Exponential communication gap between weak and strong classical simulations of quantum communication},
  journal       = {Physical Review A},
  volume        = {87},
  number        = {4},
  pages         = {042331},
  year          = {2013},
  doi           = {10.1103/PhysRevA.87.042331},
  eprint        = {1301.3452},
  archivePrefix = {arXiv},
  primaryClass  = {quant-ph}
}

@article{ZartabEtAl2026,
  author        = {Zartab, Mani and Gasbarri, Giulio and Sent{\'i}s, Gael and Mu{\~n}oz-Tapia, Ramon},
  title         = {Prepare-and-measure and entanglement simulation beyond qubits},
  journal       = {Scientific Reports},
  volume        = {16},
  pages         = {21297},
  year          = {2026},
  doi           = {10.1038/s41598-026-51465-3},
  eprint        = {2508.02377},
  archivePrefix = {arXiv},
  primaryClass  = {quant-ph}
}

@article{Branciard2011,
  title = {Quantifying the Nonlocality of Greenberger-Horne-Zeilinger Quantum Correlations by a Bounded Communication Simulation Protocol},
  author = {Branciard, Cyril and Gisin, Nicolas},
  journal = {Phys. Rev. Lett.},
  volume = {107},
  issue = {2},
  pages = {020401},
  numpages = {4},
  year = {2011},
  month = {Jul},
  publisher = {American Physical Society},
  doi = {10.1103/PhysRevLett.107.020401},
  url = {https://link.aps.org/doi/10.1103/PhysRevLett.107.020401}
}

@ARTICLE{Brassard2016,
  author={Brassard, Gilles and Devroye, Luc and Gravel, Claude},
  journal={IEEE Transactions on Information Theory}, 
  title={Exact Classical Simulation of the Quantum-Mechanical GHZ Distribution}, 
  year={2016},
  volume={62},
  number={2},
  pages={876-890},
  doi={10.1109/TIT.2015.2504525}}

@article{Massar2001,
  title = {Classical simulation of quantum entanglement without local hidden variables},
  author = {Massar, Serge and Bacon, Dave and Cerf, Nicolas J. and Cleve, Richard},
  journal = {Phys. Rev. A},
  volume = {63},
  issue = {5},
  pages = {052305},
  numpages = {8},
  year = {2001},
  month = {Apr},
  publisher = {American Physical Society},
  doi = {10.1103/PhysRevA.63.052305},
  url = {https://link.aps.org/doi/10.1103/PhysRevA.63.052305}
}

@article{Buhrman2003,
  title = {Combinatorics and Quantum Nonlocality},
  author = {Buhrman, Harry and H{\o{}}yer, Peter and Massar, Serge and R{\"o}hrig, Hein},
  journal = {Phys. Rev. Lett.},
  volume = {91},
  issue = {4},
  pages = {047903},
  numpages = {4},
  year = {2003},
  month = {Jul},
  publisher = {American Physical Society},
  doi = {10.1103/PhysRevLett.91.047903},
  url = {https://link.aps.org/doi/10.1103/PhysRevLett.91.047903}
}

@article{Maudlin1992,
 ISSN = {02708647},
 URL = {http://www.jstor.org/stable/192771},
  author = {Tim Maudlin},
 journal = {PSA: Proceedings of the Biennial Meeting of the Philosophy of Science Association},
 pages = {404--417},
 publisher = {[Cambridge University Press, Springer]},
 title = {Bell's Inequality, Information Transmission, and Prism Models},
 urldate = {2026-10-06},
 volume = {1992},
 year = {1992}
}

@article{Vertesi2009,
  title = {Lower bound on the communication cost of simulating bipartite quantum correlations},
  author = {V{\'e}rtesi, T. and Bene, E.},
  journal = {Phys. Rev. A},
  volume = {80},
  issue = {6},
  pages = {062316},
  numpages = {5},
  year = {2009},
  month = {Dec},
  publisher = {American Physical Society},
  doi = {10.1103/PhysRevA.80.062316},
  url = {https://link.aps.org/doi/10.1103/PhysRevA.80.062316}
}

@article{Degorre2007,
  title = {Classical simulation of traceless binary observables on any bipartite quantum state},
  author = {Degorre, Julien and Laplante, Sophie and Roland, J{\'e}r{\'e}mie},
  journal = {Phys. Rev. A},
  volume = {75},
  issue = {1},
  pages = {012309},
  numpages = {5},
  year = {2007},
  month = {Jan},
  publisher = {American Physical Society},
  doi = {10.1103/PhysRevA.75.012309},
  url = {https://link.aps.org/doi/10.1103/PhysRevA.75.012309}
}

@article{Pironio2003,
  title = {Violations of Bell inequalities as lower bounds on the communication cost of nonlocal correlations},
  author = {Pironio, Stefano},
  journal = {Phys. Rev. A},
  volume = {68},
  issue = {6},
  pages = {062102},
  numpages = {9},
  year = {2003},
  month = {Dec},
  publisher = {American Physical Society},
  doi = {10.1103/PhysRevA.68.062102},
  url = {https://link.aps.org/doi/10.1103/PhysRevA.68.062102}
}

@article{Cerf2000,
  title = {Classical Teleportation of a Quantum Bit},
  author = {Cerf, N. J. and Gisin, N. and Massar, S.},
  journal = {Phys. Rev. Lett.},
  volume = {84},
  issue = {11},
  pages = {2521--2524},
  numpages = {0},
  year = {2000},
  month = {Mar},
  publisher = {American Physical Society},
  doi = {10.1103/PhysRevLett.84.2521},
  url = {https://link.aps.org/doi/10.1103/PhysRevLett.84.2521}
}

@article{Regev2010,
author = {Regev, Oded and Toner, Ben},
title = {Simulating Quantum Correlations with Finite Communication},
journal = {SIAM Journal on Computing},
volume = {39},
number = {4},
pages = {1562-1580},
year = {2010},
doi = {10.1137/080723909},
URL = {https://doi.org/10.1137/080723909
    },
eprint ={https://doi.org/10.1137/080723909}
}

@article{Steiner2000,
title = {Towards quantifying non-local information transfer: finite-bit non-locality},
journal = {Physics Letters A},
volume = {270},
number = {5},
pages = {239-244},
year = {2000},
issn = {0375-9601},
doi = {https://doi.org/10.1016/S0375-9601(00)00315-7},
url = {https://www.sciencedirect.com/science/article/pii/S0375960100003157},
author = {Michael Steiner}
}

@article{Brunner2014,
  title = {Bell nonlocality},
  author = {Brunner, Nicolas and Cavalcanti, Daniel and Pironio, Stefano and Scarani, Valerio and Wehner, Stephanie},
  journal = {Rev. Mod. Phys.},
  volume = {86},
  issue = {2},
  pages = {419--478},
  numpages = {60},
  year = {2014},
  month = {Apr},
  publisher = {American Physical Society},
  doi = {10.1103/RevModPhys.86.419},
  url = {https://link.aps.org/doi/10.1103/RevModPhys.86.419}
}

@article{Janotta2014,
   title={Generalized probability theories: what determines the structure of quantum theory?},
   volume={47},
   ISSN={1751-8121},
   url={http://dx.doi.org/10.1088/1751-8113/47/32/323001},
   DOI={10.1088/1751-8113/47/32/323001},
   number={32},
   journal={Journal of Physics A: Mathematical and Theoretical},
   publisher={IOP Publishing},
   author={Janotta, Peter and Hinrichsen, Haye},
   year={2014},
   month=July, pages={323001} }

@article{Barrett2007,
  title = {Information processing in generalized probabilistic theories},
  author = {Barrett, Jonathan},
  journal = {Phys. Rev. A},
  volume = {75},
  issue = {3},
  pages = {032304},
  numpages = {21},
  year = {2007},
  month = {Mar},
  publisher = {American Physical Society},
  doi = {10.1103/PhysRevA.75.032304},
  url = {https://link.aps.org/doi/10.1103/PhysRevA.75.032304}
}

@article{Buhrman2010,
  title = {Nonlocality and communication complexity},
  author = {Buhrman, Harry and Cleve, Richard and Massar, Serge and de Wolf, Ronald},
  journal = {Rev. Mod. Phys.},
  volume = {82},
  issue = {1},
  pages = {665--698},
  numpages = {0},
  year = {2010},
  month = {Mar},
  publisher = {American Physical Society},
  doi = {10.1103/RevModPhys.82.665},
  url = {https://link.aps.org/doi/10.1103/RevModPhys.82.665}
}

@article{Brassard1999,
  title = {Cost of Exactly Simulating Quantum Entanglement with Classical Communication},
  author = {Brassard, Gilles and Cleve, Richard and Tapp, Alain},
  journal = {Phys. Rev. Lett.},
  volume = {83},
  issue = {9},
  pages = {1874--1877},
  numpages = {0},
  year = {1999},
  month = {Aug},
  publisher = {American Physical Society},
  doi = {10.1103/PhysRevLett.83.1874},
  url = {https://link.aps.org/doi/10.1103/PhysRevLett.83.1874}
}

@misc{deGoisSantosVieira2026,
   author        = {de Gois, Carlos and Santos, Thyago S. R. and Vieira, Carlos},
   title         = {Quantum communication and Bell nonlocality require infinite classical communication to simulate},
   year          = {2026},
   eprint        = {2609.04182},
   archivePrefix = {arXiv},
   primaryClass  = {quant-ph}
 }

@article{BrassardCleveTapp1999,
  author        = {Brassard, Gilles and Cleve, Richard and Tapp, Alain},
  title         = {Cost of Exactly Simulating Quantum Entanglement
                   with Classical Communication},
  journal       = {Physical Review Letters},
  volume        = {83},
  number        = {9},
  pages         = {1874--1877},
  year          = {1999},
  doi           = {10.1103/PhysRevLett.83.1874},
  eprint        = {quant-ph/9901035},
  archivePrefix = {arXiv}
}

@Article{Brassard2018,
AUTHOR = {Brassard, Gilles and Devroye, Luc and Gravel, Claude},
TITLE = {Remote Sampling with Applications to General Entanglement Simulation},
JOURNAL = {Entropy},
VOLUME = {21},
YEAR = {2019},
NUMBER = {1},
ARTICLE-NUMBER = {92},
URL = {https://www.mdpi.com/1099-4300/21/1/92},
PubMedID = {33266808},
ISSN = {1099-4300},
DOI = {10.3390/e21010092}
}

@article{TonerBacon2003,
  author  = {Toner, Benjamin F. and Bacon, Dave},
  title   = {Communication Cost of Simulating {B}ell Correlations},
  journal = {Physical Review Letters},
  volume  = {91},
  number  = {18},
  pages   = {187904},
  year    = {2003},
  doi     = {10.1103/PhysRevLett.91.187904},
  eprint  = {quant-ph/0304076},
  archivePrefix = {arXiv}
}

@article{RennerTavakoliQuintino2023,
  author  = {Renner, Martin J. and Tavakoli, Armin and Quintino, Marco T{\'u}lio},
  title   = {Classical Cost of Transmitting a Qubit},
  journal = {Physical Review Letters},
  volume  = {130},
  number  = {12},
  pages   = {120801},
  year    = {2023},
  doi     = {10.1103/PhysRevLett.130.120801},
  eprint  = {2207.02244},
  archivePrefix = {arXiv},
  primaryClass  = {quant-ph}
}

\appendix
\begin{widetext}

\section{Finite average classical dimension is impossible for \texorpdfstring{$d\geq4$}{d >= 4}}
\label{sec:finite-average-dimension-impossible}

We first recall the fixed-fidelity estimate proved as Lemma~B.2 in
Ref.~\cite{deGoisSantosVieira2026}.  For $0\leq t<1$, let $\eta_t$
be the unitarily invariant probability
measure on pairs $(x,y)\in\mathbb{CP}^{3}\times\mathbb{CP}^{3}$ with
fixed fidelity
\begin{equation}
    \abs{\braket{x}{y}}^2=t.
\end{equation}
Equivalently, if $W\in U(4)$ is Haar distributed and
$\ket{0},\ket{1}$ are fixed orthogonal vectors, then $\eta_t$ is the
law of
\begin{equation}
    x=[W\ket{0}],
    \qquad
    y=\left[W\left(\sqrt t\ket{0}+\sqrt{1-t}\ket{1}\right)\right].
    \label{eq:fixed-fidelity-pair}
\end{equation}
For measurable functions $A,B:\mathbb{CP}^{3}\to[0,1]$, define
\begin{equation}
    a_{A,B}(t)
    :=
    \int_{\mathbb{CP}^{3}\times\mathbb{CP}^{3}}
    A(x)B(y)\,\mathrm d\eta_t(x,y).
    \label{eq:fixed-fidelity-average}
\end{equation}

\begin{lemma}[Fixed-fidelity zero-slope lemma]
\label{lem:fixed-fidelity-zero-slope}
Let $A,B:\mathbb{CP}^{3}\to[0,1]$ be measurable.  If
$a_{A,B}(0)=0$, then
\begin{equation}
    \lim_{t\to0^+}\frac{a_{A,B}(t)}{t}=0,
    \label{eq:zero-slope-limit}
\end{equation}
and, for $0<t\leq1/2$,
\begin{equation}
    0\leq\frac{a_{A,B}(t)}{t}\leq8.
    \label{eq:zero-slope-domination}
\end{equation}
\end{lemma}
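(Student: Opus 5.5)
The plan is to exploit the $U(4)$-invariance of $\eta_t$ and diagonalise the fixed-fidelity averaging operator in spherical harmonics on $\mathbb{CP}^3$. Let $T_t$ be the Markov operator
\begin{equation*}
(T_tB)(x):=\mathbb E\,B\!\left(\left[\sqrt t\,x+\sqrt{1-t}\,z\right]\right),
\end{equation*}
with $z$ uniform on the unit sphere of $x^\perp\cong\mathbb C^3$. By \cref{eq:fixed-fidelity-pair}, $a_{A,B}(t)=\langle A,T_tB\rangle_{L^2(\mathbb{CP}^3)}$. Since $T_t$ commutes with $U(4)$, it acts as a scalar $\varphi_k(t)$ on each isotypic component $H_k$ (the bidegree-$(k,k)$ harmonics). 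This scalar is the normalised zonal spherical function of $\mathbb{CP}^3$, namely the Jacobi polynomial
\begin{equation*}
\varphi_k(t)=\frac{P_k^{(2,0)}(2t-1)}{P_k^{(2,0)}(1)} .
\end{equation*}
Hence
\begin{equation*}
a_{A,B}(t)=\sum_{k\ge0}\varphi_k(t)\,c_k,\qquad c_k:=\langle A_k,B_k\rangle,\qquad \sum_k|c_k|\le\|A\|_2\|B\|_2\le1 .
\end{equation*}

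\textbf{Domination bound.} First I would prove \cref{eq:zero-slope-domination} without the spectral expansion, by a density comparison. The law of $|\braket{x}{y}|^2$ under the product Haar measure is $\mathrm{Beta}(1,3)$, with density $3(1-t)^2$. I would build a coupling in which a fidelity-$t$ pair is obtained from an orthogonal pair by rotating $y$ inside the two-plane $\mathrm{span}\{x,y\}$. Then, for $t\le 1/2$, the event $\{A(x)B(y)>0\}$ under $\eta_t$ must come from pairs at Fubini--Study distance $O(\sqrt t)$ from the null set $\{A(x)B(y)=0\}$ of $\eta_0$. I would then bound the relevant measure by the volume of a tube of angular radius $\arcsin\sqrt t$ in the complex $3$-dimensional fibre $\{y\perp x\}$. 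That tube volume scales like $t$, and tracking constants should give the $8$. The dimension $d=4$ enters exactly here: the codimension makes the tube measure linear in $t$ rather than of order $\sqrt t$.

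\textbf{Zero slope.} For the limit \cref{eq:zero-slope-limit} I would write
\begin{equation*}
\frac{a(t)}{t}=\frac{a(t)-a(0)}{t}=\sum_k\frac{\varphi_k(t)-\varphi_k(0)}{t}\,c_k ,
\end{equation*}
and proceed in two steps.
\begin{enumerate}
\item Show that, for each fixed $k$, the term converges to $\varphi_k'(0)c_k$, and show that $\sum_k\varphi_k'(0)c_k=0$. The latter should follow from $a\ge0$, $a(0)=0$ and a symmetric second-order argument: the first variation of $A(x)B(y)$ away from orthogonality averages out under the phase symmetry $z\mapsto e^{\ii\theta}z$ in the coupling, so only $O(t)$ radial contributions survive. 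Combined with the tube estimate above, these contributions should be $o(t)$, by Lebesgue differentiation on the fibre for the measurable $A,B$.
\item Use the domination bound (uniform in $t\le1/2$), rather than any uniform bound on $|\varphi_k'|$, to justify passing the limit through the sum via a truncation-plus-tail argument.
\end{enumerate}

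\textbf{Main obstacle.} The main obstacle is the interchange of limit and infinite sum in the second step. The tail $\sum_{k>K}$ is not controlled by $\sum_k|c_k|$ alone, because $\varphi_k(t)-\varphi_k(0)$ is only $O(\min\{1,k^2t\})$. I would first try to handle it directly in physical space. Approximate $A$ and $B$ in $L^1$ by continuous functions $A^\varepsilon,B^\varepsilon$. The resulting error in $a(t)/t$ is bounded uniformly by $8$ times the approximation error, using the domination bound applied to differences after splitting into positive parts. The zero-slope property for continuous functions reduces to a first-order Taylor expansion of $A^\varepsilon(x)B^\varepsilon(y_t)$ at $y_0=[z]$, where the linear term integrates to zero by the phase symmetry of $z$.
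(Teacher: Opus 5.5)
\textbf{There is a genuine gap: the proposal establishes neither claim, although the spectral setup you start from can be completed.} The paper does not prove this lemma; it quotes it as Lemma B.2 of Ref.~\cite{deGoisSantosVieira2026}, so your proposal has to stand on its own.

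\emph{Where the argument fails.} Your coupling/tube argument for \eqref{eq:zero-slope-domination} has no content for merely measurable $A,B$. Every fidelity-$t$ pair lies within Fubini--Study distance $\arcsin\sqrt t$ of an orthogonal pair, and without regularity this proximity says nothing about $A(x)B(y)$. A tube-volume bound needs a fixed lower-dimensional exceptional set, and measurable functions do not provide one. For \eqref{eq:zero-slope-limit}, the phase symmetry $z\mapsto e^{\ii\theta}z$ only cancels the $O(\sqrt t)$ part of $T_tB$. The $O(t)$ coefficient $\sum_k\varphi_k'(0)c_k$ is exactly what must be shown to vanish, and $a\geq0$, $a(0)=0$ only tell you it is nonnegative. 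The fallback via continuous approximants also fails: $A^\varepsilon,B^\varepsilon$ and the error pairs do not satisfy $a(0)=0$. Without that hypothesis $a(t)/t$ is unbounded (take $A=B=1$), so the error is not controlled by $8$ times the approximation error. Moreover, for smooth $A,B$ with $a(0)=0$ the zero slope holds in every dimension for a trivial reason: openness of $\{A>0\}$ forces $B=0$ near $x^\perp$. By contrast, on $\mathbb{CP}^2$ the indicator of $\{x:|\braket{0}{x}|^2>1/2\}$ paired with itself gives $a(t)\asymp t$. So the content of the lemma lies entirely in the measurable case, which your argument never reaches.

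\emph{How to complete it.} Your ``main obstacle'' comes from a lossy estimate. Near $t=0$, $\varphi_k$ has amplitude $O(k^{-2})$, so $|\varphi_k(t)-\varphi_k(0)|=O(\min\{t,k^{-2}\})$, not $O(\min\{1,k^2t\})$. Explicitly,
\begin{equation*}
\varphi_k'(t)=\frac{(k+3)\,P^{(3,1)}_{k-1}(2t-1)}{\binom{k+2}{2}},\qquad \bigl|P^{(3,1)}_{k-1}(u)\bigr|=O(k)\ \text{ for } u\in[-1,0],
\end{equation*}
so $C:=\sup_k\sup_{0\leq t\leq1/2}|\varphi_k'(t)|<\infty$. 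This is where $d=4$ enters: on $\mathbb{CP}^{d-1}$ the analogous quantity is of order $k^{4-d}$.
\begin{itemize}
\item \emph{Domination.} With $a(0)=0$ and $\sum_k|c_k|\leq1$, this gives $0\leq a(t)\leq Ct$ directly.
\item \emph{Existence of the limit.} Dominated convergence over $k$ gives $\lim_{t\to0^+}a(t)/t=\sum_k\varphi_k'(0)c_k$.
\item \emph{Identifying the limit.} From $P_n^{(\alpha,\beta)}(-1)=(-1)^n\binom{n+\beta}{n}$ one gets $\varphi_k'(0)=-k(k+3)\varphi_k(0)$. Hence the limit equals $-\langle A,\Delta'g\rangle$, where $g:=T_0B\geq0$ and $\Delta'$ acts as $k(k+3)$ on $H_k$ (a multiple of minus the Fubini--Study Laplacian).
\item \emph{Vanishing.} Since $k(k+3)|\varphi_k(0)|\leq2$, we have $g\in H^2$. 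The hypothesis $\langle A,g\rangle=a(0)=0$ gives $g=0$ a.e.\ on $\{A>0\}$. First and second derivatives of an $H^2$ function vanish a.e.\ on its zero set, so $\Delta'g=0$ a.e.\ on $\{A>0\}$ and the limit is $0$.
\end{itemize}
The specific constant $8$ in \eqref{eq:zero-slope-domination} then reduces to an explicit bound on $C$.
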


The lemma is quoted from Ref.~\cite{deGoisSantosVieira2026}; its proof
is not repeated here. As we now show, this lemma rules out not only a fixed
finite message alphabet, as shown in the latter reference, but also a random
alphabet with finite mean dimension.

\begin{theorem}[No finite average classical dimension]
For $d\geq4$, every measurable one-way exact simulation of pure
states and binary projective measurements with a rank-one effect
has infinite average classical dimension:
\begin{equation}
 \Exp[\AlpS_\lambda]=\infty.
    \label{eq:infinite-average-classical-dimension}
\end{equation}
\end{theorem}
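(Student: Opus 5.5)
The plan is to follow the argument of Theorem~B.1 in Ref.~\cite{deGoisSantosVieira2026}. The single change is that the fixed bound on the alphabet size is replaced by a dominated-convergence argument. That argument only needs $\Exp[\AlpS_\lambda]<\infty$.

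First, I reduce to $d=4$. Fix a four-dimensional subspace $\mathbb C^4\subseteq\mathbb C^d$. Restrict Alice's inputs to pure states $x\in\mathbb{CP}^3$ in this subspace. Restrict Bob's measurements to the binary projective POVMs $\{\ketbra{y},\one-\ketbra{y}\}$ with $y\in\mathbb{CP}^3$. An exact simulation for dimension $d$ restricts to an exact simulation of this subfamily, with the same $\AlpS_\lambda$. So it suffices to treat $d=4$. For each $\lambda$ and each $m\in\Alp{\lambda}$, define
\[
A_{\lambda,m}(x):=\mathbf 1[M_\lambda(x)=m],\qquad B_{\lambda,m}(y):=q(1\mid y,\lambda,m),
\]
both measurable with values in $[0,1]$. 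Exactness in \cref{eq:exact-simulation-general} then reads
\[
\abs{\braket{x}{y}}^2=\int_\Omega\sum_{m=1}^{\AlpS_\lambda}A_{\lambda,m}(x)B_{\lambda,m}(y)\,\mathrm d\mu(\lambda).
\]

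Next, I integrate this identity against $\eta_t$. The joint measurability assumptions and Tonelli's theorem (all terms are nonnegative) allow exchanging the integrals, which gives
\[
t=\int_\Omega\sum_{m=1}^{\AlpS_\lambda}a_{A_{\lambda,m},B_{\lambda,m}}(t)\,\mathrm d\mu(\lambda)
\qquad\text{for every }0\le t<1.
\]
They also guarantee that $\lambda\mapsto a_{A_{\lambda,m},B_{\lambda,m}}(t)$ is measurable. At $t=0$ the left side vanishes and every summand is nonnegative. Hence for $\mu$-almost every $\lambda$ and every $m$ we have $a_{A_{\lambda,m},B_{\lambda,m}}(0)=0$. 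I discard the null set.

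Lemma~\ref{lem:fixed-fidelity-zero-slope} now applies to each pair. Dividing by $t\in(0,1/2]$ gives
\[
1=\int_\Omega g_t(\lambda)\,\mathrm d\mu(\lambda),\qquad g_t(\lambda):=\sum_{m}\frac{a_{A_{\lambda,m},B_{\lambda,m}}(t)}{t}.
\]
By the lemma, $g_t(\lambda)\to0$ pointwise as $t\to0^+$, since each of the finitely many terms tends to zero. The lemma also gives $0\le g_t(\lambda)\le 8\AlpS_\lambda$. Suppose $\Exp[\AlpS_\lambda]<\infty$. Then $8\AlpS_\lambda$ is an integrable dominating function, and dominated convergence (applied along any sequence $t_n\downarrow0$) forces $1=\lim_n\int g_{t_n}\,\mathrm d\mu=0$. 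This is a contradiction, so $\Exp[\AlpS_\lambda]=\infty$.

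The main obstacle is not the limit argument but the measure-theoretic bookkeeping. One must check that $\lambda\mapsto\sum_m a_{A_{\lambda,m},B_{\lambda,m}}(t)$ is measurable even though the number of summands $\AlpS_\lambda$ varies with $\lambda$. I would handle this by writing the sum as $\sum_{m\ge1}\mathbf 1[m\le\AlpS_\lambda]\,a_{A_{\lambda,m},B_{\lambda,m}}(t)$. The key point is that the domination constant in the lemma is uniform over all measurable $A,B$, which is exactly what makes the per-$\lambda$ bound linear in $\AlpS_\lambda$. That linearity is why finite mean alphabet size, rather than a finite worst case, is the natural threshold.
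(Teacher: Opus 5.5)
Your proof is correct and follows essentially the same route as the paper: reduce to $d=4$, integrate the exactness identity against $\eta_t$ via Tonelli, apply the zero-slope lemma to each message pair, and use dominated convergence with the integrable bound $8\AlpS_\lambda$ to contradict $\int g_t\,\mathrm d\mu=1$. The only difference is cosmetic: the paper first groups the shared randomness by alphabet size $K=k$ and applies dominated convergence in two stages (first within each $\Lambda_k$, then over $k$ with the summable bound $8w_kD_k$), whereas you do it in a single step over $\Omega$.
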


\begin{proof}
It suffices to prove the claim for $d=4$.  For $d>4$, restrict the
preparations to a fixed four-dimensional subspace and take binary
measurements whose rank-one effect is supported on that subspace.

Group the shared seeds by their finite alphabet size. Equivalently,
let $K$ be countable, with $\Prob(K=k)=w_k$, and let the alphabet
dimension conditional on $K=k$ be $D_k<\infty$. Then
$\Exp[\AlpS_\lambda]=\sum_{k\geq1}w_kD_k$.
Assume that
\begin{equation}
    \sum_{k\geq1}w_kD_k<\infty.
    \label{eq:finite-mean-dimension-assumption}
\end{equation}
Conditionally on $K=k$, let $\lambda\in\Lambda_k$ denote all remaining
shared randomness, with probability measure $\mu_k$.  For a pure preparation
$x\in\mathbb{CP}^{3}$ and the binary measurement
$\{\ketbra{y},\one-\ketbra{y}\}$, write
\begin{equation}
    A_{\lambda,k,m}(x)
    :=\mathbf{1}_{\{M_{\lambda,k}(x)=m\}},
    \qquad
    B_{\lambda,k,m}(y)
    :=q(0\mid y,m,\lambda,k).
\end{equation}
In other words, $A_{\lambda,k,m}(x)\in \{0,1\}$ equals one if and only if $m$ is the message that Alice sends to Bob when the shared randomness $(\lambda,k)$ is sampled and the input $x$ is given. On the other hand $B_{\lambda,k,m}(y)$ represents the probability of Bob's decoder to output $0$ given Alice's message, the POVM measurement choice (through $\ket{y}$) and the shared randomness.  
We assume as in \cite{deGoisSantosVieira2026} that these are both measurable $[0,1]$-valued functions. If the simulation is exact then it has to satisfy
\begin{equation}
    \abs{\braket{x}{y}}^2
    =
    \sum_{k\geq1}w_k
    \int_{\Lambda_k}
    \sum_{m}=1^{D_k}
    A_{\lambda,k,m}(x)B_{\lambda,k,m}(y)
    \,\mu_k(\dmu\lambda).
    \label{eq:variable-dimension-overlap-decomposition}
\end{equation}
For each $(\lambda,k,m)$, let
\begin{equation}
    f_{\lambda,k,m}(t)
    :=a_{A_{\lambda,k,m},B_{\lambda,k,m}}(t).
\end{equation}
Averaging \cref{eq:variable-dimension-overlap-decomposition} with
respect to $\eta_t$ and using Tonelli's theorem yields
\begin{equation}
    t
    =
    \sum_{k\geq1}w_k
    \int_{\Lambda_k}
    \sum_{m=1}^{D_k}f_{\lambda,k,m}(t)
    \,\mu_k(\dmu\lambda)
    =
    \sum_{k\geq1}w_k
    \sum_{m=1}^{D_k}f_{k,m}(t),
    \label{eq:variable-dimension-fixed-fidelity-identity}
\end{equation}
where
\begin{equation}
    f_{k,m}(t):=
    \int_{\Lambda_k}f_{\lambda,k,m}(t)\,\mu_k(\dmu\lambda).
\end{equation}

Setting $t=0$ in
\cref{eq:variable-dimension-fixed-fidelity-identity}, non-negativity
implies, for every $k$ with $w_k>0$,
\begin{equation}
    f_{\lambda,k,m}(0)=0
    \qquad
    \text{for $\mu_k$-almost every $\lambda$ and every $m\in[D_k]$}.
\end{equation}
Indices with $w_k=0$ may be omitted. Thus we can apply
\cref{lem:fixed-fidelity-zero-slope} and obtain
\begin{equation}\label{Eq:Lemma-application}
    \lim_{t\to0^+}\frac{f_{\lambda,k,m}(t)}{t}=0,
    \qquad
    0\leq\frac{f_{\lambda,k,m}(t)}{t}\leq8
    \quad (0<t\leq\tfrac12).
\end{equation}
By dominated convergence with respect to $\mu_k$ and finiteness
of $D_k$, this implies the analogous statement for
$f_k(t):=\sum_{m=1}^{D_k}f_{k,m}(t)$:
\begin{equation} \label{Eq:Lemma-application2}
    \lim_{t\to0^+}\frac{f_k(t)}{t}=0,
    \qquad
    0\leq\frac{f_k(t)}{t}\leq8D_k
    \quad (0<t\leq\tfrac12).
\end{equation}

We then use the crucial assumption $\sum_k w_kD_k<\infty$
to obtain the summable bound
\begin{equation}
    0\leq w_k\frac{f_k(t)}{t}\leq8w_kD_k,
    \qquad
    \sum_{k\geq1}8w_kD_k<\infty.
\end{equation}
Dominated convergence therefore allows the limit to pass through
the sum over $k$, giving
\begin{equation}
   \lim_{t\to0^+} \sum_{k\geq1}w_k\frac{f_k(t)}{t}=
    \sum_{k\geq1}w_k\,0=0
\end{equation}
contradicting \cref{eq:variable-dimension-fixed-fidelity-identity}.
Therefore \cref{eq:infinite-average-classical-dimension} is necessary.
\end{proof}
 
As the point is rather subtle, we write down explicitly the mechanism through which our protocol for the exact simulation of quantum probability distribution in arbitrary dimension escapes this obstruction. All the passages leading to Eq.\eqref{Eq:Lemma-application2} hold without assuming that $\Exp[\AlpS_\lambda]<\infty$, thus the functions $f_{k}(t)$ still satisfy the latter. However, in our case $D_k\asymp2^{n_dk}$ and $w_k=2^{-k}$,
with $n_d=2d-2\geq2$, so that
$\sum_k w_kD_k=\infty$.
To illustrate why the bound $f_k(t)/t\leq8D_k$ is insufficient
in this regime, take $D_k=2^{n_dk}$ and consider the following
functions on $[0,\tfrac12]$:
\begin{equation}
    f_k(t):=
    \begin{cases}
        2^{k+1}t-1,
        & 2^{-(k+1)}\leq t\leq2^{-k},\\
        2-2^kt,
        & 2^{-k}<t\leq2^{-(k-1)},\\
        0,
        & \text{otherwise}.
    \end{cases}
\end{equation}
Each function is continuous, takes values in $[0,1]$, and vanishes
in a neighborhood of $t=0$. In particular,
\begin{equation}
    \lim_{t\to0^+}\frac{f_k(t)}{t}=0,
    \qquad
    0\leq\frac{f_k(t)}{t}\leq2^k\leq8D_k.
\end{equation}
Thus the functions satisfy the same zero-slope condition and
upper bound as in the proof. 
Nevertheless, for $t\in[2^{-(n+1)},2^{-n}]$, only the terms
with indices $n$ and $n+1$ can be non zero, and thus
\begin{equation}
    \begin{aligned}
        \sum_{k\geq1}w_kf_k(t)
        &=
        2^{-n}(2^{n+1}t-1)
        +2^{-(n+1)}(2-2^{n+1}t)=t.
    \end{aligned}
\end{equation}
This illustrates how having a diverging average classical dimension removes the obstruction.

\section{Mass-deficit Minkowski dimension and nested partitions}
\label{app:dimension-partitionability}

\begin{lemma}
\label{lem:dimension-partitionability}
The pair $(X,\cF)$ has finite mass-deficit Minkowski dimension if and
only if $X$ is well-partitionable with respect to $\cF$. Further, for any $\nu>4d_{\mathrm{mdM}}(X,\cF)$, one can
construct nested partitions $\{\cC_r\}_{r\geq0}$ and constants
$a,C>0$ such that, for every integer $r\geq0$,
\begin{equation}
    |\cC_r|\leq2^{a+\nu r},
    \qquad
    \kappa_{\cF}(\cC_r)\leq C2^{-r}.
\end{equation}
\end{lemma}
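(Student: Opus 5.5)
The converse direction is immediate, so I would dispose of it first. If $X$ is well-partitionable with constants $a,\nu,C$, then for $\epsilon\in(0,C]$ choose $r=\lceil\log(C/\epsilon)\rceil$. The cells of $\cC_r$ have mass deficit at most $C2^{-r}\leq\epsilon$ and form a cover, so
\begin{equation*}
N_{\cF}(\epsilon)\leq2^{a+\nu r}\leq2^{a+\nu}(C/\epsilon)^{\nu}.
\end{equation*}
Taking $\log$ and dividing by $\log(1/\epsilon)$ gives $d_{\mathrm{mdM}}(X,\cF)\leq\nu<\infty$.

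For the forward direction, fix $D>d_{\mathrm{mdM}}(X,\cF)$ with $4D<\nu$ arbitrarily close to $4d_{\mathrm{mdM}}$. There is an $\epsilon_0$ such that $N_{\cF}(\epsilon)\leq\epsilon^{-D}$ for all $\epsilon<\epsilon_0$. Since $N_{\cF}$ is nonincreasing in $\epsilon$, it is finite for every $\epsilon>0$, and enlarging constants gives $N_{\cF}(\epsilon)\leq c\,\epsilon^{-D}$ for all $\epsilon\in(0,1]$.

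Two elementary observations drive the construction.
\begin{enumerate}[label=(\roman*)]
\item A finite cover $\{U_1,\dots,U_N\}$ can be disjointified into a partition $\{U_i\setminus\bigcup_{j<i}U_j\}\setminus\{\emptyset\}$ with at most $N$ cells. Each new cell is a subset of some $U_i$, and $\kappa_{\cF}$ is monotone under inclusion, so the deficit bound is preserved.
\item The common refinement $\{P\cap R\}\setminus\{\emptyset\}$ of two partitions is again a partition. Its cardinality is at most the product of the two cardinalities, and its deficit is at most the minimum of the two deficits, again by monotonicity.
\end{enumerate}
Refining with a new cover at every level $r$ would make $|\cC_r|$ grow like $\prod_{j\leq r}N(2^{-j})=2^{O(r^2)}$, which is too large. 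So the plan is to refine only at dyadic times.

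Let $\mathcal Q_j$ be the disjointified optimal cover at scale $2^{-2^{j+1}}$, for $j\geq0$. Define $\mathcal D_j$ as the common refinement of $\mathcal Q_0,\dots,\mathcal Q_j$, and set $\cC_0=\{X\}$ and $\cC_r:=\mathcal D_j$ for $2^j\leq r<2^{j+1}$. These partitions are nested, since each $\mathcal D_{j+1}$ refines $\mathcal D_j$ and consecutive levels are either equal or one refines the other. For $r$ in that range,
\begin{equation*}
\kappa_{\cF}(\cC_r)\leq2^{-2^{j+1}}\leq2^{-r},
\end{equation*}
and
\begin{equation*}
|\cC_r|\leq\prod_{i=0}^{j}c\,2^{D2^{i+1}}\leq c^{j+1}2^{4D\cdot2^{j}}\leq c^{j+1}2^{4Dr}.
\end{equation*}
The factor $c^{j+1}=2^{O(\log r)}$ is absorbed into $2^{a+(\nu-4D)r}$ for a suitable constant $a$, since $\nu>4D$. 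We may then take $C=1$, or a larger $C$ if small $r$ requires it. This yields exactly the factor $4$ appearing in the statement: a factor $2$ from the geometric sum $\sum_{i\leq j}2^{i+1}\approx2^{j+2}$, and a factor $2$ from the lag between $r$ and $2^j$.

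The main obstacle is this cardinality bookkeeping. Naive refinement blows up quadratically in the exponent, and the dyadic-scale trick is what keeps the growth linear in $r$. The remaining points need only care about constants: the finiteness of $N_{\cF}$ at large $\epsilon$, and the small-$r$ cases, which are handled by choosing $a$ and $C$ large enough.
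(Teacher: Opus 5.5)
Your proof is correct and follows essentially the same route as the paper: disjointify optimal covers, take common refinements only at doubly-exponentially spaced scales (the paper uses scale indices $m2^k$), and assign each level $r$ the first refinement whose scale index is at least $r$; as you note, the factor $4$ comes from the geometric sum and from the lag. The only difference is cosmetic. The paper starts its scales at the threshold $m$ beyond which $N_{\cF}(2^{-n})\leq 2^{\alpha n}$ holds, so no multiplicative constant ever appears and $\nu=4\alpha$ comes out directly. You instead use a uniform constant $c$ and absorb the resulting $c^{j+1}$ factor into the strict gap $\nu>4D$.
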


\begin{proof}
Suppose first that nested partitions satisfy
\cref{eq:partitionability-bounds}. For sufficiently small
$\epsilon>0$, choose $r=\lceil\log(C/\epsilon)\rceil$. The cells
of $\cC_r$ cover $X$ with mass deficit at most $\epsilon$, so
\begin{equation}
    N_{\cF}(\epsilon)\leq|\cC_r|
    \leq2^a\left(\frac{2C}{\epsilon}\right)^\nu.
\end{equation}
Taking the upper limit in \cref{eq:mass-deficit-minkowski-dimension}
gives $d_{\mathrm{mdM}}(X,\cF)\leq\nu$.

Conversely, suppose $d_{\mathrm{mdM}}(X,\cF)<\infty$ and fix
$\alpha>d_{\mathrm{mdM}}(X,\cF)$. There is an integer $m\geq1$
such that
\begin{equation}
    N_{\cF}(2^{-n})\leq2^{\alpha n}\qquad(n\geq m).
    \label{eq:covering-polynomial-bound}
\end{equation}
A finite cover can be made into a partition by successively
subtracting the preceding sets and discarding empty pieces.
Monotonicity of the mass deficit under inclusion preserves the bound.

Set $m_k=m2^k$, for $k\geq0$, and choose finite partitions
$\mathcal Q_k$ satisfying
\begin{equation}
    |\mathcal Q_k|\leq2^{\alpha m_k},
    \qquad\kappa_{\cF}(\mathcal Q_k)\leq2^{-m_k}.
\end{equation}
Let $\mathcal R_k$ be their common refinement through level $k$:
\begin{equation}
    \mathcal R_k:=
    \{Q_0\cap\cdots\cap Q_k:Q_i\in\mathcal Q_i\}
    \setminus\{\emptyset\}.
\end{equation}
These partitions are nested and satisfy
\begin{equation}
    |\mathcal R_k|
    \leq\prod_{i=0}^k|\mathcal Q_i|
    \leq2^{\alpha(2m_k-m)},
    \qquad\kappa_{\cF}(\mathcal R_k)\leq2^{-m_k}.
    \label{eq:sparse-common-refinements}
\end{equation}
The exponentially spaced indices $m_k$ keep the cost of taking
common refinements exponential in the final index.

For $r\geq1$, let $k(r)$ be the smallest $k$ with $m_k\geq r$,
and set $\cC_r:=\mathcal R_{k(r)}$. Then $k(r)$ is nondecreasing,
so these partitions are nested. Since $m_{k(r)}\geq r$ and
$m_{k(r)}\leq m+2r$, \cref{eq:sparse-common-refinements} gives
\begin{equation}
    |\cC_r|\leq2^{\alpha m+4\alpha r},
    \qquad\kappa_{\cF}(\cC_r)\leq2^{-r}.
    \label{eq:dimension-to-partition-bounds}
\end{equation}

Set $\cC_0=\{X\}$. Since $\kappa_{\cF}(X)\leq1$,
\cref{eq:partitionability-bounds} holds with
$a=\alpha m$, $\nu=4\alpha$, and $C=1$, completing the proof.
\end{proof}

\section{Nested partitions of complex projective space}
\label{app:quantum-partitions}
We construct the partitions used in the quantum simulation. We emphasize that this particular choice of nested partition is not optimal for our scope. It has been chosen instead for its simplicity.   
Throughout this appendix, $d\geq2$ and $\diam_D$ denotes diameter
in trace distance, as defined in \cref{sec:nested-colorings}.

\begin{lemma}[Nested coordinate partitions]
\label{lem:nested-partitions}
There are finite nested partitions
$\{\cC_r\}_{r\geq0}$ of $\Pure$ such that
\begin{equation}
    |\cC_r|\leq d\,2^{n_dr},\qquad
    \diam_D(\cC_r)\leq C_d2^{-r},
    \label{eq:dyadic-coordinate-partition-size}
\end{equation}
where $n_d=2d-2$ and $C_d=4\sqrt{n_d}$.
Every cell at a positive level has at most $2^{n_d}$ children.
\end{lemma}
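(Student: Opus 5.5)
We build the partitions from $d$ affine coordinate charts of $\Pure$, one for each dominant component of the state vector. For $\psi\in\mathbb C^d$ normalized, let $j(\psi)$ be the smallest index with $|\psi_j|=\max_i|\psi_i|$. This splits $\Pure$ into $d$ disjoint pieces $U_1,\dots,U_d$, and $\cC_1$ will refine $\{U_j\}$ (this is the factor $d$ in the cardinality).

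On $U_j$ we fix the phase so that $\psi_j>0$. We use the affine coordinates $z_i=\psi_i/\psi_j$ for $i\neq j$. Each $z_i$ satisfies $|z_i|\leq1$, so its real and imaginary parts lie in $[-1,1]$. This gives an injective map $\Phi_j:U_j\to[-1,1]^{n_d}$ with $n_d=2(d-1)$.

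At level $r\geq1$ we take the half-open dyadic cubes of side $2^{1-r}$ partitioning $[-1,1)^{n_d}$, with the boundary handled by closing the last interval. The cells of $\cC_r$ are the nonempty sets $\Phi_j^{-1}(Q)$.

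\textbf{Counting and nesting.} There are $2^{n_d r}$ dyadic cubes per chart, so $|\cC_r|\leq d\,2^{n_dr}$. Each dyadic cube splits into $2^{n_d}$ children, which gives the statement about children. Nestedness is immediate: $\cC_0=\{\Pure\}$, every level-one cell lies in a single $U_j$, and dyadic cubes are nested.

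\textbf{Diameter bound.} This is the main step. Let $\psi,\phi\in U_j$ have coordinate vectors $z,w\in\mathbb C^{d-1}$ in the same cube. Then $\|z-w\|_2\leq\sqrt{n_d}\,2^{1-r}$, since the Euclidean diameter of a cube of side $s$ in $\mathbb R^{n_d}$ is $s\sqrt{n_d}$.

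Write $\tilde\psi=(1,z)$ and $\tilde\phi=(1,w)$, placing $1$ in slot $j$. Normalized representatives are $\tilde\psi/\|\tilde\psi\|$ and $\tilde\phi/\|\tilde\phi\|$, with $\|\tilde\psi\|,\|\tilde\phi\|\geq1$.

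We use the pure-state bound $D(\psi,\phi)\leq\|\hat\psi-\hat\phi\|_2$ for unit vectors. This holds because
\[
1-|\langle\hat\psi|\hat\phi\rangle|^2\leq 2(1-\mathrm{Re}\langle\hat\psi|\hat\phi\rangle)=\|\hat\psi-\hat\phi\|_2^2 .
\]

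For vectors $u,v$ with $\|u\|,\|v\|\geq1$, normalization is $2$-Lipschitz:
\[
\Big\|\tfrac{u}{\|u\|}-\tfrac{v}{\|v\|}\Big\|\leq 2\|u-v\| .
\]
Hence
\[
D(\psi,\phi)\leq 2\|z-w\|_2\leq 4\sqrt{n_d}\,2^{-r}=C_d2^{-r}.
\]
At $r=0$ the bound holds trivially, since $D\leq1\leq 4\sqrt{n_d}$. The proof is then complete.

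The only delicate points are the boundary conventions that make the pieces a genuine partition, and the normalization constant. The Lipschitz step uses
\[
\|u/\|u\|-v/\|v\|\|\leq \|u-v\|/\|u\|+\big|\|v\|-\|u\|\big|/\|u\|\leq 2\|u-v\|,
\]
so the constant $C_d=4\sqrt{n_d}$ comes out exactly with cube side $2^{1-r}$.
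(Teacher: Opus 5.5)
Your proposal is correct and follows essentially the same route as the paper: the same charts indexed by the dominant component, affine coordinates $\psi_i/\psi_j$ in $[-1,1]^{n_d}$, half-open dyadic cubes of side $2^{1-r}$, and the same diameter estimate via the $2$-Lipschitz normalization of vectors of norm at least one together with $D\le\|\hat\psi-\hat\phi\|_2$, giving $C_d=4\sqrt{n_d}$. You also explicitly verify the two elementary inequalities that the paper only asserts.
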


The idea applies to any compact smooth manifold of real dimension
$n$. Compactness provides finitely many coordinate neighborhoods
whose closures lie inside larger charts. Their coordinate images
fit inside bounded cubes. Repeatedly halving every coordinate
interval creates at most $2^n$ children per cell and halves the
coordinate diameter. The inverse chart maps have bounded
derivatives on these compact regions, so cell diameters in a fixed
Riemannian metric are bounded by a constant times $2^{-r}$.

\begin{proof}
For a ray $\ketbra{\psi}$, let $\jmath(\ketbra{\psi})$ be the smallest index
at which $|\psi_j|$ is maximal. The sets
\begin{equation}
    X_j:=\{\ketbra{\psi}:\jmath(\ketbra{\psi})=j\},
    \qquad j\in[d],
    \label{eq:projective-chart-partition}
\end{equation}
form a partition of $\Pure$.
On $X_j$, we have $\psi_j\neq0$, and the ratios
$\psi_i/\psi_j$ determine the ray independently of its
representative. Separating their real and imaginary parts gives
the $n_d$ real coordinates
\begin{equation}
    \begin{aligned}
        u_i^{(j)}&:=\operatorname{Re}\frac{\psi_i}{\psi_j},
        &v_i^{(j)}&:=\operatorname{Im}\frac{\psi_i}{\psi_j},
        \qquad i\neq j,\\
        z^{(j)}(\ketbra{\psi})
        &:=\bigl(u_i^{(j)},v_i^{(j)}\bigr)_{i\neq j}
        \in[-1,1]^{n_d}.
    \end{aligned}
    \label{eq:inhomogeneous-coordinates}
\end{equation}
Indeed, $|\psi_i/\psi_j|\leq1$, so every real coordinate
lies in $[-1,1]$.

Let $\mathcal D_r$ be the partition of $[-1,1]^{n_d}$ into
$2^{n_dr}$ cubes of side $2^{1-r}$, obtained by bisecting
every coordinate interval $r$ times.
To assign boundary points consistently, use half-open intervals,
with the final interval including the outer endpoint.
Set $\cC_0=\{\Pure\}$ and, for $r\geq1$, define
\begin{equation}
    \cC_r:=
    \left\{
        X_j\cap(z^{(j)})^{-1}(Q):
        j\in[d],\ Q\in\mathcal D_r
    \right\}\setminus\{\emptyset\}.
    \label{eq:coordinate-cell-definition}
\end{equation}

Each partition refines the preceding one. At positive levels, each
cell has at most $2^{n_d}$ children. There are $d$ chart labels and
$2^{n_dr}$ cubes per chart, proving $|\cC_r|\leq d2^{n_dr}$.
The cardinality bound also holds at $r=0$.

It remains to estimate the cell diameter.
For a real coordinate vector $z=(u_i,v_i)_{i\neq j}$, the
associated normalized representative is
\begin{equation}
    \varphi_j(z):=
    \frac{e_j+\sum_{i\neq j}(u_i+\mathrm{i}v_i)e_i}
         {\sqrt{1+\|z\|_2^2}}.
    \label{eq:normalized-chart-map}
\end{equation}
Normalization of vectors of norm at least one is
$2$-Lipschitz, and pure-state trace distance is at most the
Euclidean distance between unit-vector representatives. Hence
\begin{equation}
    D(\varphi_j(z),\varphi_j(z'))
    \leq2\|z-z'\|_2.
    \label{eq:chart-lipschitz-bound}
\end{equation}
Two points in the same cube satisfy
$\|z-z'\|_2\leq\sqrt{n_d}\,2^{1-r}$, giving the diameter upper bound in 
\cref{eq:dyadic-coordinate-partition-size} with
$C_d=4\sqrt{n_d}$. At the root, $\diam_D(\Pure)=1\leq C_d$.
\end{proof}

\section{Continuous outcome spaces}
\label{app:continuous-outcomes}

The nested-partition construction extends to continuous outcomes by
replacing probability vectors with probability densities and sums over
outcomes with integrals. The lower envelopes are understood
up to sets of measure zero. With these adaptations, the
same overlap estimate and recursive decoder prove exact simulation.

\subsection{Target distributions and exactness}

Let $(\mathcal W,\Sigma,\tau)$ be a $\sigma$-finite measure space. Define
\begin{equation}
    \Delta_\tau:=\left\{g\in L^1(\tau):g\geq0,\quad
    \int_{\mathcal W}g(w)\,\mathrm d\tau(w)=1\right\}.
    \label{eq:continuous-simplex}
\end{equation}
The target family $\cF$ now consists of maps $f:X\to\Delta_\tau$.
Writing $f_w(x)$ for a representative of the density $f(x)$, the
probability of an outcome event $A\in\Sigma$ is
\begin{equation}
    P_f(A\mid x):=\int_A f_w(x)\,\mathrm d\tau(w).
    \label{eq:continuous-target-law}
\end{equation}
By the Radon--Nikodym theorem\footnote{D.~H.~Fremlin,
\emph{Measure Theory}, Vol.~2, 232F;
\url{https://www1.essex.ac.uk/maths/people/fremlin/chap23.pdf}.}, this formulation is equivalent to
requiring that all the target laws $P_f(\cdot\mid x)$ be absolutely
continuous with respect to $\tau$. Densities that agree
$\tau$-almost everywhere represent the same density.

Alice still receives $x\in X$, Bob receives $f\in\cF$, and the shared random
variable $\lambda\sim\mu$ selects a finite message alphabet. For our
construction, Bob's conditional output laws have densities
$q(\cdot\mid f,\lambda,m)\in\Delta_\tau$. Exactness means that,
for every $x\in X$ and $f\in\cF$,
\begin{equation}
    f_w(x)=\int_\Omega
    q\bigl(w\mid f,\lambda,M_\lambda(x)\bigr)\,
    \mathrm d\mu(\lambda)
    \qquad\text{for $\tau$-almost every }w.
    \label{eq:continuous-exactness}
\end{equation}
For each fixed $x,f$, we use representatives for which the integrand
is jointly measurable in $(\lambda,w)$. This is always satisfied for the
countable shared randomness used in our construction.
The communication cost is still $\Exp[\log \AlpS_\lambda]$ from {\cref{eq:expected-bit-cost}}.

\subsection{Mass deficit, covering dimension, and the decoder}

For nonnegative functions write $g\preceq h$ if $g\leq h$
$\tau$-almost everywhere, and write $|g|:=\int g\,\mathrm d\tau$.
For a nonempty region $R\subseteq X$, define
\begin{equation}
    \ell^f(R):=\operatorname*{ess\,inf}_{x\in R}f(x).
    \label{eq:continuous-lower-envelope}
\end{equation}
This uses the standard essential infimum of a
\emph{family of measurable functions}, understood in the
almost-everywhere order.\footnote{See D.~H.~Fremlin,
\emph{Measure Theory}, Vol.~2, \S241, particularly
241E--241G for the order structure and existence of suprema,
and 241Ye for the countable-subfamily property.
Available at
\url{https://www1.essex.ac.uk/maths/people/fremlin/chap24.pdf}.}
More explicitly, the essential infimum of a family
$\{h_i\}_{i\in I}$ of nonnegative measurable functions is a
measurable function $H$, possibly taking the value $+\infty$,
defined by the requirements:
\begin{align}
    &h_i\geq H
      \quad\text{$\tau$-almost everywhere, for every }i\in I,
      \label{eq:essential-supremum-upper-bound}\\
    &\left[
      h_i \geq G
      \quad\text{$\tau$-almost everywhere, for every }i\in I
      \right]
      \ \Longrightarrow\
      G\leq H
      \quad\text{$\tau$-almost everywhere}
      \label{eq:essential-supremum-minimality}
\end{align}
for every measurable lower bound $G$. The essential infimum is unique up to
$\tau$-null sets.

For example, on $[0,1]$ with Lebesgue measure, the functions $g_t(w):=1-\mathbf{1}_{\{t\}}(w)$
have essential infimum one, although their pointwise infimum
is identically zero.
Thus essential envelopes ignore changes of each function on
null sets, as required when working with densities.

Under $\sigma$-finiteness, these envelopes exist and can be
obtained from countable subfamilies. In particular,
$\ell^f(R)$ is the greatest almost-everywhere lower bound of
all $f(x)$. It is independent of density representatives and is
measurable even for an uncountable set of inputs.

The mass deficit is
\begin{equation}
    \kappa_{\cF}(R):=\sup_{f\in\cF}
        \left(1-\int_{\mathcal W}\ell_w^f(R)\,\mathrm d\tau(w)\right).
    \label{eq:continuous-family-mass-deficit}
\end{equation}
Taking $\tau$ to be counting measure on $[s]$ gives
\cref{eq:family-mass-deficit}. For every $x\in R$,
$0\preceq\ell^f(R)\preceq f(x)$, so the lower envelope is a
nonnegative subdensity of mass at most one. Consequently,
$0\leq\kappa_{\cF}(R)\leq1$ and
\begin{equation}
    |\ell^f(R)|\geq1-\kappa_{\cF}(R).
    \label{eq:continuous-overlap}
\end{equation}
If $R\subseteq P$, then
$\ell^f(P)\preceq\ell^f(R)$ and
$\kappa_{\cF}(R)\leq\kappa_{\cF}(P)$.

Define $N_{\cF}(\epsilon)$ and $d_{\mathrm{mdM}}(X,\cF)$ as in
\cref{def:mass-deficit-minkowski}, using
\cref{eq:continuous-family-mass-deficit}. Nested partitions and
well-partitionability are defined by
\cref{def:nested-partitions,def:well-partitionable}.
The equivalence in \cref{lem:dimension-partitionability} holds:
its covering and refinement argument uses monotonicity under
inclusion and the bound $\kappa_{\cF}(X)\leq1$.

\begin{theorem}[Exact simulation with continuous outcomes]
\label{thm:continuous-simulation}
If $(X,\cF)$ has finite mass-deficit Minkowski dimension with respect to
\cref{eq:continuous-family-mass-deficit}, then $\cF$ admits a one-way
exact simulation with finite expected communication and an
exponential tail for the message length.
\end{theorem}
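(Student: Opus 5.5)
The plan is to transcribe the proof of \cref{thm:general-simulation} to densities, replacing sums over outcomes by integrals against $\tau$ and pointwise inequalities by $\tau$-almost-everywhere ones. First, by the continuous-outcome version of \cref{lem:dimension-partitionability} (valid, as noted above, because its proof uses only monotonicity of $\kappa_{\cF}$ under inclusion and $\kappa_{\cF}(X)\leq1$), fix nested partitions $\{\cC_r\}$ with $|\cC_r|\leq2^{a+\nu r}$ and $\kappa_{\cF}(\cC_r)\leq C2^{-r}$. Set $N_0=\lceil\log\max\{1,C\}\rceil$ and $\cP_k=\cC_{N_0+k}$. By \cref{eq:continuous-overlap}, every $R\in\cP_k$ then satisfies $|\ell^f(R)|\geq1-2^{-k}=A_k$. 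The shared level is $\lambda$ with $\Prob(\lambda=k)=w_k=2^{-k}$, and Alice sends the label of $R_k(x)$. The communication cost and the tail bound are then literally those of \cref{eq:general-average-bits,eq:general-bit-tail}, since they do not involve the outcome space.

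For the decoder, fix $f$ and build $S_{f,k}(R)\in L^1(\tau)$ recursively, exactly as in \cref{eq:recursive-decoder,eq:S-recursion}. Start from $S_{f,0}(X)=0$. For a child $R$ of $P$, set $v_{f,k}(R)=\ell^f(R)-S_{f,k-1}(P)$, where $\ell^f$ is now the essential infimum \cref{eq:continuous-lower-envelope}. Then define $q_{f,k}(R)=v_{f,k}(R)/|v_{f,k}(R)|$ and $S_{f,k}(R)=S_{f,k-1}(P)+w_kq_{f,k}(R)$.

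The invariants to maintain are $0\preceq S_{f,k}(R)\preceq\ell^f(R)$ a.e. and $|S_{f,k}(R)|=A_k$.
\begin{itemize}
\item Nonnegativity of $v_{f,k}(R)$ a.e. follows from $\ell^f(P)\preceq\ell^f(R)$. This holds because $\ell^f(P)$ is an a.e.\ lower bound of every $f(x)$ with $x\in R$, and $\ell^f(R)$ is the greatest such bound by \cref{eq:essential-supremum-minimality}.
\item Monotone additivity of the integral gives $|v_{f,k}(R)|\geq w_k>0$. Hence $q_{f,k}(R)\in\Delta_\tau$ is a genuine, measurable density.
\item Since $w_k/|v_{f,k}(R)|\leq1$, the new mass added is at most $v_{f,k}(R)$, which preserves $S_{f,k}(R)\preceq\ell^f(R)$.
\end{itemize}
Finiteness of $|\ell^f(R)|\leq1$ is needed to subtract safely. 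It follows from $\ell^f(R)\preceq f(x)$ for any $x\in R$, which also rules out the value $+\infty$ a.e.

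For exactness, fix $x$ and telescope along the ancestors of $x$, as in \cref{eq:cumulative-decoder}. This gives $\sum_{j\leq k}w_jq_{f,j}(R_j(x))=S_{f,k}(R_k(x))\preceq\ell^f(R_k(x))\preceq f(x)$ $\tau$-a.e., and the difference has integral $1-A_k=2^{-k}$. So the partial sums converge to $f(x)$ in $L^1(\tau)$. They are also monotone nondecreasing a.e., so by monotone convergence the pointwise series $\sum_k2^{-k}q_{f,k}(R_k(x))(w)$ equals $f_w(x)$ for $\tau$-a.e.\ $w$. This is exactly \cref{eq:continuous-exactness}, because $\mu$ is the countable measure $\{w_k\}$ and the integral over $\lambda$ is this series.

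The main obstacle I expect is measure-theoretic bookkeeping rather than anything conceptual. The points to handle are:
\begin{itemize}
\item existence and measurability of the essential infimum over an uncountable $R$, which the $\sigma$-finite theory quoted above supplies via countable subfamilies;
\item the fact that all relations hold only up to null sets, with countably many exceptional null sets (one per level) whose union is still null;
\item joint measurability of $(\lambda,w)\mapsto q(w\mid f,\lambda,M_\lambda(x))$, which is automatic since $\lambda$ is countable.
\end{itemize}
To settle these, I will fix a representative of each $\ell^f(R)$ once and for all, and work on the complement of the countable union of exceptional null sets.
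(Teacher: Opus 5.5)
Your proposal is correct and follows the paper's route. The paper's proof is a one-paragraph remark that the argument of \cref{thm:general-simulation} carries over to $L^1(\tau)$, with integrals replacing sums and almost-everywhere inequalities replacing coordinatewise ones; you write out those same steps, namely the a.e.\ invariants of the recursive decoder, the telescoping bound, and the monotone-convergence upgrade from $L^1$ convergence to the $\tau$-a.e.\ identity \cref{eq:continuous-exactness}, along with the finiteness of $\ell^f(R)$ and the null-set bookkeeping that the paper leaves implicit.
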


\begin{proof}
The proof of \cref{thm:general-simulation} applies in
$L^1(\tau)$, replacing sums over outcomes with integrals and
coordinatewise inequalities with almost-everywhere inequalities.
The overlap estimate \cref{eq:continuous-overlap} gives the
cumulative invariants of the recursive decoder. The telescoping
argument gives convergence to $f(x)$ in $L^1(\tau)$ and hence
\cref{eq:continuous-exactness}. The shared resolution law and
the bounds on the message alphabets give
\cref{eq:general-average-bits,eq:general-bit-tail}.
\end{proof}

\section{Approximate classical simulation} \label{app:ApproxSimulation}
Here we give a proof of Proposition \ref{prop:approx-lower-envelope}, for the reader's convenience we first re-state the proposition. 
\begin{proposition}
For every $d\geq2$ and all sufficiently large $r$, the protocol using $\cC_{r}$ partitions of appendix \ref{app:quantum-partitions} for the encoding, and the normalized lower envelope for the decoding 
satisfies
\begin{equation}
    D\!\left(
    q(\mathsf E,\psi)
    \,\middle\|\,
    p(\mathsf E,\psi)
    \right)
    \leq\frac{K_d}{N^{\alpha_d}},
    \qquad
    \alpha_d=\frac1{d-1},
    \label{eq:approx-relative-entropy-bound}
\end{equation}
uniformly over all input states and POVMs, where $K_d$ depends
only on $d$.
\end{proposition}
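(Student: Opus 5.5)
The plan is to bound the relative entropy by the $\chi^2$-divergence and then show that the $\chi^2$-divergence is \emph{quadratic} in the cell size $2^{-r}$. A first-order bound would not be enough: it yields only $D\leq\log(1/|\ell|)\leq 2\kappa_{\cF}(R_m)=O(2^{-r})$, which translates into the weaker exponent $1/n_d=1/(2d-2)$.

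Fix $\psi\in R_m$ and a POVM $\mathsf E$. Write $p_b=p_b(\mathsf E,\psi)$, $\ell_b=\ell_b(\mathsf E,R_m)$, $L=|\ell(\mathsf E,R_m)|$ and $q_b=\ell_b/L$. Since $\ell_b\leq p_b$ for every $b$, we have $q_b/p_b\leq 1/L$. This guarantees that $q_b>0$ never occurs with $p_b=0$, so the divergence is finite. Using $\log x\leq (x-1)/\ln 2$,
\begin{equation*}
 D(q\|p)\leq \frac{1}{\ln 2}\sum_b q_b\Bigl(\frac{q_b}{p_b}-1\Bigr)=\frac{1}{\ln 2}\sum_b\frac{(q_b-p_b)^2}{p_b},
\end{equation*}
where the last equality uses $\sum_b q_b=\sum_b p_b=1$. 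I then split the outcomes into two groups according to the sign of $q_b-p_b$.

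\emph{Outcomes with $q_b\geq p_b$.} Set $\delta:=1/L-1$. Then $0\leq q_b-p_b\leq p_b\delta$, so these terms contribute at most $\delta^2\sum_b p_b\leq\delta^2$. By \cref{lem:Born-stability} and \cref{lem:nested-partitions}, $1-L\leq d\,\diam_D(R_m)\leq 4d\sqrt{n_d}\,2^{-r}$. For $r$ large enough that this is at most $1/2$, we get $\delta\leq 2(1-L)=O(2^{-r})$, and hence a contribution $O(4^{-r})$.

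\emph{Outcomes with $q_b<p_b$.} This is the main obstacle: the naive estimate $(p_b-q_b)^2/p_b\leq \|E_b\|^2\diam_D^2/p_b$ blows up when $p_b$ is small. The remedy is to work with square roots of Born probabilities. We have $\sqrt{p_b(\mathsf E,\phi)}=\|E_b^{1/2}\varphi\|$ for any unit representative $\varphi$ of $\phi$, so
\begin{equation*}
 \bigl|\sqrt{p_b(\phi)}-\sqrt{p_b(\phi')}\bigr|\leq\|E_b\|_\infty^{1/2}\,\|\varphi-\varphi'\|_2 .
\end{equation*}
Within a cell of the same chart, the normalized representatives $\varphi_j(z)$ from \cref{eq:normalized-chart-map} satisfy $\|\varphi_j(z)-\varphi_j(z')\|_2\leq\eta:=2\sqrt{n_d}\,2^{1-r}$, by the same $2$-Lipschitz argument used in the proof of \cref{lem:nested-partitions}. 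It follows that $\ell_b\geq\bigl(\sqrt{p_b}-\|E_b\|^{1/2}\eta\bigr)_+^2$, and therefore $p_b-\ell_b\leq 2\sqrt{p_b}\,\|E_b\|^{1/2}\eta$. Since $q_b\geq \ell_b$, this gives $0<p_b-q_b\leq 2\sqrt{p_b}\,\|E_b\|^{1/2}\eta$, so $(p_b-q_b)^2/p_b\leq 4\|E_b\|\eta^2$. Summing over $b$ and using $\sum_b\|E_b\|_\infty\leq d$ (as in \cref{lem:Born-stability}) bounds this group by $4d\eta^2=O(d\,n_d\,4^{-r})$.

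Combining the two groups gives $D(q\|p)\leq K'_d\,4^{-r}$, uniformly in $\psi$ and $\mathsf E$, for all sufficiently large $r$. To express this in terms of $N=|\cC_r|$, I use the lower bound \cref{eq:level-classical-dimension}, namely $N\geq d\,2^{n_d(r-1)}$. This gives $4^{-r}\leq c_d N^{-2/n_d}$, and $2/n_d=1/(d-1)=\alpha_d$, which proves the claim with $K_d=K'_dc_d$. The one point to double-check is that a single chart index $j$ is used for all states in a cell. This holds because every cell of $\cC_r$ with $r\geq1$ lies inside a single $X_j$, so the fixed-phase representatives make the Euclidean bound $\eta$ legitimate.
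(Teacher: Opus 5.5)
Your argument is essentially the paper's. You bound $D(q\|p)$ by the $\chi^2$-divergence, then use the Lipschitz property of $\sqrt{p_b}=\|E_b^{1/2}\varphi\|$ to get $p_b-\ell_b\leq 2\sqrt{p_b\|E_b\|_\infty}\,\eta$, so the weighted deficit $\sum_b (p_b-\ell_b)^2/p_b\leq 4d\eta^2$ is quadratic in $2^{-r}$.

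The only bookkeeping difference is that the paper writes $e_b=p_b-\ell_b$ and $t=1-L$, so that $q_b-p_b=(tp_b-e_b)/(1-t)$. It then uses the exact identity $\sum_b(q_b-p_b)^2/p_b=(\sum_b e_b^2/p_b-t^2)/(1-t)^2$ in place of your sign split. Your split adds a $\delta^2=O(d^2n_d4^{-r})$ term, which only worsens $K_d$. Your use of chart representatives, rather than phase-aligned ones with $\|\psi-\phi\|_2\leq\sqrt2\,D$, is also fine.

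There is one genuine error, in your last step. The lower bound $N\geq d\,2^{n_d(r-1)}$ from \cref{eq:level-classical-dimension} gives $N^{-2/n_d}\leq d^{-2/n_d}4^{1-r}$. That bounds $N^{-\alpha_d}$ above by a multiple of $4^{-r}$, which is the opposite of what you need. To obtain $4^{-r}\leq c_dN^{-2/n_d}$ you must use the upper bound $N\leq d\,2^{n_dr}$ from \cref{lem:nested-partitions}. That bound gives $4^{-r}\leq d^{2/n_d}N^{-2/n_d}$, which is exactly how the paper concludes. With this replacement your proof is complete.
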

\begin{proof}
Recall that the partitions in
Appendix~\ref{app:quantum-partitions} satisfy
\begin{equation}
    N\leq d\,2^{n_dr},
    \qquad
    \diam_D(\cC_r)\leq
    \delta_r:=4\sqrt{n_d}\,2^{-r}.
\end{equation}
Choose $r$ such that $d\delta_r\leq\tfrac12$.
By \cref{lem:Born-stability},
\begin{equation}
    \abs{\ell(\mathsf E,R_m)}
    \geq1-d\delta_r\geq\tfrac12.
\end{equation}
In particular, the decoder is well-defined.

Fix $\psi$ and $\mathsf E=(E_b)_b$, and let $R=R_{M(\psi)}$
be the cell containing $\psi$.
We omit the dependence on the state, measurement, and cell when
clear from the context, writing
\begin{equation}
    p:=p(\mathsf E,\psi),\qquad
    q:=q(\mathsf E,\psi),\qquad
    \ell:=\ell(\mathsf E,R).
\end{equation}
Set
\begin{equation}
    e_b:=p_b-\ell_b,
    \qquad
    t:=\sum_b e_b=1-\abs{\ell}.
\end{equation}
Then $0\leq t\leq d\delta_r$, and Bob's output distribution satisfies
$q_b=(p_b-e_b)/(1-t)$.

We first bound the deficits $e_b$ using the Born-rule structure.
For $\phi\in R$, choose unit-vector representatives whose inner
product is nonnegative. Then
\begin{equation}
    \|\psi-\phi\|_2^2
    =2\bigl(1-|\braket{\psi}{\phi}|\bigr)
    \leq2D(\psi,\phi)^2
    \leq2\delta_r^2.
\end{equation}
Consequently,
\begin{align}
    \left|\sqrt{p_b}
          -\sqrt{p_b(\mathsf E,\phi)}\right|
    &=
    \left|\|\sqrt{E_b}\psi\|_2
          -\|\sqrt{E_b}\phi\|_2\right|\\
    &\leq\sqrt{2\|E_b\|_\infty}\,\delta_r.
\end{align}
Taking the infimum over $\phi\in R$ yields
\begin{equation}
    e_b
    =(\sqrt{p_b}-\sqrt{\ell_b})
     (\sqrt{p_b}+\sqrt{\ell_b})
    \leq2\sqrt{2\|E_b\|_\infty p_b}\,\delta_r.
\end{equation}
Since $\sum_b\|E_b\|_\infty\leq\sum_b\operatorname{tr}E_b=d$,
we obtain
\begin{equation}
    \sum_{b:p_b>0}\frac{e_b^2}{p_b}
    \leq8d\,\delta_r^2.
    \label{eq:approx-weighted-deficit}
\end{equation}
If $p_b=0$, then $\ell_b=e_b=q_b=0$, so such outcomes
do not contribute to the divergence.

Using $\ln x\leq x-1$, followed by
\cref{eq:approx-weighted-deficit}, gives
\begin{align}
    D(q\|p)
    &\leq\frac1{\ln2}
      \sum_{b:p_b>0}\frac{(q_b-p_b)^2}{p_b}\\
    &=\frac1{(1-t)^2\ln2}
      \left(\sum_{b:p_b>0}\frac{e_b^2}{p_b}-t^2\right)\\
    &\leq\frac{8d\,\delta_r^2}
                 {(1-d\delta_r)^2\ln2}
    \leq\frac{32d\,\delta_r^2}{\ln2}.
\end{align}
Finally, $N\leq d\,2^{n_dr}$ implies
\begin{equation}
    \delta_r^2
    =16n_d\,2^{-2r}
    \leq16n_d\left(\frac dN\right)^{2/n_d}.
\end{equation}
Thus \cref{eq:approx-relative-entropy-bound} holds with
\begin{equation}
    \alpha_d=\frac2{n_d}=\frac1{d-1},
    \qquad
    K_d:=\frac{512n_d\,d^{1+2/n_d}}{\ln2}.
\end{equation}
\end{proof}

%%%%%%%%%%%%%%%%%%%%%%%%%%%%%%%%%%%%%%%%%%%%%%%%%%%%%%%%%%%%%
%%%%%%%%%%%%%%%%%%%%%%%%%%%%%%%%%%%%%%%%%%%%%%%%%%%%%%%%%%%%%

\section{Multimode Gaussian states and general-dyne measurements}
\label{app:multimode-gaussian}

In this appendix we collect the description of $n$-mode Gaussian states and Gaussian (general-dyne) measurements that are needed for the simulation model of Sec.~\ref{sec:gaussian}, and we prove the bound \cref{eq:gaussian-average-bits}. We follow the conventions used in Ref.~\cite{ValcarceEtAl2022} (Appendix A), where a reader can find a more detailed exposition. For general reviews on Gaussian states, measurements and quantum information we refer the reader to Refs.~\cite{Weedbrook2012,AdessoRagyLee2014}. 

\subsection{Quadratures, displacement vector and covariance matrix}

Consider $n$ bosonic modes with ladder operators $a_i,a_i^\dagger$,
$[a_i,a_j^\dagger]=\delta_{ij}$, and the dimensionless quadrature operators
\begin{equation}
    \hat x_i=\frac{a_i^\dagger+a_i}{2},\qquad
    \hat p_i=\ii\,\frac{a_i^\dagger-a_i}{2},
   \end{equation}
collected in the vector
$\mathbf q=(\hat x_1,\hat p_1,\dots,\hat x_n,\hat p_n)^T=(q_1,\dots,q_{2n})^T$.
The canonical commutation relations read
\begin{equation}
    [q_k,q_l]=\frac{\ii}{2}\,\Omega_{kl},
    \qquad
    \Omega=\one_n\otimes
    \begin{pmatrix} 0&1\\-1&0\end{pmatrix}.
    \label{eq:mm-ccr}
\end{equation}
The first two moments of the quadratures on a state $\rho$ define the
displacement (column) vector $\bm\mu\in\mathbb R^{2n}$ and the covariance
matrix $\Sigma=\Sigma^T\in\mathbb R^{2n\times 2n}$,
\begin{equation}
    \mu_k=\tr(\rho\,q_k),\qquad
    \Sigma_{kl}=\tfrac12\tr\!\big(\rho\,\{q_k,q_l\}\big)-\mu_k\mu_l .
    \label{eq:mm-moments}
\end{equation}
With these conventions the vacuum $\ket0^{\otimes n}$ has
$(\bm\mu,\Sigma)=(\vec 0,\tfrac14\one_{2n})$, and a coherent state
$\ket{\bm\alpha}=\bigotimes_{i=1}^n\ket{\alpha_i}$ has
$(\bm\mu, \Sigma)=(\tilde{\bm\alpha},\tfrac14\one_{2n})$, where $ \tilde{\bm\alpha}:=
    \big(\mathrm{Re}\,\alpha_1,\mathrm{Im}\,\alpha_1,\dots,
    \mathrm{Re}\,\alpha_n,\mathrm{Im}\,\alpha_n\big)^T.$

Any real symmetric $\Sigma$ is the covariance matrix of a quantum state
if and only if it satisfies the uncertainty relation~\cite{SimonMukundaDutta1994}, in our normalization
\begin{equation}
    \Sigma+\frac{\ii}{4}\,\Omega\geq0 .
    \label{eq:mm-uncertainty}
\end{equation}
In particular $\Sigma$ is positive semidefinite. Indeed, for a real vector $\bm v$ one has $\bm v^T\Omega\bm v=0$, since $\Omega$ is antisymmetric, so
\cref{eq:mm-uncertainty} gives $\bm v^T\Sigma\bm v\geq0$. In fact $\Sigma$ is
positive definite, see \cref{eq:mm-ab} below.

\paragraph{Wigner function.}
A state $\rho$ is Gaussian if its Wigner function
$W_\rho(\bm\alpha):=\tr\big[\rho\bigotimes_{i=1}^n\delta(a_i-\alpha_i)\big]$
is the density of a multivariate normal distribution,
\begin{equation}
    W_\rho(\bm\alpha)=\mathrm N(\tilde{\bm\alpha};\bm\mu,\Sigma),
    \qquad
    \mathrm N(\bm y;\bm\mu,\Sigma):=
    \frac{\exp\!\big[-\tfrac12(\bm y-\bm\mu)^T\Sigma^{-1}(\bm y-\bm\mu)\big]}
         {\sqrt{\det(2\pi\Sigma)}},
    \label{eq:mm-wigner}
\end{equation}
normalized as $\int\mathrm d^{2n}\bm\alpha\,W_\rho(\bm\alpha)=1$ with
$\mathrm d^{2n}\bm\alpha=\prod_i\mathrm d\,\mathrm{Re}\,\alpha_i\,
\mathrm d\,\mathrm{Im}\,\alpha_i$~\cite{ValcarceEtAl2022,VogelWelsch2006}. A Gaussian state is thus faithfully
described by the pair $(\bm\mu,\Sigma)$, i.e.\ by $2n^2+3n$ real
parameters, and we write $\rho\simeq(\bm\mu,\Sigma)$. Conversely, every pair
$(\bm\mu,\Sigma)$ with $\Sigma$ satisfying \cref{eq:mm-uncertainty} defines a
Gaussian state. In this normalization the overlap of two states reads
\begin{equation}
    \tr(\rho\,\sigma)=\pi^n\int\mathrm d^{2n}\bm\alpha\;
    W_\rho(\bm\alpha)\,W_\sigma(\bm\alpha),
    \label{eq:mm-overlap}
\end{equation}which follows from the inversion formula $\rho=\pi^n\int\mathrm d^{2n}\bm\alpha\,W_\rho(\bm\alpha)\bigotimes_i\delta(a_i-\alpha_i)$~\cite{VogelWelsch2006,ValcarceEtAl2022}.

\paragraph{Energy.}
The total photon number operator $\hat N=\sum_ia_i^\dagger a_i$ satisfies
$\hat N=\sum_i(\hat x_i^2+\hat p_i^2)-\tfrac n2$. Hence for any state
\begin{equation}
    \tr(\rho\hat N)=\|\bm\mu\|^2+\tr\Sigma-\frac n2 .
    \label{eq:mm-energy}
\end{equation}
For instance, a single-mode squeezed vacuum with squeezing parameter $r$
has $\tr\Sigma=\tfrac12\cosh 2r$ and mean photon number $\sinh^2r$.

\subsection{Gaussian states with bounded energy}
\label{app:mm-energy-bounded}

The set of $n$-mode Gaussian states with mean photon number at most $E$ is
\begin{equation}
    X_E=\Big\{\rho\simeq(\bm\mu,\Sigma):\ \|\bm\mu\|^2+\tr\Sigma-\frac n2\leq E,
    \quad\Sigma+\frac\ii4\Omega\geq0\Big\}.
    \label{eq:mm-XE}
\end{equation}
Since both $\|\bm\mu\|^2$ and $\tr\Sigma-\tfrac n2$ are nonnegative (the
latter by \cref{eq:mm-ab} below), $X_E$ is contained in the product
\begin{equation}
    X_E\subseteq M_E\times S_E,
    \qquad
    M_E:=\{\bm\mu:\ \|\bm\mu\|^2\leq E\},
    \qquad
    S_E:=\Big\{\Sigma:\ \tr\Sigma\leq E+\frac n2,\ \Sigma+\frac\ii4\Omega\geq0\Big\}.
    \label{eq:mm-XE-relaxed}
\end{equation}
The set $S_E$ is convex and compact, being the intersection of a linear
matrix inequality with a half-space. The following proposition describes it
explicitly enough for our purposes.

\begin{proposition}
\label{prop:mm-SE-bounds}
Let $\Sigma\in S_E$ with $E\geq0$, and set
\begin{equation}
    \gamma_E:=\tfrac14\big(\sqrt{1+E}-\sqrt E\big)^2,
    \qquad
    \Gamma_E:=\tfrac14\big(\sqrt{1+E}+\sqrt E\big)^2,
    \qquad \gamma_E\Gamma_E=\tfrac1{16},\quad
    \Gamma_E-\gamma_E=\sqrt{E(1+E)} .
    \label{eq:mm-gammaE}
\end{equation}
Then
\begin{enumerate}[label=(\roman*)]
    \item $\gamma_E\,\one_{2n}\leq\Sigma\leq\Gamma_E\,\one_{2n}$;
    \item element-wise, $\gamma_E\leq\Sigma_{kk}\leq\Gamma_E$ for all $k$, and
          $|\Sigma_{kl}|\leq\tfrac12\sqrt{E(1+E)}$ for all $k\neq l$.
\end{enumerate}
\end{proposition}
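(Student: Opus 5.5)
We need to bound the spectrum of $\Sigma\in S_E$. The plan is to combine the uncertainty relation \cref{eq:mm-uncertainty} with the trace bound, using Williamson's theorem.

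By Williamson's theorem, there is a symplectic $S$ with $\Sigma=S\,D\,S^T$, where $D=\bigoplus_i\nu_i\one_2$. The uncertainty relation \cref{eq:mm-uncertainty} is equivalent to $\nu_i\geq\tfrac14$ for every $i$. We can further decompose $S=O_1 K O_2$ (Bloch--Messiah), with $O_{1,2}$ orthogonal symplectic and $K=\bigoplus_i\mathrm{diag}(e^{r_i},e^{-r_i})$. Then
\begin{equation}
\tr\Sigma=\tr(K O_2 D O_2^T K)\geq \tfrac14\tr K^2=\tfrac12\sum_i\cosh 2r_i ,
\end{equation}
and the extremal eigenvalues of $\Sigma$ satisfy $\lambda_{\min}(\Sigma)\geq\tfrac14 e^{-2r_{\max}}$.

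A cleaner route avoids the decomposition. For any unit eigenvector $\bm v$ of $\Sigma$ with eigenvalue $\lambda$, the vector $\Omega\bm v$ is a unit vector. Taking the complex test vector $\bm w=\bm v+\ii\Omega\bm v$ (suitably signed) in \cref{eq:mm-uncertainty} gives
\begin{equation}
\lambda+(\Omega\bm v)^T\Sigma(\Omega\bm v)\geq\tfrac12 .
\end{equation}
Combined with the Cauchy--Schwarz-type relation, this yields $\lambda\cdot(\Omega\bm v)^T\Sigma(\Omega\bm v)\geq\tfrac1{16}$. That product inequality is the key fact. I would prove it by restricting the matrix inequality to the two-dimensional complex span of $\bm v$ and $\Omega\bm v$. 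The restricted $2\times2$ Hermitian matrix is $\begin{pmatrix}\lambda & c+\ii/4\\ c-\ii/4 & \mu'\end{pmatrix}$, where $c=\bm v^T\Sigma\Omega\bm v=\lambda\,\bm v^T\Omega\bm v=0$ because $\bm v$ is an eigenvector. Positivity of its determinant gives $\lambda\mu'\geq\tfrac1{16}$, where $\mu'=(\Omega\bm v)^T\Sigma(\Omega\bm v)$.

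Now set $\lambda=\lambda_{\min}(\Sigma)$. The remaining eigenvalues of $\Sigma$ are each at least $\lambda$, and there are $2n-2$ of them outside the span of $\bm v$ and $\Omega\bm v$. Using $\tr\Sigma\geq\lambda+\mu'+(2n-2)\gamma$ and applying the determinant bound to every symplectic pair, I would first show $\tr\Sigma\geq\tfrac n2$: each pair contributes at least $2\sqrt{\lambda\mu'}\geq\tfrac12$. More precisely, I will pair eigen-directions through Williamson's normal form, giving $\tr\Sigma\geq\tfrac12(n-1)+\lambda+\tfrac1{16\lambda}$. The condition $\tr\Sigma\leq E+\tfrac n2$ then gives $\lambda+\tfrac1{16\lambda}\leq E+\tfrac12$. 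Solving this quadratic in $\lambda$ gives
\begin{equation}
\lambda\geq\tfrac14\bigl(\sqrt{1+E}-\sqrt E\bigr)^2=\gamma_E ,
\end{equation}
since the roots of $16\lambda^2-(16E+8)\lambda+1=0$ are exactly $\gamma_E$ and $\Gamma_E$.

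The symmetric argument with $\lambda=\lambda_{\max}$ gives $\lambda\leq\Gamma_E$. The identities $\gamma_E\Gamma_E=\tfrac1{16}$ and $\Gamma_E-\gamma_E=\sqrt{E(1+E)}$ are direct algebra. This proves (i).

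For (ii), the diagonal bounds follow from (i), since $\Sigma_{kk}=\bm e_k^T\Sigma\bm e_k$. For the off-diagonal bounds, write $\Sigma_{kl}=\tfrac12\bigl[(\bm e_k+\bm e_l)^T\Sigma(\bm e_k+\bm e_l)/2\cdot2-\dots\bigr]$. The cleaner way is to use $\Sigma_{kl}=\bm a^T(\Sigma-c\one)\bm b$ with $c=\tfrac{\gamma_E+\Gamma_E}{2}$. Here $\|\Sigma-c\one\|_\infty\leq\tfrac{\Gamma_E-\gamma_E}{2}$, and since $\bm e_k\perp\bm e_l$ for $k\neq l$, we get
\begin{equation}
|\Sigma_{kl}|=|\bm e_k^T(\Sigma-c\one)\bm e_l|\leq\tfrac12\sqrt{E(1+E)} .
\end{equation}

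The main obstacle is the pairing step in (i): rigorously obtaining $\tr\Sigma\geq\tfrac{n-1}{2}+\lambda+\tfrac1{16\lambda}$. I would handle it through Williamson's form. Write $\Sigma=SDS^T$ with $D\geq\tfrac14\one$, so $\Sigma\geq\tfrac14 SS^T$ and $\tr\Sigma\geq\tfrac14\tr SS^T$. The eigenvalues of $SS^T$ come in reciprocal pairs $(s_i,s_i^{-1})$, so $\tfrac14\tr SS^T=\tfrac14\sum_i(s_i+s_i^{-1})$. Moreover, $\lambda_{\min}(\Sigma)$ and $\lambda_{\max}(\Sigma)$ are controlled by $D$ and $SS^T$. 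If that route becomes messy, I would instead minimize $\tr\Sigma$ subject to $\lambda_{\min}=\lambda$ directly, since $\tr\Sigma$ is monotone in $D$ and the minimum is attained at a pure state with single-mode squeezing.
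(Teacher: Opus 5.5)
Your core mechanism is the paper's: compress $\Sigma+\tfrac{\ii}{4}\Omega$ to $\mathrm{span}\{\bm v,\Omega\bm v\}$, read off $ab\geq\tfrac1{16}$ from the determinant, combine this with the trace constraint, and solve $16x^2-(16E+8)x+1\leq0$. Part (ii) is also fine; your $\bm e_k^T(\Sigma-c\one)\bm e_l$ argument is equivalent to the paper's polarization with $\bm u_\pm$.

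The gap is exactly the step you flag. You never establish $\tr\Sigma\geq\tfrac{n-1}{2}+\lambda+\tfrac{1}{16\lambda}$, and the arguments you sketch do not deliver it.
The plane $\mathrm{span}\{\bm v,\Omega\bm v\}$ is not $\Sigma$-invariant, since $\Omega\bm v$ is generally not an eigenvector. So ``the $2n-2$ eigenvalues outside the span'' is not meaningful. Moreover, bounding the rest of the trace by $(2n-2)\lambda$ is too weak whenever $\lambda<\tfrac14$, because you need $\tfrac{n-1}{2}$.
Pairing eigen-directions through Williamson's normal form does not help either. Symplectic eigenvectors are neither orthonormal nor eigenvectors of $\Sigma$, so they do not split $\tr\Sigma$ into pair contributions controlled by your $2\times2$ determinant.
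Finally, the claim that the constrained minimum of $\tr\Sigma$ is attained at a single-mode squeezed pure state is asserted, not proved.

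The missing idea is elementary. Because $\Omega$ is orthogonal with $\Omega^T=-\Omega$, the orthogonal complement of $\mathrm{span}\{\bm v,\Omega\bm v\}$ is $\Omega$-invariant. Iterating therefore yields an orthonormal basis $\{\bm v_j,\Omega\bm v_j\}_{j=1}^n$ with $\bm v_1=\bm v$; the paper phrases this via transitivity of $U(n)$ on the unit sphere.
Compute $\tr\Sigma$ in this basis and apply your determinant bound to each pair. This gives $a_j+b_j\geq2\sqrt{a_jb_j}\geq\tfrac12$ for $j\geq2$, hence $a_1+b_1\leq E+\tfrac12$. Together with $b_1\geq\tfrac{1}{16a_1}$, you get $a_1\in[\gamma_E,\Gamma_E]$.
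This works for every unit $\bm v$, not just eigenvectors. It yields $\gamma_E\leq\bm v^T\Sigma\bm v\leq\Gamma_E$ directly, giving both bounds at once, and it does not need $c=0$: the paper only uses $ab\geq\tfrac1{16}+c^2$.

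Your Williamson fallback can also be completed, but only with steps you did not write.
\begin{itemize}
\item Lower bound: $\Sigma\geq\tfrac14 SS^T$ and the trace bound force each reciprocal pair of $SS^T$ to satisfy $\tfrac14(s_i+s_i^{-1})\leq E+\tfrac12$. So all eigenvalues of $SS^T$ lie in $[4\gamma_E,4\Gamma_E]$, which gives $\Sigma\geq\gamma_E\one$.
\item Upper bound: this needs the separate estimate $\lambda_{\max}(\Sigma)\leq\tfrac14 s_{\max}+\tr\bigl(S(D-\tfrac14\one)S^T\bigr)\leq E+\tfrac12-\tfrac{1}{4s_{\max}}\leq\Gamma_E$.
\end{itemize}
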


\begin{proof}
(i) Let $\bm v\in\mathbb R^{2n}$ be a unit vector. Its symplectic partner
$\Omega\bm v$ is also a unit vector, orthogonal to $\bm v$, and
$\bm v^T\Omega(\Omega\bm v)=-1$. Compressing \cref{eq:mm-uncertainty} to
the span of $\bm v$ and $\Omega\bm v$ gives the positive semidefinite
$2\times2$ matrix
$\left(\begin{smallmatrix}a&c\\c&b\end{smallmatrix}\right)
+\tfrac\ii4\left(\begin{smallmatrix}0&-1\\1&0\end{smallmatrix}\right)$,
with $a:=\bm v^T\Sigma\bm v$, $b:=(\Omega\bm v)^T\Sigma(\Omega\bm v)$ and
$c:=\bm v^T\Sigma\,\Omega\bm v$. Its determinant is nonnegative, which gives
the Robertson--Schr\"odinger relation
\begin{equation}
    a\,b\;\geq\;\tfrac1{16}+c^2\;\geq\;\tfrac1{16}.
    \label{eq:mm-ab}
\end{equation}
In particular $a,b>0$, so $\Sigma$ is positive definite, and $a+b\geq\tfrac12$.
Since $U(n)$ acts transitively on the unit sphere of $\mathbb R^{2n}$, the
pair $(\bm v,\Omega\bm v)$ can be completed to an orthonormal symplectic
basis $\{\bm v_j,\Omega\bm v_j\}_{j=1}^n$ with $\bm v_1=\bm v$.
Applying \cref{eq:mm-ab} to each pair gives $a_j+b_j\geq\tfrac12$, and taking
the trace in this basis,
\begin{equation}
    a+b=\tr\Sigma-\sum_{j\geq2}(a_j+b_j)\leq E+\tfrac n2-\tfrac{n-1}{2}
    =E+\tfrac12 .
\end{equation}
For $a,b>0$, the conditions $ab\geq\tfrac1{16}$ and $a+b\leq E+\tfrac12$
imply that $a$ (and $b$) lies between the two roots of
$x^2-(E+\tfrac12)x+\tfrac1{16}=0$, namely
$\tfrac12\big(E+\tfrac12\mp\sqrt{E(1+E)}\big)=\gamma_E,\Gamma_E$.
Hence $\gamma_E\leq\bm v^T\Sigma\bm v\leq\Gamma_E$ for every unit vector
$\bm v$, which is (i).

(ii) The diagonal bounds follow from (i) with $\bm v=\bm e_k$. For $k\neq l$
put $\bm u_\pm=(\bm e_k\pm\bm e_l)/\sqrt2$. Then
$\Sigma_{kl}=\tfrac12(\bm u_+^T\Sigma\bm u_+-\bm u_-^T\Sigma\bm u_-)$, so
$|\Sigma_{kl}|\leq\tfrac12(\Gamma_E-\gamma_E)=\tfrac12\sqrt{E(1+E)}$.
\end{proof}

Together with $M_E \subseteq [-\sqrt{E},\sqrt{E}]^{2n}$, the proposition shows that the set lies in an explicit box,
\begin{equation}
  X_E\subseteq
    [-\sqrt E,\sqrt E]^{2n}\times
    \underbrace{[\gamma_E,\Gamma_E]^{2n}}_{\text{diagonal of }\Sigma}\times
    \underbrace{\big[-\tfrac12\sqrt{E(1+E)},\tfrac12\sqrt{E(1+E)}\big]^{n(2n-1)}}_{\text{off-diagonal of }\Sigma}
    \;\subset\;\mathbb R^{2n^2+3n},
    \label{eq:mm-box}
\end{equation}
in the coordinates $(\bm\mu,\Sigma_{kl})_{k\leq l}$. Moreover, every state in $X_E$
satisfies the uniform lower bound $\Sigma\geq\gamma_E\one_{2n}$. This
bound is what keeps the Wigner function of the considered states and the output densities of Bob's measurement from becoming arbitrarily narrow,
and it is the key input for the classical simulation strategy. For instance, for $n=1$ every
quadrature $\hat q_\varphi=\cos\varphi\,\hat x+\sin\varphi\,\hat p$ has standard
deviation at least $\sqrt{\gamma_E}=\tfrac12(\sqrt{1+E}-\sqrt E)$.

\subsection{General-dyne measurements}

Let $\rho_Z\simeq(\vec0,Z)$, with $Z$ satisfying \cref{eq:mm-uncertainty}, be a Gaussian state (the ``seed''
of a general-dyne measurement). Writing
$\mathcal D(\bm y)=\exp\big[\sum_i(\alpha_ia_i^\dagger-\alpha_i^*a_i)\big]$
for the displacement operator with $\bm y=\tilde{\bm\alpha}\in\mathbb R^{2n}$,
the general-dyne measurement with seed covariance $Z$ is the continuous-outcome
POVM~\cite{Serafini2017,Weedbrook2012}
\begin{equation}
    E_{\bm y}=\frac{1}{\pi^n}\,\mathcal D(\bm y)\,\rho_Z\,
    \mathcal D(\bm y)^\dagger,
    \qquad
    \int_{\mathbb R^{2n}}\mathrm d^{2n}\bm y\;E_{\bm y}=\one .
    \label{eq:mm-generaldyne}
\end{equation}
For a Gaussian state $\rho\simeq(\bm\mu,\Sigma)$, using
\cref{eq:mm-overlap} and the convolution of two normal densities, the
outcome density is
\begin{equation}
    f_Z(\bm y\mid\bm\mu,\Sigma)=\tr(\rho\,E_{\bm y})
    =\mathrm N(\bm y;\bm\mu,\Sigma+Z),
    \qquad \bm y\in\mathbb R^{2n}.
    \label{eq:mm-generaldyne-density}
\end{equation}
For $n=1$ and the pure seed $\rho_Z=\ketbra{0,\zeta}$, \cref{eq:mm-generaldyne} is the POVM \cref{eq:gaussian-generaldyne} of Sec.~\ref{sec:gaussian}, and \cref{eq:mm-generaldyne-density} is \cref{eq: f dens}.
The measurement thus adds the noise $Z$ to the covariance matrix of
the state. We place no restriction on the seed, and will only use $Z\geq0$.

\subsection{A lower bound on multivariate Gaussian densities}
\label{app:mm-lower-bound}

To bound the mass deficit of a cell of states, we show that the output densities $f_Z(\bm y\mid\bm\mu,\Sigma)$ of two Gaussian states are close when their parameters $(\bm \mu, \Sigma)$ are close. More precisely, the output density of a state $(\bm\mu,\Sigma)$ is bounded from below by the density of a reference state $(\bm\mu_0,\Sigma_0)$ times a multiplicative correction that depends simply on the parameter difference. This bound is a property of the multivariate Gaussian distributions in \cref{eq:mm-generaldyne-density}, independent of the rest of the classical simulation model.

Fix the measurement seed $Z$ and write
$\Sigma_Z:=\Sigma+Z$ for the covariance matrix of the output density for a state $(\bm \mu,\Sigma)$, $\Sigma_{0,Z}:=\Sigma_0+Z$ for a reference state $(\bm\mu_0,\Sigma_0)$.  Below, $\|\cdot\|$ denotes the operator norm, $\|\cdot\|_1$ the trace norm and
$|\cdot|$ the Euclidean norm.

\begin{lemma}
\label{lem:mm-density-lower-bound}
Let $Z\geq 0$, let $(\bm \mu, \Sigma)$ and $(\bm \mu_0, \Sigma_0)$ be two displacement vectors and covariance matrices, and let 
\begin{equation}
    \bm\delta:=\bm\mu-\bm\mu_0,\qquad
    \Delta:=\Sigma-\Sigma_0=\Sigma_Z-\Sigma_{0,Z},\qquad
    \bm w:=\Sigma_{0,Z}^{-1/2}(\bm y-\bm\mu_0).
\end{equation}
If $\Sigma_0,\Sigma\geq \gamma \one_{2n}$ and $\epsilon:=\frac{\|\Delta\|}{\gamma}<1$ for some $\gamma>0$, then for all $\bm y\in\mathbb R^{2n}$
\begin{align}
    \mathrm N(\bm y;\bm\mu,\Sigma_Z)\;&\geq\;
    \mathrm N(\bm y;\bm\mu_0,\Sigma_{0,Z})\,
    \exp\!\left[
    -\frac{\|\Delta\|_1}{2\gamma}
    -\frac{\epsilon}{2(1-\epsilon)}\,|\bm w|^2
    -\frac{|\bm\delta|}{\sqrt\gamma\,(1-\epsilon)}\,|\bm w|
    -\frac{|\bm\delta|^2}{2\gamma}
    \right]
    \label{eq:mm-density-lower-bound}
    \\
    &\geq\;
    \mathrm N(\bm y;\bm\mu_0,\Sigma_{0,Z})\,
    \left[1
    -\frac{\|\Delta\|_1}{2\gamma}
    -\frac{\|\Delta\|}{2\gamma (1-\epsilon)}\,|\bm w|^2
    -\frac{|\bm\delta|}{\sqrt\gamma\,(1-\epsilon)}\,|\bm w|
    -\frac{|\bm\delta|^2}{2\gamma}
    \right].
    \label{eq:mm-density-lower-bound-lin}
\end{align}
\end{lemma}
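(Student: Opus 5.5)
We write the ratio of the two densities as an exponential and bound each piece of the exponent separately. Explicitly,
\[
\frac{\mathrm N(\bm y;\bm\mu,\Sigma_Z)}{\mathrm N(\bm y;\bm\mu_0,\Sigma_{0,Z})}
=\sqrt{\frac{\det\Sigma_{0,Z}}{\det\Sigma_Z}}\;
\exp\!\Big[-\tfrac12(\bm y-\bm\mu)^T\Sigma_Z^{-1}(\bm y-\bm\mu)+\tfrac12|\bm w|^2\Big],
\]
and there are three contributions to control: the determinant prefactor, the change of quadratic form at fixed center, and the shift of center.

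Throughout we pass to normalized coordinates. Set $A:=\Sigma_{0,Z}^{-1/2}\Delta\,\Sigma_{0,Z}^{-1/2}$, so that $\Sigma_Z=\Sigma_{0,Z}^{1/2}(\one+A)\Sigma_{0,Z}^{1/2}$. Since $Z\geq0$ and $\Sigma_0\geq\gamma\one$, we have $\Sigma_{0,Z}\geq\gamma\one$, and the same holds for $\Sigma_Z$. Hence $\|A\|\leq\|\Delta\|/\gamma=\epsilon<1$ and $\|A\|_1\leq\|\Delta\|_1/\gamma$.

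\emph{Determinant.} We have $\det\Sigma_{0,Z}/\det\Sigma_Z=\det(\one+A)^{-1}=\exp[-\tr\log(\one+A)]$. Using $\log(1+a)\leq a\leq|a|$ on the eigenvalues gives $\tr\log(\one+A)\leq\|A\|_1$. The square root of the determinant ratio is therefore at least $\exp[-\|\Delta\|_1/(2\gamma)]$.

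\emph{Quadratic form.} Write $\bm y-\bm\mu=\Sigma_{0,Z}^{1/2}\bm w-\bm\delta$ and set $\bm u:=\Sigma_{0,Z}^{-1/2}\bm\delta$, so that $|\bm u|\leq|\bm\delta|/\sqrt\gamma$. Then
\[
(\bm y-\bm\mu)^T\Sigma_Z^{-1}(\bm y-\bm\mu)=(\bm w-\bm u)^T(\one+A)^{-1}(\bm w-\bm u).
\]
Expanding, we bound three terms.
\begin{itemize}
\item The $\bm w$-quadratic term satisfies $\bm w^T(\one+A)^{-1}\bm w-|\bm w|^2\leq\frac{\epsilon}{1-\epsilon}|\bm w|^2$, because the eigenvalues of $(\one+A)^{-1}-\one$ equal $-a/(1+a)\leq\epsilon/(1-\epsilon)$.
\item The cross term is at most $2|\bm u|\,|\bm w|/(1-\epsilon)$, since $\|(\one+A)^{-1}\|\leq1/(1-\epsilon)$.
\item The $\bm u$-quadratic term is at most $|\bm u|^2/(1-\epsilon)$.
\end{itemize}
The last bound gives $|\bm\delta|^2/(2\gamma(1-\epsilon))$, whereas the statement has $|\bm\delta|^2/(2\gamma)$ with no $1-\epsilon$. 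This is the main obstacle, and we expect it to need a sharper estimate. Our first attempt is to normalize by $\Sigma_Z$ rather than $\Sigma_{0,Z}$ for the shift term. Splitting the whole form as $(\bm w-\bm u)^T(\one+A)^{-1}(\bm w-\bm u)$ and bounding $\bm u^T(\one+A)^{-1}\bm u=\bm\delta^T\Sigma_Z^{-1}\bm\delta\leq|\bm\delta|^2/\gamma$ directly uses $\Sigma_Z\geq\gamma\one$. This removes the $1/(1-\epsilon)$ there, while the cross term keeps its $1/(1-\epsilon)$ factor, exactly as stated. Multiplying by $-\tfrac12$ and combining with the determinant bound yields \eqref{eq:mm-density-lower-bound}.

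Finally, \eqref{eq:mm-density-lower-bound-lin} follows from $e^{-x}\geq1-x$ applied to the total exponent. The quadratic coefficient $\frac{\epsilon}{2(1-\epsilon)}$ equals $\frac{\|\Delta\|}{2\gamma(1-\epsilon)}$, and the other coefficients carry over unchanged.
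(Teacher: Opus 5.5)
Your proof is correct and is essentially the paper's argument. It uses the same whitening $\Sigma_Z=\Sigma_{0,Z}^{1/2}(\one+A)\Sigma_{0,Z}^{1/2}$, the bound $\ln\det(\one+A)\leq\tr A\leq\|A\|_1$ for the normalization, the same three-term expansion of the exponent, and $e^{-x}\geq1-x$ for the linearized version; the ``obstacle'' you flagged is resolved exactly as in the paper, by bounding the shift term directly as $\bm\delta^T\Sigma_Z^{-1}\bm\delta\leq|\bm\delta|^2/\gamma$ using $\Sigma_Z\geq\gamma\one$.
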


\noindent Note that $\Delta$ and $\bm \delta$ do  not depend on the measurement, while $\bm w$ does not depend on $(\bm \mu, \Sigma)$. In our case $\Sigma,\Sigma_0\geq\gamma_E\one_{2n}$ is guaranteed for all states in $X_E$ by Proposition~\ref{prop:mm-SE-bounds}, while $\epsilon=\frac{\|\Delta\|}{\gamma}<1$ can be guaranteed by taking two states whose covariance matrices are close enough (small enough cell).

\begin{proof}
The second inequality follows from the first by using the definition of $\epsilon$ and the inequality $e^{-x}\geq1-x$, so it suffices to prove the first one. Since $Z\geq0$, the hypothesis $\Sigma,\Sigma_0\geq\gamma\one_{2n}$ implies $\Sigma_Z,\Sigma_{0,Z}\geq\gamma\one_{2n}$, which is what we use below.

Write $\Sigma_Z=\Sigma_{0,Z}^{1/2}(\one+X)\Sigma_{0,Z}^{1/2}$ with
$X:=\Sigma_{0,Z}^{-1/2}\Delta\,\Sigma_{0,Z}^{-1/2}$. Then
$\|X\|\leq\|\Sigma_{0,Z}^{-1}\|\,\|\Delta\|\leq\epsilon<1$ and
$\|X\|_1\leq\|\Delta\|_1/\gamma$.

\emph{Normalization.} Since $\ln(1+x)\leq x$ for all eigenvalues $x>-1$ of
$X$,
\begin{equation}
    \ln\det\Sigma_Z-\ln\det\Sigma_{0,Z}=\ln\det(\one+X)\leq\tr X
    \leq\|X\|_1\leq\frac{\|\Delta\|_1}{\gamma}.
\end{equation}

\emph{Exponent.} Since $\bm y-\bm\mu_0=\Sigma_{0,Z}^{1/2}\bm w$,
\begin{equation}
    (\bm y-\bm\mu)^T\Sigma_Z^{-1}(\bm y-\bm\mu)
    =\bm w^T(\one+X)^{-1}\bm w
    -2\,\bm w^T(\one+X)^{-1}\Sigma_{0,Z}^{-1/2}\bm\delta
    +\bm\delta^T\Sigma_Z^{-1}\bm\delta .
\end{equation}
Since $(\one+X)^{-1}\leq(1-\epsilon)^{-1}\one$, the first term is at most
$|\bm w|^2+\frac{\epsilon}{1-\epsilon}|\bm w|^2$. The second is at most
$\frac{2}{1-\epsilon}|\bm w|\,|\bm\delta|/\sqrt\gamma$, using
$\|\Sigma_{0,Z}^{-1/2}\|\leq\gamma^{-1/2}$. The third is at most
$|\bm\delta|^2/\gamma$. We insert these estimates in
$\ln\mathrm N(\bm y;\bm\mu,\Sigma_Z)
=-\tfrac12(\bm y-\bm\mu)^T\Sigma_Z^{-1}(\bm y-\bm\mu)
-\tfrac12\ln\det(2\pi\Sigma_Z)$ and recognize
$-\tfrac12|\bm w|^2-\tfrac12\ln\det(2\pi\Sigma_{0,Z})
=\ln\mathrm N(\bm y;\bm\mu_0,\Sigma_{0,Z})$. This gives
\cref{eq:mm-density-lower-bound}.
\end{proof}

% In terms of the Euclidean distance, $|\bm w|\leq|\bm y-\bm\mu_0|/\sqrt\gamma$.
% The exponent in \cref{eq:mm-density-lower-bound} is therefore bounded by a
% quadratic polynomial in $|\bm y-\bm\mu_0|$, with coefficients linear in
% $\|\Delta\|$ and $|\bm\delta|$, up to the factor $(1-\epsilon)^{-1}$. We
% keep the whitened variable $\bm w$ because it makes the bound uniform in
% the measurement. For squeezed seeds the variance of $\Sigma_{0,Z}$ diverges in
% some directions, but $\bm w$ remains standard normal.

\subsection{Mass deficit of a cell}
Let $R\subseteq X_E$ be a cell of states and fix a reference
$(\bm\mu_0,\Sigma_0)\in R$ such that $\epsilon = \frac{\|\Sigma-\Sigma_0\|}{\gamma_E}<1$ uniformly over the cell, i.e. $\sup_{(\bm\mu,\Sigma)\in R}\epsilon<1$. By lemma~\ref{lem:mm-density-lower-bound},
the mass deficit of the cell satisfies
 \begin{align}
 \kappa_{\cF}(R) &=1-\inf_{Z} \int \dd^{2n} \bm  y \inf_{(\bm \mu,\Sigma) \in R}  \mathrm N(\bm y;\bm\mu,\Sigma_Z) \\
 &\leq\; \sup_Z \int \dd^{2n} \bm  y \, 
    \mathrm N(\bm y;\bm\mu_0,\Sigma_{0,Z})\,
   \sup_{(\bm \mu,\Sigma) \in R} \left[
    \frac{\|\Delta\|_1}{2\gamma_E}+
    \frac{\|\Delta\|}{2\gamma_E(1-\epsilon)}\,|\bm w|^2+
    \frac{|\bm\delta|}{\sqrt\gamma_E\,(1-\epsilon)}\,|\bm w|+
    \frac{|\bm\delta|^2}{2\gamma_E}
    \right]
 \end{align}
Defining
\begin{equation}
    \delta_R:=\sup_{R}|\bm\mu-\bm\mu_0|,\qquad
    \Delta_R:=\sup_{R}\|\Sigma-\Sigma_0\|,\qquad
    \Delta^{(1)}_R:=\sup_{R}\|\Sigma-\Sigma_0\|_1,\qquad
    \epsilon_R:=\Delta_R/\gamma_E,
\end{equation}
we relax the supremum in the integrand to 
 \begin{align}
 \kappa_{\cF}(R) 
 &\leq\; \sup_Z \int \dd^{2n} \bm  y \, 
    \mathrm N(\bm y;\bm\mu_0,\Sigma_{0,Z})\,
   \left[
    \frac{\Delta_R^{(1)}}{2\gamma_E}+
    \frac{\Delta_R}{2\gamma_E(1-\epsilon_R)}\,|\bm w|^2+
    \frac{\delta_R}{\sqrt\gamma_E\,(1-{\epsilon_R})}\,|\bm w|+
    \frac{\delta_R^2}{2\gamma_E}
    \right] \\
    & =  
    \frac{\Delta_R^{(1)}}{2\gamma_E}+
    \frac{\Delta_R}{2\gamma_E(1-\epsilon_R)}\,{\sup_Z}\, \Exp |\bm w|^2+
    \frac{\delta_R}{\sqrt\gamma_E\,(1-{\epsilon_R})}\,{\sup_Z}\, \Exp|\bm w|+
    \frac{\delta_R^2}{2\gamma_E}
 \end{align}
where in the first line  we used the fact that the vector $ \bm w=\Sigma_{0,Z}^{-1/2}(\bm y-\bm\mu_0)$ does not depend on $(\bm \mu, \Sigma)$ and that the bracket is increasing in $\|\Delta\|_1$, $\|\Delta\|$ and $|\bm\delta|$ (as $\frac{1}{1-\epsilon}$ is increasing for $\epsilon<1$), and replaced the normal integrals with the expected values in the second line.  Under
$\mathrm N(\bm y;\bm\mu_0,\Sigma_{0,Z})$, the vector $\bm w$ is standard normal in
$\mathbb R^{2n}$, so 
\begin{align}
\Exp|\bm w|^2=2n \qquad \text{and}\qquad \Exp|\bm w|\leq\sqrt{2n}
\end{align}
are also independent of $Z$. Hence, we obtain 
\begin{align}
   \kappa_{\cF}(R) 
    &  \;\leq\; {\frac{\Delta^{(1)}_R}{2\gamma_E}}
    +\frac{n\,\Delta_R}{\gamma_E (1-\epsilon_R)}
    +\frac{\sqrt{2n}\,\delta_R}{\sqrt{\gamma_E}\,(1-\epsilon_R)}
    +\frac{\delta_R^2}{2\gamma_E}.
    \label{eq:mm-mass-deficit}
\end{align}
For a small cell $\epsilon_R\ll 1$ and, since $\Delta^{(1)}_R\leq2n\,\Delta_R$, the mass deficit is linear in the widths $\delta_R$ and $\Delta_R$ of the cell, plus a quadratic correction in $\delta_R$. 

Note that homodyne measurements, obtained as limits of infinitely squeezed seeds, are also covered by that bound. Their outcome densities are Gaussian, with covariance given by the sub-block of $\Sigma$ corresponding to the measured quadratures, in a quadrature basis adapted to the measurement. 
Since $\Sigma \geq\gamma_E\one_{2n}$, this sub-block is also bounded from below by $\gamma_E\one_k$, where $k$ is the number of measured quadratures. Hence lemma~\ref{lem:mm-density-lower-bound} and \cref{eq:mm-mass-deficit} give the same upper bound: after marginalizing over the conjugate outcomes, the whitened variable is standard normal in $\mathbb R^k$, so $\Exp|\bm w|^2=k\leq2n$ and $\Exp|\bm w|\leq\sqrt{2n}$.

\subsection{Nested partitions and exact simulation}
\label{app:mm-partitions}

The box \cref{eq:mm-box} provides the nested partitions. We bisect each of
its $2n^2+3n$ sides $r$ times, intersect the resulting cubes with $X_E$, and
discard the empty cells. The cells at level $r$ have side length
\begin{equation}
    h_r\leq2^{-r}h_0,\qquad h_0:=\max\big\{2\sqrt E,\sqrt{E(1+E)}\big\}.
\end{equation}
Within a cell $R$, take any reference point $(\bm\mu_0,\Sigma_0)\in R$.
For every other point $(\bm\mu,\Sigma)\in R$, each of the $2n$ components of $\bm\delta=\bm\mu-\bm\mu_0$ and each of the $(2n)^2$ entries of the symmetric matrix $\Delta=\Sigma-\Sigma_0$ is at most $h_r$ in absolute value. Hence $|\bm\delta|\leq\sqrt{2n}\,h_r$, and the Frobenius norm $\|\Delta\|_F:=\big(\sum_{k,l}\Delta_{kl}^2\big)^{1/2}$ satisfies $\|\Delta\|_F\leq2n\,h_r$. Using the standard inequalities $\|\Delta\|\leq\|\Delta\|_F$ and $\|\Delta\|_1\leq\sqrt{2n}\,\|\Delta\|_F$ for $2n\times2n$ matrices~\cite{HornJohnson2012} and taking the supremum over the cell gives
\begin{equation}
    \delta_R\leq\sqrt{2n}\,h_r,\qquad
    \Delta_R\leq2n\,h_r,\qquad
    \Delta^{(1)}_R\leq(2n)^{3/2}h_r .
\end{equation}
For $h_r\leq\gamma_E/(4n)$ we have $\epsilon_R\leq\tfrac12$, and
\cref{eq:mm-mass-deficit} gives
\begin{equation}
    \kappa_{\cF}(\cC_r)\leq C_{n,E}\,h_r,
    \qquad
    C_{n,E}:=\frac{\sqrt2\,n^{3/2}+4n^2}{\gamma_E}+\frac{4n}{\sqrt{\gamma_E}}+\frac14 ,
    \label{eq:mm-kappa-bound}
\end{equation}
uniformly over all general-dyne and homodyne measurements. For $h_r>\gamma_E/(4n)$, which includes the first levels, the condition $\epsilon_R<1$ may fail and we do not use \cref{lem:mm-density-lower-bound}. The bound \cref{eq:mm-kappa-bound} nevertheless holds trivially: the mass deficit never exceeds $1$, while the right-hand side satisfies $C_{n,E}h_r\geq\frac{4n^2}{\gamma_E}\cdot\frac{\gamma_E}{4n}=n\geq1$. Hence for all $r\geq0$
\begin{equation}
    |\cC_r|\leq2^{(2n^2+3n)r},\qquad
    \kappa_{\cF}(\cC_r)\leq K_{n,E}\,2^{-r},\qquad K_{n,E}=C_{n,E}\,h_0,
\end{equation}
in agreement with \cref{eq:gaussian-mass-deficit-constant}. Let
$N_{n,E}:=\lceil\log\max\{1,K_{n,E}\}\rceil$. At shared level $\lambda=j$, Alice
uses the partition $\cC_{N_{n,E}+j}$, whose mass deficit is at most
$2^{-j}$, and her label requires at most $(2n^2+3n)(N_{n,E}+j)$ bits.
Since $\Exp[\lambda]=2$, this proves \cref{eq:gaussian-average-bits}, and for $n=1$ \cref{eq:gaussian-average-bits-1mode}. In particular, the partitions $\cC_r$ show that $d_{\mathrm{mdM}}(X_E,\cF)\leq2n^2+3n$, as claimed in Sec.~\ref{sec:gaussian}.

\end{widetext}

\end{document}